\newcommand\cL{\mathcal{L}}
\newcommand\cS{\mathcal{S}}
\newcommand\cT{\mathcal{T}}
\newcommand\cA{\mathcal{A}}
\newcommand\cB{\mathcal{B}}
\newcommand\cH{\mathcal{H}}
\newcommand\cD{\mathcal{D}}
\newcommand\cF{\mathcal{F}}
\newcommand\cX{\mathcal{X}}
\newcommand\cJ{\mathcal{J}}
\newcommand\cU{\mathcal{U}}
\newcommand\J{\mathbb{J}}
\newcommand\C{\mathbb{C}}
\newcommand\R{\mathbb{R}}
\newcommand\N{\mathbb{N}}
\newcommand\Z{\mathbb{Z}}
\newcommand\X{\mathbb{X}}
\newcommand\U{\mathbb{U}}
\newcommand\V{\mathbb{V}}
\newtheorem{thm}{Theorem}[section]
\newtheorem{prop}[thm]{Proposition} 
\newtheorem{cor}[thm]{Corollary}
\theoremstyle{remark}
\newtheorem{defn}[thm]{Definition}
\newtheorem{rem}{Remark}[section]
\newtheorem*{remn}{Remark}
\newtheorem*{proofn}{Proof}
\newcommand\dy{\mathop{dy}\nolimits}
\newcommand\dz{\mathop{dz}\nolimits}
\newcommand\cprime{$'$}
\newcommand\KP{\mathrm{KP}}
\newcommand\bj{{}^\beta\mathbb{J}}
\newcommand\bJ{\;^\beta\!{J}}
\newcommand\gd{Gel{\cprime}fand-Dickey}
\newcommand\vp{\varphi}
\newcommand\la{\lambda}
\newcommand\om{\omega}
\newcommand\olgc{\overline{\Gamma_c}}
\newcommand\Div{\mathop{\rm Div}\nolimits}
\newcommand\Tr{\mathop{\rm Tr}\nolimits}
\newcommand\vol{\mathop{\rm vol}\nolimits}
\newcommand\Res{\mathop{\rm Res}\nolimits}
\newcommand\Jac{\mathop{\rm Jac}\nolimits}
\newcommand\bsl{\backslash}
\newcommand\ol[1]{\overline{#1}}
\newcommand\bcs{\begin{cases}}
\newcommand\ecs{\end{cases}}
\newcommand\ga{\gamma}
\newcommand\Ga{\Gamma}
\newcommand\diag{\mathop{\rm diag}\nolimits}
 \newcommand\Id{\mathop{\rm Id}\nolimits}
\newcommand\ad{\mathop{\rm ad}\nolimits}
\newcommand\gl{\mathrm{gl}}
\newcommand\La{\Lambda}
\newcommand\lrec{\langle \enspace,\enspace\rangle}
\renewcommand\part{\partial}
\newcommand\hensp[1]{\hbox{ #1 }}
\newcommand\dx{\mathop{dx}\nolimits}
\newcommand\tr{\mathop{\rm tr}\nolimits}
\newcommand\Ad{\mathrm{Ad}}
\newcommand\eps{\epsilon}
\newcommand\vareps{\varepsilon}
\newcommand\nn{\nonumber}
\newcommand\eqx{\stackrel{(\ast)}{=}}
\newcommand\gbmp{g^\bot_{\mp}}
\newcommand\gbpm{g^\bot_\pm}
\newcommand\gbp{{g}^\bot_+}
\newcommand\gbm{{g}^\bot_-}
\newcommand\aks{AKS Theorem}
\let\lab\label
\let\pref\eqref
\newcommand\lra[1]{\left\langle#1\right\rangle}
\newcommand\lrp[1]{\left(#1\right)}
\newcommand\lrb[1]{\left[#1\right]}
\newcommand\ensp{\enspace}
\newcommand\lrc[1]{\left\{#1\right\}}
 \numberwithin{equation}{section}
\begin{document}
\title[The AKS Theorem, A.C.I. Systems and Random Matrix Theory]{The AKS Theorem, A.C.I. Systems\\ and Random Matrix Theory}
\author{Mark Adler}
\author{Pierre van Moerbeke}
\address{Department of Mathematics, Brandeis University, Waltham, MA 02454, USA}
\email{adler@brandeis.edu}
\address{Department of Mathematics, Universit\'e Catholique de Louvain, 1348 Louvain-la-Neuve, Belgium, and Department of Mathematics, Brandeis University, Waltham, MA 02454, USA}
\email{pierre.vanmoerbeke@uclouvain.be}
\thanks{The first named author  gratefully acknowledges the support of a Simons Foundation Grant \#27891.  The second named author gratefully acknowledges the support of a Simons Foundation Grant \#280945.}
\keywords{algebraic integrability, random matrix theory, Lax pairs, Virasoro constraints}
\subjclass[2010]{Primary: 35Q, 37J, 37K, 70H; Secondary: 10K, 35S}

\begin{abstract}
This paper gives the most general form of the Adler-Kostant-Symes (AKS) Theorem, and many applications of it, both finite and infinite dimensional, the former yielding algebraic completely integrable (a.c.i.) systems, and the latter examples in random matrix theory (RMT).
\end{abstract}
\maketitle

  \tableofcontents

\setcounter{section}{-1}

\section{Introduction}
The study of integrable systems begins with Euler's  top \cite{1} linearizing on an elliptic curve (thought of as an algebraic group), leading eventually to Liouville's Theorem \cite{2} concerning the connection between the existence of enough constants of the motion (integrals) in involution leading to a solution of the system  by quadrature, and the more refined Arnold-Liouville Theorem  leading to linearizing such systems on tori and the use of action-angle coordinates.  In practice, the tori were, at least in the canonical classical examples, algebraic tori and the motion was  actually linear motion on the (universal covering space of the) algebraic tori.  The existence of many integrals in involution was often seen to be a consequence of physical configurational symmetries, but certainly not always, the Kovalevsky top \cite{3} the best example where this was certainly not the case, so in effect, it was often a mystery.  Likewise the fact these systems linearized on algebraic tori was another mystery, starting from the case of the Euler top!

In the latter half of the twentieth century, much progress has been made, starting with the famous Fermi-Pasta-Ulam experiments and punctuated by the Lax formulation of the KdV equation \cite{4}.  Indeed the theory was broadened to new systems like the Toda lattice \cite{5}, which also was shown to have a Lax pair representation \cite{Flaschka}. Many other partial differential equations were introduced, like the the  KP  equation \cite{27}, as well as new methods of solution involving scattering theory, the Baker Akhiezer function \cite{6} and the $\tau$-function of Sato \cite{7} and his school.  As the subject matured it reached into new areas such as string theory, matrix models and random matrix theory (RMT) \cite{8}.

This paper will touch on a few of the above topics.  First the mystery of why systems have many constants of the motion in involution is well explained by the AKS Theorem \cite{1}, which says a Lie algebra  splitting as a vector space   into two  Lie algebras gives rise to integrals in involution and commuting vector fields whose solutions  linearize through group factorization (or splitting);  although the AKS Theorem does not say when systems are \emph{completely} integrable.  Of course Noether's Theorem \cite{9} is the first and more basic theorem, showing configurational group symmetries yield integrals.  In Section 1 we give the most general form of the \aks\ and in Sections 2 and 3 we give various examples.  In Section 2 almost all the examples are infinite dimensional examples  (like the KP equation), being  either lattices or PDEs, showing the robustness of the theorem; also these examples were picked as they all have application to RMT.

Then came the work on the periodic KdV equation \cite{McKPvM} and the periodic Toda lattice \cite{PvM}. Both systems were linearizable on Jacobians of hyperelliptic curves; see also \cite{KacvM0,KacvM}. These systems were extended to isospectral flows on difference operators \cite{PvMM}, also Lax pairs and Hamiltonian flows possessing many constants of the motion in involution; these flows linearize on Jacobians of more general set of curves. The Lax pair and the symplectic structure, depending on a free parameter $h$, gave rise to the spectral curve, together with the whole family of invariants in involution. These and other examples pointed in the direction of Kac-Moody Lie algebras.

Therefore Section 3 discusses examples coming from Kac-Moody Lie algebras, which moreover linearize on algebraic tori that are Jacobian of curves; they are examples of algebraic integrable systems (a.c.i.) \cite{10,1}.  General tools are discussed that are used to show when Lax-equations over Kac-Moody Lie algebras lead to systems linearizing on Jacobians of curves and hence a.c.i. systems \cite{11,12, 1}.  Sections 4, 5, and 6 deal with three examples from RMT \cite{13,15,14} having to do respectively with the KdV, Toda and Pfaff Lattice, and the two-Toda system.  The specialness of having coming from RMT also leads to the systems satisfying so-called Virasoro constraints, and the combination of the fact that the RMT examples can both be deformed to be integrable systems, solvable with $\tau$-functions, which moreover satisfy Virasoro constraints, leads to PDEs for the so-called gap probabilities associated with the random matrix systems.

The list of references is by no means exhaustive; apologies for this! References to 50 years of Toda Lattice-activity would require many more pages. More references can be found in the various papers mentioned.

\section{The AKS Theorem}

The \aks\ first  appeared in \cite{16}, then a more advanced version appeared in \cite{17} and later an $r$-matrix version appeared in \cite{18}.
Some basic elements of it appeared in \cite{19}.  Interestingly enough, the first version was just an attempt to get a Lie algebra theorem that simultaneously yielded both the Toda and KdV cases, which at the time people realized were deeply related.  The following version of the theorem is a synthesis of all the above, although not the most abstract version (but most useful), since we identify $g\cong g^\ast$, the Lie algebra with its dual, via an Ad-invariant non-degenerate bilinear form; it appears in \cite{1}.  The point of the theorem is that vector space decompositions of the Lie aglebras, where the components are subalgebras, lead to integrable systems.  

First some preliminaries are in order.  Assume we are given an Ad-invariant\footnote{$\lra{[X,Y], Z}= \lra{X, [Y,Z]}$.} non-degenerate bilinear form on a Lie algebra $g$,
\begin{equation}\lab{1}
\lra{\enspace,\enspace} g\times g\to \C,\end{equation}
which induces $\nabla F(X)\in g$ the gradient of $F$ at $X$ for functions $F$ on $g^\ast\cong g$ by
\begin{equation}\lab{2}
dF(X)=\lra{\nabla F(X),dX}\end{equation}
and the Kostant-Kirillov Poisson structure on $g^\ast\cong g$ with respect to $\lra{\cdot,\cdot}$ by
\begin{equation}
\lab{3} \{F,H\}(X)=\lra{X,[\nabla F(X),\nabla H(X)]}.\end{equation}
The isomorphism $g\to g^\ast$ induced by $\lra{\cdot,\cdot}$ established a one-to-one correspondence between Ad$^\ast$-invariant functions on $g^\ast$ and Ad-invariant functions on $g$, and so the Hamiltonian vector fields $\cX_H$ on $(g^\ast\cong g$, $\lrc{\enspace,\enspace})$ take the simple Lax form (which is why we identify $g^\ast\cong g$)
\begin{equation}\lab{4}
\cX_H(X)\equiv \dot X=[\nabla H(X),X].\end{equation}
Note that in general
\begin{equation}\lab{5}
\cX_H(F)= \lra{\nabla F,\cX_H}:=\{F,H\}.\end{equation}
Also note that the Casimirs  of $\cX_H$, i.e., the Hamiltonians that produce null vector-fields, and hence are constant along symplectic leaves, are precisely  the $\Ad^\ast$ invariant functions: $\cF(g^\ast)^G\cong \cF(g)$, characterized by $0=\cX_H(X)=[\nabla H(X),X],\; \forall X\in g$.
Given a \emph{vector space} splitting of the Lie algebra $g$ as follows,
\[
g=g_+ \oplus g_-\]
with $g_\pm$ Lie algebras, with the above a vector space direct sum (as opposed to a Lie algebra direct sum).  In view of the non-degeneracy of $\lra{\cdot,\cdot}$, we furthermore have the vector space direct sum and isomorphisms:
\begin{equation}\lab{6}
g^\ast = g\cong g^\bot_+ \oplus g^\bot_-, \hbox{ with }g^\bot_\pm \cong g^\ast_\mp,\end{equation}
where $g^\bot_\pm$ is the orthogonal complement (with respect to $\lra{\cdot,\cdot}$) of $g_\pm$, and therefore they carry the Kostant-Kirillov Poisson structure $\{\enspace,\enspace\}_{g^\bot_\pm \cong g^\ast_\mp}$, namely,
\begin{equation}\lab{7}
\cX_H(X)\big|_{g^\bot_\mp}=\hat P_\mp [\nabla_\pm H,X],\qquad X\in g^\bot_\mp \cong g^\ast_\pm\end{equation}
with
\[
dH(X)=\lra{dX,\nabla_\pm H},\qquad \nabla_\pm H\in g_\pm \quad \hbox{(since $dX\in g^\bot_\mp$}),\]
with $\hat P_\mp$ being projections onto $g^\bot_\mp$ along $g^\bot_\pm$.  Indeed, since by \pref{5} and \pref{3} and  the  Ad-invariance of $\lra{\ensp,\ensp}$, 
\begin{align*}
\lra{\cX_H(X)\big|_{g^\bot_\mp},\nabla_\pm F(X)} &=
\cX_{H}\big|_{\gbmp} (F(X)) = \{F,H\}\big|_{\gbmp} (X)\\
&=\lra{X,[\nabla_\pm F,\nabla_\pm H]} = \lra{[\nabla H_\pm,X],\nabla_\pm F}\\
&= \lra{\hat P_\mp [\nabla H_\pm,X],\nabla F_\pm},\end{align*}
we find \pref{7}.

We shall need the group analog of the splitting of $g$.  So denote by $G$ the (connected, simply connected) Lie group whose Lie algebra is $g$ and correspondingly by $G_\pm$ going with~$g_\pm$.  

Going with the decomposition $g=g_+\oplus g_-$ we have the projections $P_\pm g\to g_\pm$, and now define $R=P_+ - P_-$ and the new Lie algebra on $g$
\begin{equation}\begin{aligned}\lab{8}
[X,Y]_R&\eqx \frac12([RX,Y]+[X,RY])\\
&= [X_+,Y_+]-[X_-,Y_-], \end{aligned}\end{equation}
with
\begin{equation}\lab{9}
X_\pm = P_\pm X.\end{equation}
Indeed, given the first definition of $[\ensp,\ensp]_R$, namely $\eqx$, and a general linear map $R:g\to g$, then it yields a Lie algebra (satisfies the Jacobi-identity) provided
\[
[RX,RY]-R([RX,Y]+[X,RY])= -c[X,Y],\]
the modified Yang-Baxter equation ($c=0$ in the Yang-Baxter equation).  The case $R=P_+-P_-$ corresponds to $c=1$ (which by rescaling can always be assumed if $c\ne 0$), but in general if we set $c=1$ and assume $R$ satisfies $\eqx$, then
\[
g_\pm:= \{X\pm RX\mid X\in g\} \]
are always Lie subalgebras of $g=g_+\oplus g_-$ and then $R=P_+ -P_-$, with $P_\pm$ the projections onto the $g_\pm$.  The version of the \aks\ given below shall only work for the case $R=P_+-P_-$.  Below the subscript $(\ensp)_R$ shall indicate the structure induced by $[\ensp,\ensp]_R$.

The Kostant-Kirillov structure induced by  $[\ensp,\ensp]_R$ and $\lra{\ensp,\ensp}$ on $g^\ast\cong g$ when $R=P_+-P_-$ is given by (remember $\hat P_\mp$ are projections onto $\gbmp$ along $\gbpm$)
\begin{equation}
\lab{10}
\cX_H(X)\big|_R = \hat P_- [(\nabla H)_+,X]-\hat P_+[(\nabla H)_-,X],\end{equation}
since by \pref{5} and \pref{3}
\begin{align*}
&\lra{\cX_H(X)\big|_R,\nabla F} =\cX_H(F)=\{F,H\}_R\\
 ={}& \lra{X,[\nabla F,\nabla H]_R}  =  \lra{X,[(\nabla F)_+,(\nabla H)_+]-[(\nabla F)_-,(\nabla H)_-]}\\
 ={}& \lra{[(\nabla H)_+,X], (\nabla F)_+} -\lra{[(\nabla H)_-,X],\nabla F_-}\\
 ={}& \lra{\hat P_-[(\nabla H)_+,X]-\hat P_+[(\nabla H)_-,X],\nabla F}.\end{align*}
 With all the preliminaries out of the way, we can now state the \aks.
 
\begin{thm}[\aks\ on $g$] \lab{thm1.1}
Suppose that $g=g_+\oplus g_-$ is a Lie algebra splitting and that $\lra{\cdot,\cdot}$ is an $\Ad$-invariant non-degenerate bilinear form on $g$, leading to a vector space splitting
\begin{equation}
\lab{11}
g= \gbp \oplus \gbm \cong g^\ast_-\oplus g^\ast_+.\end{equation}
Let $F,H\in\cF(g)^G$ and suppose that $\eps\in g$ satisfies
\begin{equation}\lab{12}
[\eps,g_+]\in \gbp,\qquad [\eps,g_-]\in \gbm.\end{equation}
Setting  $F_\eps(x):=F(\eps+X)$, then 
{\setbox0\hbox{(1)}\leftmargini=\wd0 \advance\leftmargini\labelsep
 \begin{enumerate}
\item $\{F,H\}_R\!=\!0$ and $\{F_\eps,H_\eps\}_{\gbm}\!=\!0$; hence $[\cX_F,\cX_H]$ and $[\cX_{F_\eps},\cX_{H_\eps}]\!=\!0$.
\item The Hamiltonian vector fields $\cX_H := \{\cdot,H\}_R$ and $\cX_{H_\eps}:= \{\cdot,H_\eps\}_{\gbm}$ are respectively given by
\begin{equation}
\lab{13}
\cX_H (X)=-\frac12 [X,R(\nabla H(X))]=\pm [X,(\nabla H(X))_\mp]\end{equation}
and
\begin{equation}\lab{14}
\cX_{H_\eps}(X)=-\frac12 [Y,R(\nabla H(Y))] = \pm [Y,(\nabla H(Y))_\mp],\end{equation}
where $Y\in \gbm + \eps$, yielding two families of commuting vector fields.
\item For $X_0\in g$ and for $|t|$ small, let $g_+(t)$ and $g_-(t)$ denote the smooth curves in $G_+$ resp.\ $G_-$ that solve the factorization problem
\begin{equation}\lab{15}
\exp(-t \nabla H(X_0))=g_+(t)^- g_-(t),\qquad g_\pm (0)=e.\end{equation}
Then the integral curve of $\cX_H$ that starts at $X_0$ is given for $|t|$ small by
\begin{equation}\lab{16}
X(t)=\Ad_{g_+(t)} X_0 = \Ad_{g_-(t)}X_0.\end{equation}\end{enumerate}}  \end{thm}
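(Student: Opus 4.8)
The plan is to rest everything on two elementary facts about $F\in\cF(g)^G$ recorded implicitly above. First, $\Ad$-invariance gives the Casimir identity $[\nabla F(X),X]=0$ (this is exactly the characterization of $\cF(g)^G$ noted just before \pref{6}). Second, differentiating $F(\Ad_gX)=F(X)$ and using the $\Ad$-invariance of $\lra{\enspace,\enspace}$ yields the equivariance $\nabla F(\Ad_gX)=\Ad_g\nabla F(X)$. I would also first recast the hypothesis \pref{12}: since $[\eps,g_\pm]\subset\gbpm$ means $\lra{[\eps,u],v}=0$ for $u,v\in g_\pm$, $\Ad$-invariance rewrites this as $\lra{\eps,[u,v]}=0$, i.e. $\lra{\eps,[g_+,g_+]}=0$ and $\lra{\eps,[g_-,g_-]}=0$.

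For part (1) I would compute $\{F,H\}_R(X)=\lra{X,[\nabla F,\nabla H]_R}$ from the symmetric form $[\nabla F,\nabla H]_R=\tfrac12([R\nabla F,\nabla H]+[\nabla F,R\nabla H])$ in \pref{8}; cyclicity of $\lra{\enspace,\enspace}$ turns the first term into $\tfrac12\lra{R\nabla F,[\nabla H,X]}$ and the second into $-\tfrac12\lra{R\nabla H,[\nabla F,X]}$, both zero by the Casimir identity. For the deformed bracket I would identify the leaf gradient $\nabla_+F_\eps(X)=(\nabla F(\eps+X))_+$, so with $Y=\eps+X$, $a=\nabla F(Y)$, $b=\nabla H(Y)$ one gets $\{F_\eps,H_\eps\}_{\gbm}(X)=\lra{X,[a_+,b_+]}$. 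Writing $X=Y-\eps$ and using $\lra{\eps,[g_+,g_+]}=0$ reduces this to $\lra{Y,[a_+,b_+]}$, which equals $\lra{Y,[a_-,b_-]}$ by the already–proved $\{F,H\}_R(Y)=0$; the latter vanishes because $[a_-,b_-]\in g_-$ is orthogonal to $X\in\gbm$ while $\lra{\eps,[g_-,g_-]}=0$. The two ``hence'' assertions follow from the standard fact that $H\mapsto\cX_H$ carries Poisson brackets to Lie brackets of vector fields (up to sign), so $\{F,H\}=0$ forces $[\cX_F,\cX_H]=0$. I expect this deformed involution, where \pref{12} must be made to do exactly the right cancellations, to be the main obstacle.

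For part (2) I would start from \pref{10}; since $[(\nabla H)_+,X]=-[(\nabla H)_-,X]$ by the Casimir identity and $\hat P_-+\hat P_+=\Id$, the two terms collapse to $\cX_H(X)=[(\nabla H)_+,X]$, which is precisely $-\tfrac12[X,R\nabla H]=\pm[X,(\nabla H)_\mp]$ as in \pref{13}. For $\cX_{H_\eps}$ I would use \pref{7} on the leaf $\gbm$, giving $\cX_{H_\eps}(X)=\hat P_-[(\nabla H(Y))_+,X]$ with $Y=\eps+X$; substituting $X=Y-\eps$, the term $\hat P_-[(\nabla H(Y))_+,\eps]$ drops since $[\eps,g_+]\subset\gbp$ is annihilated by $\hat P_-$, leaving $\hat P_-[(\nabla H(Y))_+,Y]$. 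Finally I would show $[(\nabla H(Y))_+,Y]\in\gbm$, so that $\hat P_-$ acts as the identity and \pref{14} appears: writing it as $-[(\nabla H(Y))_-,\eps]-[(\nabla H(Y))_-,X]$, the first lies in $\gbm$ by \pref{12} and the second by $[g_-,\gbm]\subset\gbm$, which in turn follows from $\Ad$-invariance and $g_-$ being a subalgebra.

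For part (3) I would verify directly that $X(t)=\Ad_{g_+(t)}X_0$ solves $\dot X=\cX_H(X)$. Equivariance gives $\nabla H(X(t))=\Ad_{g_+(t)}\nabla H(X_0)$, and since $[\nabla H(X_0),X_0]=0$ the element $h(t)=\exp(-t\nabla H(X_0))$ commutes with $X_0$, so $\Ad_hX_0=X_0$ and hence $\Ad_{g_-(t)}X_0=\Ad_{g_+(t)}X_0$, which is the second equality in \pref{16}. Differentiating $g_+^{-1}g_-=h$ and conjugating gives $\dot g_+g_+^{-1}-\dot g_-g_-^{-1}=\Ad_{g_+}\nabla H(X_0)=\nabla H(X(t))$; matching the $g_+$- and $g_-$-components of this $g=g_+\oplus g_-$ identity yields $\dot g_+g_+^{-1}=(\nabla H(X(t)))_+$, whence $\dot X=[\dot g_+g_+^{-1},X]=[(\nabla H(X))_+,X]=\cX_H(X)$ by part (2), with $X(0)=X_0$. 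Here the only genuine issue is not the computation but the existence and smoothness of the factorization \pref{15} for small $|t|$, i.e. the local decomposition $G=G_+G_-$ near $e$; this follows from the vector-space splitting $g=g_+\oplus g_-$ by the inverse function theorem and is exactly what the phrase ``for $|t|$ small'' grants, the remainder being careful bookkeeping of the $\Ad$-equivariance.
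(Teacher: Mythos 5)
Your proof is correct and takes essentially the route the paper sets up: the involution statements follow from the Ad-invariance identities $[\nabla F(X),X]=0$ together with the bracket formulas \pref{7} and \pref{10} derived in the preliminaries, the reformulation of \pref{12} as $\lra{\eps,[g_+,g_+]}=\lra{\eps,[g_-,g_-]}=0$ is exactly how the hypothesis is meant to be used, and part (3) is the standard factorization argument anticipated in Remark \ref{rem2} (the paper itself states the theorem without proof, citing \cite{1}, where the argument proceeds just as yours does). Your silent correction of the typo $g_+(t)^-$ to $g_+(t)^{-1}$ in \pref{15} is the intended reading, and the local solvability of the factorization near the identity via the inverse function theorem is precisely the justification the paper alludes to.
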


\begin{rem}\lab{rem1}
Maybe the most amazing thing about the \aks\ is the diversity of examples that it covers.  The theorem should be understood to include the case where $g$ is infinite dimensional, where only the first part of (1), and (3) needs to be interpreted carefully (or formally), and we shall not dwell on such issues here.

\end{rem}

\begin{rem}\lab{rem2} The \aks\ is thought of as a theorem about integrability, but classically integrable systems were solved by integration (quadrature) while the systems here are solved in (3) by the algebraic process of factoring an element in $G$, i.e., solving the splitting problem $G=G_+G_-$, which can be done near their identity, and moreover through algebraic operations for finite dimensional groups or analytic operations for say Kac-Moody groups.
\end{rem}

\begin{rem}
Classically complete integrability means, say on a Poisson manifold, that in addition to the Casimirs that define the symplectic leaves and lead to null Hamiltonian vector fields, there are $\frac12$ (dimensions of the symplectic leaves) number of additional commuting integrals (generically independent).  In many cases of the \aks\ this is indeed the case, either for the generic symplectic leaves and/or degenerate symplectic leaves (like on the classical Toda system).  It would be nice to have some general theorem that gives useful hypotheses to ensure complete integrability, especially as $\cF(g)^G$ often does not provide enough integrals.  An interesting and beautiful such example is \cite{20}, where the action of a parabolic subgroup supplies additional integrals.
\end{rem}
\begin{rem}
It would be wonderful to quantize AKS in a meaningful way to actually yield in a consistent and uniform way quantized ``integrable" systems.
\end{rem}
\begin{rem}
In may examples, as we shall see, you need $\eps\ne 0$, although in the original Toda lattice, you may take $\eps=0$, even though $\eps\ne 0$ first appears in that case; hence it is important to include the case $\eps\ne 0$ in the theorem.  Secondly, even though we need $R=P_+-P_-$, the original case of the theorem, the $R$ matrix version of the theorem allows for more initial conditions $X\in g$ than the original case where $Y\in \gbm+\eps$, and that flexibility  is useful in many examples, as we shall see.  Another source of flexibility in the theorem is that any choice of $\Ad$-invariant inner product (i.e., not just the Killing form) will do, and that effects the flow, as it effects the form of $\nabla H$.\end{rem}

\begin{rem}
It is not always to say what examples cannot be covered by a theorem or some natural generalization of it, as for example, the Kowalevsky top was eventually shown in \cite{BPS} to be an example of the AKS theorem, after some nontrivial effort by a number of authors.  However, clearly the theorem in its present form does not cover any quantum integrable systems or systems related to supergroups, etc.  There are, for example, many systems in random matrices that satisfy: differential equations through the use of the Riemann-Hilbert method, and it is hard to claim that these systems would not eventually be covered by some integrability theorem, whether it be AKS or some other theorem to be discovered.  We do not understand integrability well enough to really answer such questions or even offer idle speculation. 
\end{rem}

\section{Examples of the AKS Theorem}

\subsection*{Example 1: The KP and Gel$'$fand Dickey hierarchies}\

This subsection deals with the KP hierarchy and its invariant subsystems, the Gel$'$fand-Dickey hierarchies \cite{21}.  The KP equation describes water waves in two space dimensions where the water feels the bottom of its container, such as waves at the beach or tsunami waves generated by an undersea earthquake through displacement of the ocean bottom, traveling across the ocean at approximately the speed of sound while maintaining their profile to a very high degree.  The KdV equation, the most famous of the Gel$'$fand-Dickey equations, deals with waves in one space dimension, such as you would find in a canal.  In general, the Gel$'$fand-Dickey equations come up in various areas of mathematical physics, like string theory, random matrix theory, etc., no doubt due to their complete integrability.

Consider the Lie algebra 
\begin{equation}\lab{2.1}
g=\lrc{\sum_{-\infty<i\ll \infty} a_i(x)D^i \mid a_i(x)\in R},\footnotemark \qquad D=\frac{d}{dx},\end{equation}
\footnotetext{$i\ll \infty$ means all $i$ are less than some finite $N$ for any element of $g_+$, but $N$ varies with the element.}%
the ring of pseudo-differential operators over $\R$ or $S^1$, with $R$ the ring of differential functions over $\R$ or $S^1$ and
\begin{align} \label{2.2}
g=g_+ \oplus g_-,\qquad g_+& = \lrc{\sum_{0\leq i\ll \infty} a_i(x)D^i  \mid a_i(x)\in R} = g^\bot_+ \cong g^\ast_-\\
g_-&= \lrc{\sum_{-\infty<i\leq -1} a_i(x)D^i\mid a_i(x)\in R} = g^\bot_-\cong g^\ast_+.\nn\end{align}
The latter equalities are a consequence of (the Adler trace \cite{16})
\begin{equation}\lab{2.3}
\lra{a,b}=\tr(ab),\quad \tr(\Sigma a_i(x)D^i) = D^{-1} a_{-1}(x):=\int a_{-1}(x)\dx,\end{equation}
and note that  $\tr[a,b]=0$, so $\lra{\ensp,\ensp}$ is Ad-invariant.

Also set $\eps=D$, which satisfies \eqref{12}.  The functions 
\begin{equation}\lab{2.4}
H^{(\ell)}(a)=\frac{\tr a^{\ell+1}}{\ell+1}\in\cF(g)^G,\end{equation}
since
\[
d H^{(\ell)}(a)=\lra{a^\ell,da},\quad \nabla H^{(\ell)}(a)=a^\ell\implies [\nabla H^{(\ell)} (a),a]=0,\]
and so the Hamiltonian vector fields of Theorem \ref{thm1.1}
\[
\cX_{H^{(\ell)}_\eps} := \{ \cdot,H^{(\ell)}_\eps\}_{g^\bot_-}\hensp{on}
Y= D+\sum^\infty_{i=1} a_i D^{-i}\in g^\bot_- +\eps,\]
given by \pref{14}, take the form
\begin{equation}\lab{2.5}
\frac{\part Y}{\part t_\ell}:= \cX_{H^{(\ell)}_\eps} (Y)=\pm [(Y^\ell)_\pm,Y],\end{equation}
which is nothing but the $\ell$-th flow of the KP hierarchy \cite{7}.

Hence the KP hierarchy is an ``integrable" Hamiltonian system, with Hamiltonian structure given by the Kostant-Kirillov symplectic structure on $g_- = g^\bot_-\cong g^\ast_+$, namely, \pref{7}. Indeed, setting $X=\sum^\infty_{i=1} a_i D^{-i}\in g_-$, the Hamiltonian vector field generated by $H(X)$ is given by:\footnote{Note that $D^{-n}(f D^r (g_\bullet)) = \sum^\infty_{j=0} {-n\choose j} (D^j f) D^{r-n-j}(g_\bullet)$.}
\begin{equation}\lab{2.6}
\cX_H(X)=P_- [\nabla_+ H,X],\qquad \nabla H_+(\bullet) = \sum^\infty_{i=1} D^{i-1} \lrp{\frac{DH}{Da_i}\bullet} .\end{equation}
 The latter using
 \[
 dH(X):= \sum^\infty_1 \frac{DH_i}{Da_i}da_i =: \lra{dX,\nabla_+H},\; \nabla_+ H\in g_+,dX=\sum^\infty_{i=1} da_i D^i,\]
 while the $\ell$-th flow of the hierarchy is generated by the Hamiltonian
 \begin{equation}\lab{2.7}
 H^{(\ell)}_D(X)=(\tr (D+X)^{\ell+1})/(\ell+1).\end{equation}
 Of course \pref{16} of the \aks\ is only a formal solution to the hierarchy.  As far as we know, the above result is new.
 
 Note that $Y^n\in g_+$ is invariant under the KP hierarchy, and then the flows are called the $n$-Gel$'$fand-Dickey hierarchy.  We can think of these as flows in their own right, living in $g_+ = g^\bot_+\simeq g^\ast_-$, so in the \aks\ we just interchange the role of $g_\pm$.  Now the induced Hamiltonian structure \pref{7} on $g^\bot_+\simeq g^\ast_-$ is given by
 \begin{equation}\lab{2.8}
 \cX_H(X)=P_+ [\nabla_- H,X],\; X=\sum_{0\leq i\ll \infty} a_i D^i,\; \nabla_- H = \sum^{\infty}_{i=1} D^{-i-1}
 \lrp{\frac{DH}{D a_i}\bullet}.\end{equation}
Note 
 \begin{equation}\lab{2.9}
 \cA_n = \lrc{ \sum^n_{i=0} a_i D^i,\hensp{with} a_n=1, a_{n-1}=0} \end{equation}
 is an invariant manifold of the (co-adjoint and hence the) Hamiltonian action on $g^\bot_+\simeq g^\ast_-$ given by $\cX_H$, i.e., a union of symplectic leaves, sometimes called a Poisson subspace.  So we may consider $\cX_{H|_{\cA_n}}$ (the Gel$'$fand-Dickey symplectic structure as described in \cite{16}) and setting $\eps=0$ in \pref{12} and
 \begin{equation}
 \lab{2.10}
 H^{(\ell)}_0(X)=\frac{n}{n+\ell} \tr X^{\frac{\ell+n}{n}},\; X\in \cA_n,\end{equation}
 yields by \pref{14}, with $(\ensp)_\mp\to (\ensp)_\pm$ in AKS the $\ell$-th flow of the Gel$'$fand-Dickey hierarchy \cite{21} 
 \begin{equation}\lab{2.11}
 \frac{\part X}{\part t_\ell}:=\cX_{H^{(\ell)}_0} (X)=\pm \left[X,\lrp{X^{\frac{\ell}{n}}}_\pm\right].\end{equation}
 Once again \pref{16} of the \aks\ provides only a formal solution.
 
 \subsection*{Example 2:  The Toda lattice on $g\ell(n+1)$}
 \
 
 The famous Toda lattice was invented by M. Toda \cite{5} as a simple model for a one-dimensional crystal in solid state physics with nearest neighbor interactions given by the exponential function.  Both the periodic and nonperiodic lattice, as well as various Lie algebra versions have proven to be important in mathematical physics and \cite{BIZ} enumerative geometry dealing with string theory.  The system is also important in orthogonal polynomial theory \cite{30} and random matrix theory \cite{28}.
 
 Here we set $g=g\ell (n+1)= g_+\oplus g_-$, where $g_+$ (respectively $g_-$) $=$ Lie algebra of lower triangular matrices (resp.\ Lie algebra of strictly upper triangular matrices), while $\lra{A,B}=\tr AB$; hence $g^\ast_-(\hbox{resp. }g^\ast_+)\cong g^\bot_+$ (resp.~$g^\bot_-$) $=$ Lie algebra of strictly lower triangular matrices (resp.\ upper triangular matrices), and $\eps=$ matrix with all 1's one below the main diagonal and all other entires are $0$.  Here $n$ may be infinite.

 Since the induced Hamiltonian structure on $\gbm\cong g^\ast_+$ is given by \pref{7} 
 \[
 \cX_H(X)=\hat P_-[\nabla_+ H,X],\]
 with $X\in\gbm$, $\hat P_-$ projection onto $\gbm$ ($g=\gbm\oplus \gbp$)
 and
 \[
 dH=\lra{dX,\nabla_+ H}, \; \nabla_+H\in g_+,\]
then
 \[
 B_p = \{ X\in \gbm\hensp{with at most $p$ bands$\}$}\]
 is an invariant manifold of the Hamiltonian action given by $\cX_H$, so we may consider $\cX_H |_{B_p}$.  If we set
 \[
 H^{(\ell)}_\eps (X) = -\frac{\tr(\eps+X)^{\ell+1}}{\ell+1},\quad Y=\eps+X,\]
 the equations \pref{14} of the \aks\   yield 
 \begin{equation} \lab{2.12}
 \frac{\part}{\part t_\ell} Y:= \cX_{H^{(\ell)}_\eps}(Y) = \pm [Y,(Y^\ell)_\pm],\qquad X\in B_2\end{equation}
(in the style of \cite{19}) 
the Toda hierarchy, with $\ell=1$ the classical Toda equation:
\[
\dot Y = [Y,Y_+]=-[Y,Y_-],\]
where
\[
Y:= \begin{pmatrix}
b_1&a_1&&&0\\
1&b_2&a_2\\
&\ddots&\ddots&\ddots\\
&&1&b_n&a_n\\
0&&&1&b_{n+1}\end{pmatrix}\]
and where
\[
Y_- = \begin{pmatrix}
0&a_1&&&0\\
0&0&a_2\\
&\ddots&\ddots&\ddots\\&&0&0&a_n
\\
0&&&0&0\end{pmatrix},\quad Y_+=\begin{pmatrix}
b_1&0&&&0\\
1&b_2&0\\
&\ddots&\ddots&\ddots\\
&&1&b_n&0\\
0&&&1&b_{n+1}\end{pmatrix}.\]

Formula \eqref{16} of AKS yields the explicity solution of the hierarchy.  It is worth noting that the Toda equations can be gotten from another decomposition in the style of \cite{16}, namely,
\begin{equation}\lab{2.13}
g=g\ell(n+1)=g_+\oplus g_-,\end{equation}
where
\begin{align}\label{2.14}
 g_+(\hbox{resp. }g_-) ={}& \text{ lower triangular matrices with diagonal}\\
 &\text{(resp.\ skew-symmetric matrices),}\nn\end{align}
with $\lra{\cdot,\cdot}$ the same, hence 
\begin{align}\label{2.15}
&g^\ast_-\hbox{ (resp.\ $g^\ast_+$)} \\ \notag
&\cong \gbp \hbox{ (resp. $\gbm$) } = \text{ lower triangular matrices
with}\\ &\hspace*{1.25in}\hbox{no diagonal (resp.\ symmetric matrices),}\nn\end{align}
and $\eps = 0$. The induced Hamiltonian structure \pref{7} on $\gbm$ is given by 
\begin{equation}\lab{2.16}
\cX_H(X)=\hat P_-[\nabla_+ H,X],\quad dH=: \lra{dX,\nabla_+ H},\quad \nabla_+ H\in g_+.\end{equation}
Observe
\begin{equation}\lab{2.17}
A_p = \{X\in \gbm\text{ with at most $p$ bonds above and below the diagonal}\}\end{equation}
is an invariant manifold of the Hamiltonian action, so we may consider $\cX_H\big|_{A_p}$.  If we set
\begin{equation}\lab{2.18}
H^{(\ell)}_0 (X)=-\frac{1}{2} \frac{\tr X^{\ell+1}}{\ell+1},\qquad X\in \gbm,\end{equation}
the equations \pref{14} of the \aks
\begin{equation}\lab{2.19}
\frac{\part X}{\part t_\ell}:= \cX_{H^{(\ell)}_0}(X)=\pm \frac12 [X,(X^\ell)_\pm],\qquad X\in \cA_2 \end{equation}
yield the Toda hierarchy with $\ell=1$ the classical Toda equations, but in slightly different coordinates than with the previous splitting.  Once again \pref{16} of AKS yields the explicit solution of the hierarchy.

\subsection*{Example 3: The two-Toda lattice}\

The two-Toda lattice, a deep generalization of Ueno and Tagasaki \cite{22} of the one-Toda lattice, occurs in various areas of mathematical physics, such as random matrix theory \cite{EM}, \cite{15}, Gromov-Witten theory and Hurwitz number \cite{OP}, as well as matrix models in string theory, to name a few examples.
We finally need the $R$-matrix version of  the \aks.

Consider the splitting of the algebra $g$ of pairs $(P_1,P_2)$ of infinite $(\Z\times \Z)$ or semi-infinite $(\N\times \N)$ matrices such that $(P_1)_{ij}=0$ for $j-i\gg 0$ and $(P_2)_{ij}=0$ for $i-j\gg 0$,\footnote{So $(P_1)_{ij}=0$ for $j>i$ eventually, i.e., when $j-i$ is sufficiently large, etc.}
 used in \cite{15}; to wit:
\begin{align}
\label{2.20}
g&=g_++ g_-,\\
g_+&= \lrc{(P,P)\mid P_{ij}=0\hensp{if} |i-j|\gg 0} = \lrc{(P_1,P_2)\in g\mid P_1=P_2},\nn
\\ g_- &= \lrc{(P_1,P_2)\mid (P_1)_{ij}=0\hensp{if}  j\geq i,\; (P_2)_{ij}=0\hensp{if} i>j},\nn\end{align}
with $(P_1,P_2)= (P_1,P_2)_+ + (P_1,P_2)_-$ given by
\begin{align*}
(P_1,P_2)_+ &= (P_{1u}+P_{2\ell},P_{1u}+P_{2\ell}),\\
(P_1,P_2)_-&= (P_{1\ell} - P_{2\ell},P_{2u}-P_{1u});\end{align*}
$P_u$ and $P_\ell$ denote the upper (including diagonal) and strictly lower triangular parts of the matrix $P$, respectively.  

Take for $\lra{\ensp,\ensp}$ on $g$, $\lrec_1+\lrec_2$, with $\lra{A,B}_i = \tr AB$ on the $i$-th components of $g$; i.e., $\lrec$ just decouples (as does the Lie bracket) so it is Ad-invariant.  Then let $L=(L_1,L_2)$ be the running variables on $g\cong g^\ast$ and consider the (formal) Hamiltonians
\begin{equation}\lab{2.21}
H^{(i)}_n(L)=\frac{\tr L^{n+1}_i}{n+1},\qquad i=1,2,\; n=1,2,\dots \;.\end{equation}
Then under the Hamiltonian vector-fields $\cX_{H^{(i)}_n(L)  |_R}$ we find
\begin{align}
\label{2.22}
\frac{\part L}{\part t_n} =:\cX_{H^{(1)}_n(L) |_R} (L)&=[(L^n_1,0)_+,L]\\
\nn \frac{\part L}{\part s_n} =: \cX_{H^{(2)}_n(L) |_R} (L)&= [(0,L^n_2)_+,L],\end{align}
which are deformations of a pair of infinite matrices
\begin{equation}\lab{2.23}
L=(L_1,L_2) = \Big( \sum_{-\infty<i\leq 1} a^{(1)}_i \La^i,\; \sum_{-1\le i<\infty}a^{(2)}_i \La^i\Big)\in g,\end{equation}
with $\La$ the shift operator and where $a^{(1)}_i$ and $a^{(2)}_i$ are diagonal matrices depending on $t=(t_1,t_2,\dots)$ and $s=(s_1,s_2,\dots)$, such that
\[
a^{(1)}_1=I\hensp{and} \lrp{a^{(2)}_{-1}}_{nn} \ne 0\hensp{for all} n;\]
that is to say matrices of the above form are an invariant manifold of the flows, and the flows restricted to this manifold are called the two-Toda flows \cite{22}.  While $\La=(\delta_{j=i+1})$, in the semi-infinite case we need to set $\La^{-i}=(\La^i) ^\top$ for $i\geq 1$, and once again, the flows restrict to \pref{2.23} in the semi-infinite case, yielding the semi-infinite two-Toda flows.  Of course the Hamiltonians may not converge,  in which case they are ``formal Hamiltonians," but the flows make perfectly good sense in any case, and they all commute.

\subsection*{Example 4: The Pfaff Lattice}\

The Pfaff lattice comes up as the natural integrable system that is the deformation class for the GOE and GSE examples in random matrix theory \cite{29}, just as the Toda system is the natural  integrable system with which to deform the GUE case of random matrix theory.  Similarly it is the natural deformation class of skew-orthogonal polynomials, just as the Toda system is the natural deformation class of orthogonal-polynomials.  There is an associated tau-function theory for this hierarchy and it fits in naturally to the other Sato hierarchies~\cite{27}.

First consider the Lie algebra $\cD=g\ell_\infty$ of semi-infinite matrices, viewed as being composed of $2\times 2$ blocks.  It admits the natural decomposition into subalgebras as follows~\cite{23}:
\begin{equation}\lab{2.24}
\cD=\cD_- \oplus \cD_0\oplus \cD_+ = \cD_- \oplus \cD^-_0 \oplus \cD^+_0\oplus \cD_+,\end{equation}
where $\cD_0$ has $2\times 2$ blocks along the diagonal with zeroes everywhere else, and where $\cD_+$ (resp.\ $\cD_-)$ is the subalgebra of upper-triangular (resp.\ lower-triangular) matrices with $2\times 2$ zero matrices along $\cD_0$ and zero below (resp.\ above).  As we point out in \pref{2.24}, $\cD_0$ can further be decomposed into two Lie subalgebras as follows:
\begin{align}\label{2.25}
\cD^-_0&=\{\text{all $2\times 2$ blocks $\in \cD_0$ are proportional to Id}\},\\
\nn \cD^+_0&=\{\text{all $2\times 2$ blocks $\in \cD_0$ have trace }0\}.\end{align}
Consider the following: the semi-infinite skew-symmetric matrix $J$, zero everywhere, except for the following $2\times 2$ blocks, along the ``diagonal,"
\begin{equation}\lab{2.26}
J=\begin{pmatrix}
0&1\\
-1&0\\
&&0&1\\
&&-1&0\\
&&&&0&1\\
&&&&-1&0\\
&&&&&\ddots\end{pmatrix}\in \cD^+_0,\text{ with }J^2 = -I;\end{equation}
and the associated Lie algebra order 2 involution
\begin{equation}\lab{2.27}
\cJ:\cD\to \cD:a\mapsto \cJ(a):=J a ^\top J.\end{equation}
The splitting into two Lie subalgebras\footnote{Note that $g_-$ is the fixed point set of $\cJ$.} (with corresponding projections $P_\pm$)
\begin{equation}\lab{2.28}
g=g_++g_-\hensp{with} g_+= \cD_- +\cD^-_0\hensp{and} g_-= \{a+\cJ a, \, a\in \cD\}=\mathrm{sp}(\infty),\end{equation}
with corresponding Lie groups\footnote{$G_+$ is the group of invertible elements in $g_+$, i.e., lower-triangular matrices, with nonzero $2\times 2$ blocks proportional to Id along the diagonal.} $G_+$ and $G_- =  \mathrm{Sp}(\infty)$, plays a crucial role here.  Notice that $g_- = \mathrm{sp}(\infty)$ and $G_+=\mathrm{Sp}(\infty)$ stand for the infinite rank affine symplectic algebra group.  Let (formally speaking)
$\lra{A,B} = \tr AB$ be the Ad-invariant inner product.  The applying \pref{13} of the \aks\ yields the flows \cite{23}:
\begin{equation}\lab{2.29}
\frac{\part L}{\part t_i}=[P_+ \nabla \cH_i,L] = [-P_- \nabla \cH_i,L],\enspace \cH_i = -\frac{\tr L^{i+1}}{i+1} \end{equation}
on matrices $L=Q\wedge Q^{-1}$, with $Q\in G_+$ and $\La$ the customary shift operator.  We call these equations the Pfaff Lattice.\footnote{The Hamiltonians $\cH_i$ are viewed as formal sums; the convergence of this formal sum would require some sufficiently fast decay of the entries of $L$.  Since $\nabla \cH_i=L^i$, one does not need to be concerned about this point.}  Note that $L$ of the above form are preserved by \pref{2.29}, as follows from Proposition \ref{p2.1}. 

Remembering the decomposition \pref{2.24},  write $a\in D$, $a=a_-+a_0+a_+$, while for any element $a\in g=g_++g_-$, write $a=P_+a+P_-a$, and then we have that
\begin{align*}
a&=a_-+a_0+a_+\\
&= P_+a +P_-a\\
&= \lrc{(a_- -\cJ a _+)+\frac12(a_0-\cJ a_0)}+\lrc{(a_+ +\cJ a_+)+\frac12(a_0+\cJ a_0)} ,\end{align*}
and so we can write the Pfaff Lattice \pref{2.29} more explicitly:
\begin{align}
\label{2.30}
\frac{\part L}{\part t_i}&= \lrb{-\lrp{(Li)_- - \cJ(L^i)_+} -\frac12\lrp{(L^i)_0-\cJ(L^i)_0},L}\\
&= \lrb{ \lrp{(L^i)_+ + \cJ(L^i)_+}+\frac12\lrp{(L^i)_0 +\cJ(L^i)_0},L}.\nn\end{align}
We have the followings proposition \cite{23}:
\begin{prop}\lab{p2.1}
For the matrices
\[
L:= Q\wedge Q^{-1}\hensp{and} m:= Q^{-1}J Q^{-1T},\hensp{with} Q\in G_+,\]
the following three statements are equivalent:
\begin{itemize}
\item[(i)] $\frac{\part Q}{\part t_i}Q^{-1}=-P_+L^i$,
\item[(ii)] $L^i+\frac{\part Q}{\part t_i} Q^{-1}\in g_-$,
\item[(iii)] $\frac{\part m}{\part t_i}  = \wedge^i m +m\wedge^{T^i}$
\end{itemize}
\end{prop}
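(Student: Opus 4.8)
The plan is to treat the statement as a short chain of elementary equivalences, using only the group/algebra structure of $G_+\supset g_+$, the direct-sum splitting $g=g_+\oplus g_-$, and the fact (recorded in the footnote to \pref{2.28}) that $g_-$ is exactly the fixed-point set of the involution $\cJ$. Write $N:=\frac{\partial Q}{\partial t_i}\,Q^{-1}$ and let $\La$ denote the shift operator (written $\wedge$ in the statement). The one structural input I need is that, since $Q(t_i)$ is a curve in the group $G_+$, its logarithmic derivative satisfies $N\in g_+$. Granting this, I would first dispatch $(i)\Leftrightarrow(ii)$ purely from uniqueness of the vector-space decomposition $g=g_+\oplus g_-$: writing $L^i=P_+L^i+P_-L^i$ with $P_-L^i\in g_-$, the element $L^i+N=(P_+L^i+N)+P_-L^i$ lies in $g_-$ if and only if its $g_+$-component $P_+L^i+N$ vanishes, i.e.\ $N=-P_+L^i$, which is exactly $(i)$.

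The heart of the matter is $(ii)\Leftrightarrow(iii)$, and here the key is a single algebraic identity. From $\cJ(a)=Ja^\top J$ and $J^2=-I$ one gets, for every $a\in\cD$,
\[
aJ+Ja^\top=(a-\cJ a)\,J.
\]
I would then differentiate $m=Q^{-1}J(Q^\top)^{-1}$ along $t_i$. Using $\frac{\partial}{\partial t_i}(Q^{-1})=-Q^{-1}N$ and $\frac{\partial}{\partial t_i}\bigl((Q^\top)^{-1}\bigr)=-N^\top(Q^\top)^{-1}$, this gives
\[
\frac{\partial m}{\partial t_i}=-Q^{-1}(NJ+JN^\top)(Q^\top)^{-1}=-Q^{-1}(N-\cJ N)\,J\,(Q^\top)^{-1},
\]
the last step by the displayed identity.

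On the other side, using $L^i=Q\La^iQ^{-1}$ (so $(L^i)^\top=(Q^\top)^{-1}(\La^\top)^iQ^\top$), I would rewrite the right-hand side of $(iii)$ via $\La^i m=Q^{-1}L^iJ(Q^\top)^{-1}$ and $m(\La^\top)^i=Q^{-1}J(L^i)^\top(Q^\top)^{-1}$, so that
\[
\La^i m+m(\La^\top)^i=Q^{-1}(L^iJ+J(L^i)^\top)(Q^\top)^{-1}=Q^{-1}(L^i-\cJ L^i)\,J\,(Q^\top)^{-1}.
\]
Comparing the two displays and cancelling the invertible factors $Q^{-1}$, $J$, $(Q^\top)^{-1}$, equation $(iii)$ is equivalent to $N-\cJ N=\cJ L^i-L^i$, i.e.\ $\cJ(L^i+N)=L^i+N$; since $\cJ$ is an involution with fixed-point set $g_-$, this says precisely $L^i+N\in g_-$, which is $(ii)$. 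I expect the only real obstacle to be bookkeeping: keeping the transposes and the order of $Q$, $J$, $\La$ straight (and, in the semi-infinite case, reading $\La^\top$ as the down-shift), together with spotting the identity above, which is what forces both sides into the same conjugation form and thereby converts the $\cJ$-invariance characterizing $g_-$ into the symmetric shift relation of $(iii)$.
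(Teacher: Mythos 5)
Your proof is correct and is, in substance, the standard argument: the paper itself states Proposition \ref{p2.1} without proof, quoting it from \cite{23}, where one likewise differentiates the skew-Borel decomposition $m=Q^{-1}JQ^{-1\top}$ and conjugates \textrm{(iii)} back by $Q$ to the identity $(L^i+\dot QQ^{-1})J+J(L^i+\dot QQ^{-1})^\top=0$, which via $J^2=-I$ says exactly that $L^i+\dot QQ^{-1}$ is $\cJ$-invariant, i.e.\ lies in $g_-=\mathrm{Fix}(\cJ)$, giving \textrm{(ii)}. Your two supporting observations --- the identity $aJ+Ja^\top=(a-\cJ a)J$ and the fact that $\dot QQ^{-1}\in g_+$ (which makes \textrm{(i)}$\Leftrightarrow$\textrm{(ii)} immediate from the directness of the splitting $g=g_+\oplus g_-$) --- are precisely the ingredients needed, so nothing is missing.
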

which yields the following:
\begin{thm}\lab{t2.2}
Consider the skew-symmetric solution
\[
m_\infty(t) =e^{\sum t_k\wedge^k}m_\infty(0)e^{\sum t_k\wedge^{Tk}},\]
to the commuting equations
\begin{equation}\lab{2.31}
\frac{\part m_\infty}{\part t_k}=\wedge^k m_\infty +m_\infty \wedge^{Tk},\end{equation}
with skew-symmetric initial condition $m(0)$ and its {\bf skew-Borel decomposition}
\begin{equation}\lab{2.32}
m_\infty = Q^{-1}JQ^{-1T},\hensp{with} Q\in G_+.\end{equation}
Then the matrix $Q$ evolves according to the equations
\begin{equation}\lab{2.33}
\frac{\part Q}{\part t_i} Q^{-1} = -P_+ (Q\wedge^i Q^{-1}),\end{equation}
and the matrix $L:=Q\wedge Q^{-1}$ provides a solution to the Lax pair
\begin{equation}\lab{2.34}
\frac{\part L}{\part t_i}= [-P_+ L^i,L] = [P_- L^i,L].\end{equation}
Conversely, if $Q\in G_+$ satisfies \pref{2.33}, then $m_\infty$, defined by \pref{2.32}, satisfies \pref{2.31}.
\hfill\qed
\end{thm}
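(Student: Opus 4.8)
\emph{Proof proposal.} The plan is to reduce the whole statement to Proposition~\ref{p2.1}, which already packages the essential equivalence; the only genuinely new ingredients are the explicit integration of the linear flow \pref{2.31} and the short passage from the evolution of $Q$ to the Lax form \pref{2.34}. Throughout I use that $Q\in G_+$ intertwines $\wedge$ and $L$, so that $L^i=(Q\wedge Q^{-1})^i=Q\wedge^iQ^{-1}$.

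First I would verify that the closed expression $m_\infty(t)=e^{\sum t_k\wedge^k}m_\infty(0)e^{\sum t_k\wedge^{Tk}}$ actually solves \pref{2.31}. Since all powers of $\wedge$ commute, $\part_{t_k}e^{\sum t_j\wedge^j}=\wedge^k e^{\sum t_j\wedge^j}$ and likewise on the right with $\wedge^{Tk}$, so differentiating the product gives $\part_{t_k}m_\infty=\wedge^k m_\infty+m_\infty\wedge^{Tk}$ immediately; the same commutativity of the $\wedge^k$ shows the flows \pref{2.31} commute, and skew-symmetry of $m_\infty(0)$ is propagated because transposition swaps the two exponential factors and changes the sign. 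Hence the given $m_\infty(t)$ is a skew-symmetric solution of \pref{2.31}.

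Next I would feed its skew-Borel decomposition $m_\infty=Q^{-1}JQ^{-1T}$, $Q\in G_+$, into Proposition~\ref{p2.1}. Taking $m:=m_\infty$, statement (iii) of the proposition holds, so statement (i) holds as well, i.e.\ $\frac{\part Q}{\part t_i}Q^{-1}=-P_+L^i=-P_+(Q\wedge^iQ^{-1})$, which is exactly \pref{2.33}. The Lax equation \pref{2.34} then follows by one line: setting $A_i:=\frac{\part Q}{\part t_i}Q^{-1}=-P_+L^i$ and using $\part_{t_i}Q^{-1}=-Q^{-1}A_i$, one gets $\part_{t_i}L=\part_{t_i}(Q\wedge Q^{-1})=A_iL-LA_i=[-P_+L^i,L]$; since $[L^i,L]=0$ and $L^i=P_+L^i+P_-L^i$, this equals $[P_-L^i,L]$, giving both forms in \pref{2.34}. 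For the converse I would run Proposition~\ref{p2.1} in the opposite direction: if $Q\in G_+$ satisfies \pref{2.33} (statement (i)), the proposition yields statement (iii) for $m:=Q^{-1}JQ^{-1T}$, which is \pref{2.31}, so $m_\infty$ defined by \pref{2.32} solves the linear system.

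The main obstacle is not in these manipulations—granted Proposition~\ref{p2.1} they are essentially formal—but in the legitimacy of the skew-Borel step: one must know that the factorization $m_\infty(t)=Q(t)^{-1}JQ(t)^{-1T}$ persists with $Q(t)\in G_+$ and depends smoothly on $t$, so that $Q(t)$ is a well-defined differentiable curve whose derivative may be substituted into Proposition~\ref{p2.1}. This is the skew-symmetric/symplectic analogue of the nonvanishing of the principal minors in an ordinary Borel factorization, and amounts to the nonvanishing of the relevant Pfaffian minors of $m_\infty(t)$ along the flow, together with uniqueness of the factorization once the diagonal $2\times2$ blocks of $Q$ are normalized (equivalently, triviality of $G_+\cap G_-$ near the identity). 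I would treat this genericity-and-persistence point as the delicate part and otherwise regard the theorem as a corollary of Proposition~\ref{p2.1}.
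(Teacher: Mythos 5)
Your proposal is correct and matches the paper's own treatment: the paper states Theorem \ref{t2.2} with no separate proof precisely because it regards it, as you do, as an immediate corollary of the equivalence (i)$\Leftrightarrow$(iii) in Proposition \ref{p2.1}, plus the same routine verifications (the exponential formula solving \pref{2.31}, preservation of skew-symmetry, and the one-line passage from \pref{2.33} to the Lax form \pref{2.34} via $[L^i,L]=0$). Your closing remark on the persistence and smoothness of the skew-Borel factorization is a fair flag of a genericity point the paper likewise leaves implicit.
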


 \section{A.C.I. Examples of the AKS Theorem}
 This section deals with integrable systems \cite{1} that linearize on algebraic tori and, in particular, are solved by Lax-equations involving a formal parameter, i.e., Lax-equations on Kac-Moody Lie algebras.  This gives rise to an invariant algebraic curve (constant as time evolves) such that the systems are linearized on the Jabobeans of these curves.  The periodic matrices defining the Jacobeans ultimately yield the fundamental physical quasi-periods of these integrable systems.  Effective techniques for solving these equations are given .
 
 Set \[
 g:=\Big\{ \sum_{-\infty<i\ll \infty} a_i h^i\mid a_i+\gl(n,\R \hensp{or} \C)\Big\}=g_++g_-,\]
 with
 \begin{align*}
 g_-&:= \Big\{ \sum_{0\leq i \ll \infty} a_i h^i\in g\Big\} = g^\bot_- \cong g^\ast_+,\\
 g_+&:= \Big\{ \sum_{-\infty<i\leq -1}a_i h^i \in g\Big\} = \gbp \cong g^\ast_-.\end{align*}
 Here $h$ is a formal parameter, with the Ad-invariant form on $g$
 \[
 \lra{a,b}:= \hbox{coef}_{h^0}(\tr(abh)) = \sum_{i+j+1=0} (a_i,b_j),\]
 with $(\ensp,\ensp)$ being the Killing form on $\gl(n)$, and
 \[
 \Big[ \sum_i a_i h^i,\sum_j b,h^j\Big] = \sum_{i,j} [a_i,b_j] h^{i+j}.\]
 Thus the Hamiltonian structure on $\gbm\cong g^\ast_+$ given by \pref{7} is
 \[
 \cX_H(X)=P_- [\nabla_+ H(X),X],\]
 with the manifold
 \[
 C_m (\alpha,\gamma):=\alpha h^m +\gamma h^{m-1}+A_{m-1}\]
 invariant under the Hamiltonian action, so $\cX_H$ restricts to $C_m(\alpha,\ga)$, with (Casimirs) $\alpha$ and $\ga$ two fixed diagonal matrices and 
 \[
 A_{m-1}:= \Big\{ \sum^{m-1}_{j=0}a_j h^j \mid \diag (a_{j-1})=0\Big\},\]
 with
$
 \alpha=\diag(\alpha_1,\alpha_2,\dots,\alpha_n)$ with $\prod_{i<j}(\alpha_i-\alpha_j)\ne 0$.  Taking Hamiltonians of the form
 \begin{equation}\lab{3.1} 
 H(a) = \lra{f(ah^{-j}),h^k},\qquad a\in C_m(\alpha,\ga),\end{equation}
 with $f$ ``nice" and applying \pref{14} of the \aks\ we find \cite{11}
 \begin{equation}\label{3.2} 
 \dot a = \cX_H(a) = [a,(f'(ah^{-j})h^{k-j})_-],\qquad a\in C_m(\alpha,\ga),\end{equation}
 and upon setting $j=m$, $k=m+1$, this yields
 \begin{align} \label{3.3} 
 \dot a&= [a,b+\beta h],\; \beta:= f'(\alpha),\; b:=\ad_\beta \ad^{-1}_\alpha a_{m-1}+f''(\alpha)\ga,\\
 b_{ij}&= (1-\delta_{ij})(\beta_i-\beta_j) (\alpha_i-\alpha_j)^{-1}(a_{m-1})_{ij}+\delta_{ij}\ga_{ii} f''(\alpha_i) ,\notag\end{align}
 \begin{equation}   H(a) =\lra{f(ah^{-m}),h^{m+1}}.\lab{3.4} 
\end{equation}
 
 If in the above we set $m=1$, $\ga =0$, $\beta=\alpha^{1/2}$, we find the Euler-Arnold spinning top for the Lie algebra $\gl(n)$, while if we set $A_0 = -A ^\top_0$, we arrive at the Euler top for $u(n)$.  Moreover let us set for $x,y\in \R^n$ or $\C^n$
 \begin{equation}\lab{3.5} 
 \Ga_{xy} = x\otimes y - y\otimes x,\; \Ga_{xx}=x\otimes x, \; \Ga_{yy}=y\otimes y,\; \Delta_{xy} = x\otimes y+y\otimes x,\end{equation}
 and let us call the following differential equations,
 \begin{equation}\lab{3.6} 
 \dot\Ga_{xy} = [\Ga_{xy},\Ga],\qquad \Ga_{ij}=(\Ga_{xy})_{ij} (J_i+J_j)^{-1},\end{equation}
 for $J_i>0$, the ``special" Euler equations.  In what follows $J_i=\sqrt{\alpha_i}$, and we arrive at the following theorem of \cite{11}:
 
 \begin{thm}\lab{t3.1}
 The Lax equation $\dot a=[a,\Ga+\beta h]$, with $a=a(h)$, $\beta =\diag (\beta_1,\dots,\beta_n)$, $\Ga=\ad \beta \ad^{-1}_\alpha \Ga_{xy}$ of  {\rm (3.3)} corresponds to {\rm (a)} Euler equations, {\rm (b)} the geodesic flow on the ellipsoid and the Neumann problem, and {\rm (c)} the central force problem on the ellipsoid respectively, with
 \begin{align}\label{3.7} 
 {\rm (a)} &\qquad  a = \alpha h +\Ga_{xy}\\
 \nn {\rm (b)}&\qquad a= \alpha h^2 +h \Ga_{xy} - \Ga_{xx} \\
 \nn {\rm (c)}& \qquad a= \alpha h^2 +h\Ga_{xy}+\Delta_{xy} -\alpha\end{align}
 and with the respective Hamiltonians  \pref{3.4} of the  form
 \begin{alignat}{5}\label{3.8} 
 {\rm (a)} &\qquad  H =\lra{\frac23  (ah^{-1})^{3/2},h^2},&&\qquad f(x)=\frac{2}{3} x^{3/2}&\\ \nn
 {\rm (b)}&\qquad H=\lra{ \ln(ah^{-2}),h^3},&&\qquad f(x)=\ln(x)\hbox{ {\rm (geodesic)}}&\\ \nn
 &\qquad H= \lra{\frac12(ah^{-2})^2),h^3},&&\qquad f(x)=\frac12 x^2\hbox{ {\rm (Neumann)}}&\\ \nn
 {\rm (c)}&\qquad H=\lra{\ln (ah^{-2}),h^3},&&\qquad f(x)=\ln(x).\end{alignat}
 \end{thm}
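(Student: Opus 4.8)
The plan is to decouple the statement into two independent tasks: (i) showing that each prescribed $a=a(h)$ leads, through the general result, to precisely the Lax form $\dot a=[a,\Ga+\beta h]$; and (ii) the substantive task of expanding that equation in powers of the spectral parameter $h$ and recognizing the resulting ordinary differential equations as the named mechanical systems. Task (i) is almost pure bookkeeping built on \pref{3.3}--\pref{3.4}, whereas task (ii) is where all the content lies.

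For task (i), observe that in each case the leading coefficient of $a$ is the fixed diagonal matrix $\alpha$ --- so $m=1$ in (a) and $m=2$ in (b),(c) --- while the coefficient of $h^{m-1}$ is $a_{m-1}=\Ga_{xy}$ and the second diagonal Casimir may be taken as $\ga=0$. Substituting into \pref{3.3} with $\ga=0$ gives $b=\ad_\beta\ad^{-1}_\alpha\Ga_{xy}=\Ga$ at once, hence $\dot a=[a,\Ga+\beta h]$. What remains is to read off $\beta=f'(\alpha)$ from the Hamiltonian \pref{3.4}: in (a), $f(x)=\tfrac23x^{3/2}$ yields $\beta=\alpha^{1/2}=\diag(J_i)$, and then $\Ga_{ij}=\bigl((\beta_i-\beta_j)/(\alpha_i-\alpha_j)\bigr)(\Ga_{xy})_{ij}=(\Ga_{xy})_{ij}/(J_i+J_j)$, reproducing \pref{3.6}; in (b) the two Hamiltonians $f=\ln$ and $f=\tfrac12x^2$ give $\beta=\alpha^{-1}$ and $\beta=\alpha$ on the \emph{same} $a$, so a single Lax matrix carries two distinct flows; and (c) reuses $f=\ln$, $\beta=\alpha^{-1}$.

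Task (ii) proceeds by matching coefficients of $h$ in $[a,\Ga+\beta h]$. The two highest powers always cancel: the $h^{m+1}$ term is $[\alpha,\beta]=0$, and the $h^m$ term vanishes because the very definition of $\Ga$ gives $\ad_\alpha\Ga=\ad_\beta\Ga_{xy}$, i.e.\ $[\alpha,\Ga]=[\beta,\Ga_{xy}]$; the remaining coefficients are the equations of motion. The transparent case is Neumann, where $\beta=\alpha$ forces $\Ga=\Ga_{xy}$: with $a=\alpha h^2+h\Ga_{xy}-\Ga_{xx}$ the $h^1$ and $h^0$ coefficients give $\dot\Ga_{xy}=[\alpha,\Ga_{xx}]$ and $\dot\Ga_{xx}=[\Ga_{xx},\Ga_{xy}]$. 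Using the outer-product identity $(u\otimes v)(p\otimes q)=(v\cdot p)\,u\otimes q$, with $v\cdot p:=\sum_i v_i p_i$, and the natural identification $\dot x=y$, the second equation forces the constraints $x\cdot x=1$ and $x\cdot y=0$ and the first collapses to $\ddot x=-\alpha x+\lambda x$, with $\lambda$ the scalar normal force --- Neumann's anisotropic oscillator on the sphere. Case (a) is shorter still: the only surviving coefficient is $\dot\Ga_{xy}=[\Ga_{xy},\Ga]$, which with $\Ga_{ij}=(\Ga_{xy})_{ij}/(J_i+J_j)$ is exactly the special Euler equations \pref{3.6} of the $\gl(n)$ rigid body.

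The hard part is the geodesic-flow and central-force identifications, both governed by $f=\ln$ and hence $\beta=\alpha^{-1}$. Here the bracket algebra of the rank-two data $\Ga_{xy},\Ga_{xx},\Delta_{xy}$ composed with $\ad^{-1}_\alpha$ at $\beta=\alpha^{-1}$ is markedly heavier, but the genuine obstacle is not the computation: it is that the emerging second-order system in $x$ does not visibly announce itself as Jacobi's geodesic flow on the ellipsoid determined by $\alpha$, respectively the central-force problem on that ellipsoid. Recognizing it requires passing to confocal (elliptic) coordinates and matching the Lagrange multiplier produced by the $h^0$-equation with the constraint force normal to the ellipsoid. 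That geometric recognition, rather than the lengthy but mechanical coefficient matching, is where I expect the real difficulty to sit.
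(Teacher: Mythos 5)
Your task (i) and the two easy identifications are correct, and task (i) is in fact essentially all that the paper itself supplies: the paper derives the Lax form \pref{3.3} from the AKS theorem (taking $j=m$, $k=m+1$, $\ga=0$ on $C_m(\alpha,\ga)$) and then simply quotes the identifications from \cite{11} without proof. Your bookkeeping agrees with that derivation: $a_{m-1}=\Ga_{xy}$ and $\ga=0$ give $b=\ad_\beta\ad^{-1}_\alpha\Ga_{xy}=\Ga$, and $\beta=f'(\alpha)$ yields $\alpha^{1/2}$, $\alpha^{-1}$, $\alpha$ in the respective cases. Your $h$-expansions for case (a) and for Neumann are also right: the two top coefficients cancel by $[\alpha,\beta]=0$ and $\ad_\alpha\Ga=\ad_\beta\Ga_{xy}$; in (a) the surviving equation is $\dot\Ga_{xy}=[\Ga_{xy},\Ga]$ with $\Ga_{ij}=(\Ga_{xy})_{ij}(J_i+J_j)^{-1}$, $J_i=\sqrt{\alpha_i}$, which is \pref{3.6}; and for Neumann the identity $[\Ga_{xx},\Ga_{xy}]=(x\cdot x)\Delta_{xy}-2(x\cdot y)\Ga_{xx}$ together with $\dot\Ga_{xx}=\Delta_{xy}$ correctly produces the constraints and $\ddot x=-\alpha x+\lambda x$.

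The genuine gap is that two of the four identifications asserted by the theorem --- the geodesic flow in (b) and the central-force problem in (c), both governed by $f=\ln$, $\beta=\alpha^{-1}$, hence $\Ga=-\alpha^{-1}\Ga_{xy}\alpha^{-1}$ --- are never established. You write down neither the coefficient equations (for the geodesic case: $\dot\Ga_{xy}=[\Ga_{xy},\Ga]+[\alpha^{-1},\Ga_{xx}]$ and $\dot\Ga_{xx}=[\Ga_{xx},\Ga]$) nor the identification of variables turning them into the named constrained systems; you only assert that the difficulty is ``recognition'' in confocal coordinates. But that recognition \emph{is} the theorem. It requires choosing the correct constraint manifold (for the ellipsoid one works with $\lra{\alpha^{-1}x,x}$ rather than the sphere), verifying that the rank-two ansatz is preserved by the flow --- in case (c) one must check that the $h^0$ coefficient remains of the form $\Delta_{xy}-\alpha$, since consistency of $\dot\Delta_{xy}=[\Delta_{xy},\Ga]-[\alpha^{-1},\Ga_{xy}]$ with the rank structure is exactly what forces the constraints --- and, in the geodesic case, matching the Lagrange multiplier and the parametrization of the curve, where $y$ is not simply $\dot x$ but proportional to it with an $x$-dependent normalization, so a time change intervenes before one sees arclength-parametrized geodesics. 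None of this is routine coefficient matching; it is precisely the content of \cite{11}, to which the paper defers. As a self-contained argument, then, your proposal proves case (a) and the Neumann half of (b), and leaves the geodesic half of (b) and all of (c) as claims.
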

 
 In order to study the equations \pref{3.2} we could apply \pref{16} of the \aks, thinking of $h$ as a complex variable and use the Birkhoff Factorization Theorem in the style of \cite{24, 25}, but instead we should take advantage of the additional algebraic geometrical structure in  \pref{3.2}.  This leads \cite{12, 1} us to the following propositions, theorems and corollaries, which enable us to solve flows of the form \pref{3.2} as linear flows on algebraic tori which are Jacobians of curves.  
 
 \begin{prop}\lab{p3.2}
 Given a Lax pair defining the flow
 \begin{equation}\lab{3.9} 
 \dot X(h)=[X(h),Y(h)],\qquad X(h),Y(h)\in \gl(N)[h,h^{-1}],\end{equation}
 the functions $q_{k\ell}$ that are defined by the coefficients of the characteristic polynomial of $X(h)$,
 \[
 \det(z \Id_N-X(h))=z^N +\sum_{\begin{subarray}{l} a\leq 1<N\\-\ell'\leq k\leq \ell \end{subarray}}
 q_{k\ell} h^k z^\ell \]
 are constants of motion of the flow.  The plane algebraic curve, associated to each $X(h)$, 
 \[
 \Ga_X:= \{ (h,z)\in \C\times \C\mid \det (z\Id_N-X(h))=0\},\]
 is preserved by the flow.  Similarly, for each $X(h)$ the ``isospectral" variety of matrices $A_c\subset M$ defined by 
 \[\displaylines{
 A_c:= \{X'(h)\mid X(h)\hensp{and} X'(h)\hbox{ have the same characteristic polynomial}\\
 \hfill\cr\hfill\hbox{ with all } q_{k\ell}=c_{k\ell}\}}\]
 is preserved by the flow.  For $X\in A_c$ such that $\Ga_X=: \Ga_c$ is smooth, let us denote its smooth compactification by $\ol{\Ga_c}$ and let  \[
 \{p_1,\dots,p_s\}:=\ol{\Ga_c}\bsl \Ga_c\]
 denote the points at infinity.  At each of these points $h$ has a zero or a pole, i.e., (possibly after relabeling) we have that
 \[
 \mathrm{ord}_{p_i}(h)=\bcs -\mu_i& 1\leq i\leq s'\\
 \mu_i&s'+1\leq i\leq s\ecs\]
 where $\mu_i>0$ for $i=1,\dots,s$.
 
 \end{prop}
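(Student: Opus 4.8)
The plan is to exploit the isospectral (Lax) nature of the flow: first I would show the full characteristic polynomial of $X(h)$ is time-independent, which yields the first three assertions at once, and then analyze the ends of the spectral curve by a balancing argument for the last claim. To begin, I would prove that each power sum $\tr X(h)^j$, $j\geq 1$, is a constant of the motion. Regarding $X(h)$ and $Y(h)$ as elements of $\gl(N)$ over the commutative ring $\C[h,h^{-1}]$, the ordinary matrix trace is cyclic, so the Lax equation \pref{3.9} gives
\[
\frac{d}{dt}\tr X(h)^j = j\,\tr\lrp{X(h)^{j-1}[X(h),Y(h)]} = j\lrp{\tr X^jY - \tr X^{j-1}YX}=0.
\]
Thus every $\tr X(h)^j$ is a time-independent Laurent polynomial in $h$. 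By Newton's identities the coefficients of $\det(z\Id_N - X(h))$ are universal polynomials in $\tr X(h),\dots,\tr X(h)^N$, hence are themselves constant Laurent polynomials in $h$; reading off the coefficient of $h^kz^\ell$ shows each $q_{k\ell}$ is constant along the flow.

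The next two assertions are then immediate. Since the whole polynomial $\det(z\Id_N - X(h))$ is time-independent, its zero locus $\Ga_X$ in the $(h,z)$-plane is preserved by the flow. Likewise, an initial point of $A_c$ has all $q_{k\ell}=c_{k\ell}$, and as these stay constant the trajectory remains in $A_c$, so $A_c$ is flow-invariant.

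For the behavior at infinity I would argue geometrically. Because $\Ga_c$ is smooth, its smooth compactification $\olgc$ is the normalization and adds points only over the ends of the affine curve, so at each $p_i$ the pair $(h,z)$ leaves every compact subset of $\C\times\C$: either $h(p_i)=\infty$ or $z(p_i)=\infty$. If $h(p_i)=\infty$ then $h$ has a pole at $p_i$. Otherwise $z(p_i)=\infty$ with $h(p_i)$ finite; dividing the defining relation $z^N+\sum q_{k\ell}h^kz^\ell=0$ (with $\ell<N$) by $z^N$ gives $1+\sum q_{k\ell}h^kz^{\ell-N}=0$, and as $z\to\infty$ every $z^{\ell-N}\to 0$, so a finite nonzero limit of $h$ would force $1=0$; hence $h(p_i)=0$ and $h$ has a zero at $p_i$. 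In either case $\mathrm{ord}_{p_i}(h)=\pm\mu_i$ with $\mu_i>0$, which is the final claim.

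I expect the last step to be the main obstacle. The first three assertions are the standard isospectral-deformation argument, but the infinity analysis needs care: one must justify that smoothness of $\Ga_c$ makes the compactification the normalization (so no points are added over finite smooth points), rule out a finite nonzero value of $h$ along a branch where $z\to\infty$, and---if the explicit integers $\mu_i$ are wanted---carry out a Puiseux expansion (equivalently, a Newton-polygon analysis) of the defining polynomial at each end to read off the actual orders.
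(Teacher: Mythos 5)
Your argument is sound, but note first that the paper itself offers no proof of Proposition \ref{p3.2}: it is quoted as a tool from \cite{12} and the book \cite{1}. In those sources the isospectrality is obtained globally, by integrating the Lax equation to a conjugation $X(h,t)=g(t)X(h,0)g(t)^{-1}$ (in the AKS setting this is exactly \pref{16}), whence the characteristic polynomial, the curve $\Ga_X$ and the fibers $A_c$ are invariant because the characteristic polynomial is a conjugation invariant. Your infinitesimal version --- differentiating the power sums $\tr X(h)^j$ and invoking cyclicity of the trace over the commutative ring $\C[h,h^{-1}]$, then Newton's identities (legitimate in characteristic zero) --- is equivalent and arguably cleaner, since it avoids the local-in-time existence of $g(t)$; a still shorter route is $\frac{d}{dt}\det(z\Id-X)=-\det(z\Id-X)\,\tr\bigl((z\Id-X)^{-1}[X,Y]\bigr)=0$, using that $(z\Id-X)^{-1}$ commutes with $X$.

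One wrinkle in your analysis at infinity deserves a fix. Since $X(h)\in\gl(N)[h,h^{-1}]$ genuinely contains negative powers of $h$, the affine curve lives in $\C^\ast\times\C$, not $\C\times\C$: the defining Laurent polynomial is undefined at $h=0$. Consequently your dichotomy ``$(h,z)$ leaves every compact subset of $\C\times\C$, so $h(p_i)=\infty$ or $z(p_i)=\infty$'' misses ends where $h\to 0$ while $z$ stays finite; e.g.\ for $X(h)=\diag(h^{-1},1)$ the branch $z\equiv 1$ has such an end at $(0,1)$. The escape argument should be run in $\C^\ast\times\C$, giving three cases $h\to 0$, $h\to\infty$, $z\to\infty$; the first two give a zero resp.\ pole of $h$ outright, and in the third your computation (divide by $z^N$, use $\ell<N$ and that $h$, being meromorphic on the compact surface $\olgc$, has a limit in $\C\cup\{\infty\}$) correctly rules out a finite nonzero limit, forcing $h(p_i)\in\{0,\infty\}$. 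With that repair, and your (correct) observation that smoothness of $\Ga_c$ makes $\olgc\to\Ga_c$ an isomorphism over the affine part so that no $p_i$ sits over an affine point, the proof is complete; the Puiseux/Newton-polygon step you mention is only needed if one wants the integers $\mu_i$ explicitly, which the proposition does not require.
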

 
 Assume generically in $c$, that $\Ga_c$ is nonsingular, and for a generic point  $(h,z)$ on $X(h)\in A_c$, the eigenspace  $\xi=\xi(z,X(h))$ of $X(h)$ with eigenvalue  $z$ is one dimensional.  By Cramer's rule, $\xi=(\xi_i)_{1\leq i\leq N}$, normalized such that $\xi_1 = 1$, is a meromorphic function on $\ol{\Ga_c}$.
 
 For a generic $X(h)\in A_c$, with corresponding normalized eigenvector $\xi$, let $\cD_X$ be the minimal effective divisor on 
 $\Ga_c$ such that
 \[
 (\xi_\ell)_{\Ga_c}\geq -\cD_X\quad \hensp{for all} \ell=1,\dots,N;\]
 by continuity, $d:= \deg (\cD_X)$ is independent of $X=X(h)\in A_c$ and thus, $\cD_X$ defines an effective divisor of degree $d$ in $\ol{\Ga_c}$ for any $X=X(h)\in A_c$.  The point is to study the motion of the divisor $\cD_X$ in $\Ga_c$, when $X(h)$ is moving in $A_c$.  Roughly speaking, $\cD_X$ is the divisor of poles of the normalized eigenvector $\xi(z; X(h))$ on $\Ga_c$, not at $\infty$.  Note for non-generic $X(h)$ the divisor $\cD_X$ may contain one or several of the points $p_i$ at infinity.
 
 Choose a divisor $\cD_0\in \Div^d(\ol{\Ga_c})$ and a basis $(\om_1,\dots,\om_g)$ of holomorphic differentials on $\ol{\Ga_c}$ and let $\vec\om := (\om_1,\dots,\om_g) ^\top$.  Define the map 
 \begin{align*}
 J_c: A_c&\to \mathrm{Jac}(\ol{\Ga_c})\\
 X&\mapsto \int^{\cD_X}_{\cD_0}\vec\om.\end{align*}
 For example, one may choose a base point $q$ on $\olgc$ and take $\cD_0:= dq$.  Then the map is given by
 \[
 J_c(X) = \sum^d_{i=1}\int^{q_i}_q\vec\om\in \Jac(\olgc),\]
 where $\cD_X=q_1+\cdots+q_d$.
 
 It is easy to check from \pref{3.9} that $\xi(t)=\xi(z,X(h,t))$ satisfies and defines a function $\la$ as follows:
 \begin{equation}\lab{3.10} 
 \dot\xi+Y\xi =: \la\xi,\end{equation}
 $Y=Y(h,X(h,t))$, with $\la$ a scalar function of $(h,z,t)$.  This leads to the following theorems.
 
 \begin{thm}\lab{t3.3}
 Along the integral curves $X(t)$  of the Lax equation $\dot X=[X,Y]$, the derivative of the linearizing map is given by
 \[
 \frac{d}{dt}\int^{\cD_{X(t)}}_{\cD_{X(0)}} \vec \om = \sum^s_{i=1}\Res_{p_i}\la(h,z,t)\vec\om.\]\end{thm}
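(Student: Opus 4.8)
The plan is to reduce the statement to the classical residue theorem on the compact Riemann surface $\olgc$, after a local analysis relating the motion of the poles of the normalized eigenvector $\xi$ to the residues of $\la$. The guiding principle is that the Abel--Jacobi map derivative is a sum of residues at the \emph{finite} poles of $\xi$, and that these can be traded for residues at infinity once one knows that $\la\,\vec\om$ has no other poles.

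First I would locate the singularities of $\la$. Differentiating the eigenvalue relation $X(h)\xi=z\xi$ along the flow \pref{3.9}, with the point $(h,z)$ held fixed on the spectral curve (which is preserved by Proposition \ref{p3.2}), and substituting $\dot X=[X,Y]$, gives $X(\dot\xi+Y\xi)=z(\dot\xi+Y\xi)$; since the eigenspace at a generic point is one-dimensional, $\dot\xi+Y\xi$ is a scalar multiple of $\xi$, which is exactly \pref{3.10} and exhibits $\la$ as a meromorphic function on $\olgc$. Because $Y=Y(h)$ is regular at every finite point of $\olgc$ while $\xi=(\xi_\ell)$ has its finite poles precisely along $\cD_X=q_1+\cdots+q_d$, the meromorphic $1$-form $\la\,\vec\om$ has poles only at the $q_j$ and at the points $p_1,\dots,p_s$ at infinity, the differentials $\om_k$ being holomorphic everywhere on $\olgc$.

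The heart of the argument is the local pole-motion computation. Near a simple pole $q_j(t)$ of $\xi$, written $\tau=\tau_j(t)$ in a fixed local coordinate $\tau$, I would expand $\xi=c_j(t)/(\tau-\tau_j(t))+O(1)$; then $\dot\xi$ acquires the double pole $c_j\dot\tau_j/(\tau-\tau_j)^2$, whereas $Y\xi$ has at most a simple pole. Matching the leading $(\tau-\tau_j)^{-2}$ coefficients on the two sides of \pref{3.10} forces $\la$ to have a simple pole at $q_j$ with $\Res_{q_j}\la=\dot\tau_j$; that is, the residue of $\la$ equals the velocity of the moving pole. Since each $\om_k$ is holomorphic at $q_j$, this yields $\frac{d}{dt}\int^{q_j(t)}\vec\om=\Res_{q_j}(\la\,\vec\om)$, and summing over $j$ gives $\frac{d}{dt}\int^{\cD_{X(t)}}_{\cD_{X(0)}}\vec\om=\sum_j\Res_{q_j}(\la\,\vec\om)$.

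Finally I would invoke the residue theorem on $\olgc$: the total residue of $\la\,\vec\om$ vanishes, so $\sum_j\Res_{q_j}(\la\,\vec\om)=-\sum_i\Res_{p_i}(\la\,\vec\om)$, and the asserted identity follows after the overall sign is fixed by the conventions in \pref{3.10} and the orientation of the linearizing map. The main obstacle is the local step: it leans on the genericity hypotheses (simplicity of the $q_j$, one-dimensionality of the eigenspace, and $\cD_X$ disjoint from infinity), and the degenerate configurations---where a pole collides with another or migrates to some $p_i$---must be dispatched by continuity in $c$ and $t$. The crucial bookkeeping is the claim that $\cD_X$ accounts for \emph{all} the finite poles of $\la$, since this is exactly what legitimizes the passage from the residues at the $q_j$ to the residues at infinity.
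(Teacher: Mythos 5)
Your argument is precisely the canonical proof from the sources the paper cites for Theorem \ref{t3.3} (Adler--van Moerbeke \cite{12}, the book \cite{1}, and Griffiths \cite{PG}); the paper itself states the theorem without proof, and your three steps --- deriving \pref{3.10} by differentiating $X\xi=z\xi$ along the flow, the local matching of double-pole coefficients at a moving simple pole giving $\Res_{q_j}\la=\dot\tau_j$ and hence $\frac{d}{dt}\int^{\cD_{X(t)}}_{\cD_{X(0)}}\vec\om=\sum_j\Res_{q_j}(\la\,\vec\om)$, then the residue theorem on $\olgc$ --- are exactly that route, including the correct bookkeeping that $\la=\sum_\ell Y_{1\ell}\xi_\ell$ (since $\xi_1=1$, so $\dot\xi_1=0$; compare the cofactor formula in Theorem \ref{t3.4}) has finite poles only along $\cD_X$. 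One caution: the ``overall sign'' you defer to convention is in fact already pinned down by \pref{3.10} and $\dot X=[X,Y]$ as printed, and your own computation then yields $\frac{d}{dt}\int^{\cD_{X(t)}}_{\cD_{X(0)}}\vec\om=-\sum_{i=1}^{s}\Res_{p_i}(\la\,\vec\om)$, i.e.\ the negative of the displayed statement; this reflects a differing sign convention inherited from the original sources (equivalently, defining $\la$ by $\dot\xi+Y\xi=-\la\xi$ or writing the Lax equation as $\dot X=[Y,X]$), so you should state the discrepancy explicitly rather than appeal vaguely to ``orientation'' --- otherwise your proof is complete, with the genericity and continuity remarks correctly handling pole collisions and migration to the $p_i$.
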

 
 \begin{thm}[Linearization Criterion]\lab{t3.4}
 The map $J_c$ linearizes the spectral flow $\dot X=[X,Y]$ on $A_c$, that is to say
 \[
\int^{\cD_{X(t)}}_{\cD_{X(0)}}  \vec\om=t\sum^s_{i=1} \Res_{p_i}\la(h,z,X(h,0))\vec\om,\]
if and only if there exists for each $X\in A_c$ a meromorphic function $\phi_X$ on $\ol{\Ga_c}$ with $(\phi_X)_{\olgc}\geq -n\sum^{s'}_{i=1}\mu_i p_i+n'\sum^s_{i=s'+1}\mu_i p_i$ such that for all $p_i$,
\[
(\hbox{Laurent tail of $\frac{d\la(h,z,X)}{dt}$ at $p_i$})=(\hbox{Laurent tail of $\phi_X$ at $p_i$}),\]
where $(\Delta_{k,\ell}(z,X(h,t))$ being the $(k,\ell)$ cofactor of $z\Id_N-X(h,t))$
\[
\frac{d\la(h,z,X)}{dt} = \frac{d}{dt}\Big( \sum^N_{\ell=1} Y_{1\ell}(h,z,X(h,t)) \frac{\Delta_{1\ell} (z,X(h,t))}{\Delta_{11}(z,X(h,t))}\Big),\]
$d/dt$ being computed using the Lax equation $\dot X=[X,Y]$.
\end{thm}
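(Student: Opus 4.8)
The plan is to reduce the linearization statement to a residue-vanishing condition and then to recognize that condition as a Mittag--Leffler (Serre duality) obstruction on $\olgc$. First I would record the explicit form of $\la$: normalizing the eigenvector by $\xi_1=1$ forces $\dot\xi_1=0$, so the first component of \pref{3.10} reads $\la=(Y\xi)_1=\sum_{\ell=1}^N Y_{1\ell}\xi_\ell$, and Cramer's rule $\xi_\ell=\Delta_{1\ell}/\Delta_{11}$ gives the stated formula for $\la$; differentiating along $\dot X=[X,Y]$ then produces the displayed $d\la/dt$, which is a well-defined function on $A_c$ (it is the derivative along the Lax vector field), so that the conditions below may be read pointwise on $A_c$.

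Next, by Theorem \ref{t3.3} the derivative of the linearizing map is $\frac{d}{dt}\int_{\cD_{X(0)}}^{\cD_{X(t)}}\vec\om=\sum_{i=1}^s\Res_{p_i}\la\,\vec\om$. Since the curve $\olgc$ and its points $p_i$ at infinity are fixed along the isospectral flow (Proposition \ref{p3.2}), I may differentiate under the residue sign, and the integral equals $t\sum_i\Res_{p_i}\la(h,z,X(h,0))\vec\om$ exactly when this derivative is constant in $t$, i.e.\
\[
\sum_{i=1}^s\Res_{p_i}\frac{d\la}{dt}\,\om=0\quad\text{for every holomorphic differential }\om\text{ on }\olgc,
\]
the $\om_j$ forming a basis. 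This is the key reformulation of ``$J_c$ linearizes the flow''.

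For the \emph{if} direction I would invoke the residue theorem. If $\phi_X$ is a global meromorphic function whose Laurent tails agree with those of $d\la/dt$ at every $p_i$ and which obeys the pole bound $(\phi_X)_{\olgc}\geq -n\sum_{i=1}^{s'}\mu_i p_i+n'\sum_{i=s'+1}^s\mu_i p_i$ (so that its poles lie only among $p_1,\dots,p_{s'}$), then $d\la/dt-\phi_X$ is holomorphic at each $p_i$, and its product with the holomorphic $\om$ has no residue there. Hence $\sum_i\Res_{p_i}\frac{d\la}{dt}\om=\sum_i\Res_{p_i}\phi_X\om$, and since $\phi_X\om$ is a global meromorphic $1$-form with poles confined to the $p_i$, the residue theorem forces the right-hand side to vanish; the reformulated condition holds, hence so does linearization.

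The \emph{only if} direction is the real obstacle: I must manufacture a global function on $\olgc$ with prescribed Laurent tails out of the mere vanishing of the residue pairings. This is precisely the Mittag--Leffler problem for the divisor $E:=n\sum_{i\leq s'}\mu_i p_i-n'\sum_{i>s'}\mu_i p_i$, whose solvability is controlled by a group $H^1$ that Serre duality identifies with a space of holomorphic differentials, the duality pairing being exactly $\theta\mapsto\sum_i\Res_{p_i}(\theta\,\om)$; the vanishing of all these residues then says the obstruction class is zero and delivers $\phi_X$. Two points of bookkeeping must be settled for this to run: (a) that $d\la/dt$, through the explicit cofactor formula, has poles supported only at the $p_i$ with orders bounded by $E$ --- this is read from the $h$-expansions of $X$ and $Y$ near the points at infinity, where $h$ has a zero or pole of order $\mu_i$, and it fixes the integers $n,n'$; and (b) the precise identification of the Serre-dual space with the holomorphic differentials so that the pairing becomes the stated residue sum. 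Carrying out (a) and invoking Serre duality in (b) is where the substance lies.
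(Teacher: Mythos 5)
Your proposal is correct and follows essentially the route the paper relies on: the paper states Theorem \ref{t3.4} without giving a proof, attributing it to P.~Griffiths \cite{PG}, and Griffiths's argument is exactly your reduction---via Theorem \ref{t3.3} and differentiation under the residue sign---of linearization to the vanishing of $\sum_{i=1}^{s}\Res_{p_i}\frac{d\la}{dt}\,\om$ for every holomorphic differential $\om$, with the residue theorem giving the ``if'' direction and the Mittag--Leffler/Serre-duality interpretation of that vanishing giving the ``only if'' direction. The pole-order bookkeeping you defer in your points (a) and (b) is precisely the content of the cited cohomological interpretation, so nothing in your outline diverges from the argument the paper invokes.
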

 
 \begin{cor}\lab{c3.5}
 Suppose that $h$ has no zero at infinity and that there exists a polynomial $p(x,y,z)$ whose coefficients are arbitrary constants of the motion, and that there exists an algebraic function $\Psi$, whose coefficients are arbitrary constants of the motion, such that
 \[
 Y(h) = \Psi(p(X,h,h^{-1}))+ (C_0+C_1 h^{-1} + C_2 h^{-2}+\cdots)\]
 where $C_0$ is a lower triangular matrix, and where the matrices $C_1,C_2,\dots$ are arbitrary.  If $\xi/h$ has no pole at the points $p_i$, then the linearization criterion is satisfied.
 \end{cor}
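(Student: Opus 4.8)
The plan is to verify the hypothesis of Theorem~\ref{t3.4} directly, by showing that under the stated assumptions the Laurent tails of $d\la/dt$ at every point $p_i$ at infinity vanish, so that the criterion holds trivially with $\phi_X\equiv 0$. First I would record the basic expression for $\la$: taking the first component of~\pref{3.10} and using the normalization $\xi_1=1$ (hence $\dot\xi_1=0$) gives $\la=(Y\xi)_1$, which is exactly the quantity appearing in Theorem~\ref{t3.4} after inserting $\xi_\ell=\Delta_{1\ell}/\Delta_{11}$. I would then split $\la=(Y\xi)_1$ along the decomposition $Y=\Psi(p(X,h,h^{-1}))+(C_0+C_1h^{-1}+\cdots)$ and treat the two pieces separately.

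For the algebraic piece I would use that $\xi$ is an eigenvector, $X\xi=z\xi$. Any polynomial in $X$ then acts on $\xi$ through the same polynomial in the eigenvalue, so $p(X,h,h^{-1})\xi=p(z,h,h^{-1})\xi$ (with $X$ replaced by its eigenvalue $z$), and, interpreting the algebraic function $\Psi$ through functional calculus in the formal sense of Remark~\ref{rem1}, $\Psi(p(X,h,h^{-1}))\xi=\Psi(p(z,h,h^{-1}))\,\xi$. Its contribution to $\la=(Y\xi)_1$ is thus the scalar $\Psi(p(z,h,h^{-1}))$. Because the coefficients of $p$ and of $\Psi$ are constants of the motion and the curve $\olgc$ is isospectral (hence fixed in $t$), this scalar is a time-independent function on $\olgc$; its $t$-derivative vanishes identically and it drops out of $d\la/dt$. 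This is precisely the role of the hypothesis on $\Psi(p)$: it disposes of the part of $Y$ that may be large at infinity, which is harmless exactly because it is time-independent and so invisible to $d\la/dt$.

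It then remains to show that $\frac{d}{dt}(C(h)\xi)_1$, with $C(h)=\sum_{j\ge 0}C_jh^{-j}$, has no pole at any $p_i$. The key is that $(C(h)\xi)_1$ is holomorphic at each $p_i$ for every $t$. Lower-triangularity of $C_0$ gives $(C_0\xi)_1=(C_0)_{11}\xi_1=(C_0)_{11}$, a scalar, removing the terms that could grow like $\xi_\ell$ for $\ell\ge 2$. For $j\ge1$, the hypothesis that $\xi/h$ has no pole at $p_i$ means $\mathrm{ord}_{p_i}\xi_\ell\ge\mathrm{ord}_{p_i}h$; since $h$ has no zero at infinity, $\mathrm{ord}_{p_i}h=-\mu_i$, so $\mathrm{ord}_{p_i}(C_j\xi)_1\ge-\mu_i$ while $h^{-j}$ vanishes to order $j\mu_i$, whence $\mathrm{ord}_{p_i}\big(h^{-j}(C_j\xi)_1\big)\ge(j-1)\mu_i\ge0$. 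Thus $(C(h)\xi)_1$ is holomorphic at each $p_i$. Finally, since $\olgc$ and its points at infinity are fixed along the flow, a local coordinate at $p_i$ may be chosen independent of $t$; differentiating the purely nonnegative local expansion term by term in $t$ preserves holomorphy, so $d\la/dt$ is holomorphic at every $p_i$. Its Laurent tails at infinity all vanish, the criterion of Theorem~\ref{t3.4} is met with $\phi_X\equiv 0$, and linearization follows.

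The step I expect to be most delicate is the order bookkeeping at infinity: one must check that lower-triangularity of $C_0$ together with $\xi/h$ being pole-free and $h$ having no zero at infinity really removes every negative power of the local coordinate, and that $d/dt$ may be interchanged with the local Laurent expansion --- the latter being exactly where isospectrality (the $t$-independence of $p_i$ and of the local coordinate) is used. The eigenvector functional calculus for $\Psi$ should likewise be stated as a formal identity, consistent with Remark~\ref{rem1}.
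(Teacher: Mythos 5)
The paper itself contains no proof of Corollary \ref{c3.5}: it is stated as a quoted result, with Theorem \ref{t3.4} attributed to Griffiths \cite{PG} and the corollary going back to \cite{12}, so there is no internal argument to compare yours against, and the proposal must be judged on its own merits. It is correct, and it is the natural verification that the cited sources carry out. The skeleton is right at every joint: $\la=(Y\xi)_1$ follows from \pref{3.10} with $\xi_1=1$ and matches the cofactor formula in Theorem \ref{t3.4}; the piece $\Psi(p(X,h,h^{-1}))$ acts on the one-dimensional eigenspace as the scalar $\Psi(p(z,h,h^{-1}))$, which is a \emph{fixed} function on the \emph{fixed} curve $\olgc$ because the coefficients of $p$ and $\Psi$ are constants of the motion, hence is annihilated by $d/dt$ taken at fixed $(h,z)$; and for the remainder $C(h)=\sum_{j\geq 0}C_jh^{-j}$ your order bookkeeping is exactly what makes the corollary work: $(C_0\xi)_1=(C_0)_{11}$ by lower-triangularity, while for $j\geq 1$ one gets $\mathrm{ord}_{p_i}\bigl(h^{-j}(C_j\xi)_1\bigr)\geq j\mu_i-\mu_i\geq 0$, using both that $h$ has no zero at infinity (so $\mathrm{ord}_{p_i}h=-\mu_i$ at \emph{every} $p_i$, i.e.\ $s'=s$) and that $\xi/h$ is pole-free there. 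Hence $d\la/dt$ has vanishing Laurent tails at all $p_i$ and the criterion of Theorem \ref{t3.4} is met with $\phi_X$ a constant (e.g.\ $\phi_X\equiv 0$), whose divisor trivially dominates $-n\sum_{i=1}^{s}\mu_ip_i$. The two points you flag as delicate are indeed the only ones needing care, and your resolutions are adequate at the paper's level of rigor: the functional calculus $\Psi(p(X,h,h^{-1}))\xi=\Psi(p(z,h,h^{-1}))\xi$ is legitimate because $\Psi(p(X,h,h^{-1}))$ commutes with $X$ and the eigenspace is one-dimensional, the branch being locally constant in $t$; and $d/dt$ preserves holomorphy at $p_i$ because, by isospectrality, the points at infinity and a local coordinate there can be chosen $t$-independent --- with the understanding (which you correctly adopt) that the hypothesis on $\xi/h$ holds for all $X\in A_c$, i.e.\ along the flow, which is how the corollary must be read.
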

 
 Theorem \ref{t3.4} (due to P. Griffiths \cite{PG}) can be used to show that the flows in \pref{3.2} and hence (3.3) and Theorem \ref{t3.1} linearize on $\Jac(\Ga_c)$, while Corollary \ref{c3.5} is more special but actually applies to many other cases found in \cite{4, 9, 5}, like the general periodic Toda flows.  This brings us to the following definitions, where $\{\ensp,\ensp\}$ is a Poisson bracket on the manifold $M$ and the $F_i$ are integrals in involution.
 
 \begin{defn}\lab{d3.6}
 Let $(M,\{\cdot,\cdot\},F)$ be a complex integrable system, where $M$ is a non-singular affine variety and where $F=(F_1,\dots,F_s)$.  We say that $(M,\{\cdot,\cdot\},F)$ is an \emph{algebraic completely integrable system} or an \emph{a.c.i. system} if for generaic $c\in \C^s$ the fiber $F_c$ defined by $F=c$ is an affine part of an Abelian variety and if the Hamiltonian vector fields $\cX_{F_i}$ are translation invariant, when restricted to these fibers. In the particular case in which $M$ is an affine space $C^n$ we will call $(C^m,\{\cdot,\cdot\},F)$ a \emph{polynomial a.c.i.\ system}.  When the generic Abelian variety of the a.c.i. system is irreducible we speak of an \emph{irreducible a.c.i.\ system.}
 \end{defn}
 
 \begin{defn}\lab{d3.7}
 Let $(M,\{\cdot,\cdot\},F)$ be a complex integrable system, where $M$ is a (non-singular) affine variety, and where $F=(F_1,\dots,F_s)$.  We say that $(M,\{\cdot,\cdot\},F)$ is a \emph{generalized a.c.i. system} if for generic $c\in \C^s$ the integrable vector fields $\cX_{F_1},\dots,\cX_{F_s}$ define the local action of an algebraic group on $F_c$.\end{defn}
 
 Theorems \ref{t3.4} and Corollary \ref{c3.5} are general tools to show systems like \pref{3.2} and (3.3) are a.c.i.\ systems \emph{in the appropriate coordinates} (those of $M$), which in fact are meromorphic functions on the Abelian variety $\Jac(\Ga_c)$.  For instance, in the case of Theorem \ref{t3.1}, the appropriate coordinates are $x^2_i,y^2_i$ and $x_iy_i$, $1\leq i\leq n$.  The Toda hierarchy \pref{2.12} is a generalized a.c.i.\ system, with the appropriate coordinates being the $a_i$ and $b_i$.
 
 \section{Random matrices, Limiting Distributions and $\mathrm{KdV}$}
 The point of this section is to get partial differential equations for the spectral gap probabilities for the Airy and Bessel processes, which are ``universal" limiting processes in the GUE ensemble respectively at the hard and soft edges.  In particular, for the one-interval gap case we recover respectively the Painl\'ev\`e II and V equations going with the Tracy-Widom distribution and the Bessel distribution.  Since these distribuations are ``universal,'' they appear in many other contexts in statistical mechanics.  The integrable deformation class of these universal distributions is seen to be the KdV equation and hence its vertex operator and Virasoro symmetries as well as Sato's KP theory  play  a crucial role in the derivation of the partial differential equations.

 Define on  the ensemble $\cH_N=\{N\times N$ Hermitian matrices$\}$ the probability
 \[
 P(M\in dM)=ce^{-\Tr V(M)}dM,\]
 where $c$ is a normalization constant.  Then for $z_1,\dots,z_N\in \R$ we have
 \[\displaylines{\qquad
 P\hbox{(one eigenvalue in each } [z_i,z_i+dz_i],\; i=1,\dots,N)\hfill\cr\hfill
 = c\vol (U(N)) e ^{-\sum^N_1 V(z_i)} \Delta^2 (z)dz_1\cdots dz_N\qquad}\]
 and for $0\leq k\leq N$,
 \begin{align*}
 &P\hbox{(one eigenvalue in each } [z_i,z_i+dz_i],\; i=1,\dots,k)\\
 &\qquad{} = c' \lrp{\int_{\R^{n-k}} \hspace*{-12pt}\cdots \int e^{-\sum^N_1 V(z_i)}\Delta^2(z)dz_{k+1}\cdots dz_N} dz_1\cdots dz_k\\
 &\qquad{} \stackrel{\ast}{=} c'' \det \lrp{K_N(z_i,z_j)}_{1\leq i,j\leq k} dz_1\cdots dz_k,\end{align*}
 and if $J\subset \R$, then
 \[
 P\hbox{(exactly $k$ eigenvalues in }J)=\frac{(-1)^k}{k!} \frac{\part^k}{\part \la^k}\det(I-\la K^J_N)\big|_{\la=1},\]
 where
 \[
 K^J_N(z,z') = K_N(z,z') I_J(z'),\]
 $I_J$ the indicator function of $J$, and $K_N$ is the Schwartz kernel of the orthogonal projector $\C[z]\to \C +\C z+\cdots + \C z^{N-1}$ with respect to the measure $e^{-\Tr V(z)}dz$, namely
 \[
 K_N(z,z')=\sum^{N-1}_{k=0} \vp_k (z)\vp_k(z')\]
 in terms of orthonormal functions $\vp_k(z)=e^{-\Tr V(z)/2}p_k(z)$ with respect to $dz$ or orthogonal polynomials $p_k(z)=(1/\sqrt{h_k}) z^k+\cdots $ with respect to $e^{-Tr V(z)}dz$.
 
 When $V(z)$ is quadratic, and more generally convex, we have for large $N$ \cite{26}
 \begin{align*}
 P\hbox{(an eigenvalue } \in [z,z+dz]) &= K_N(z,z)dz\\
 &\sim \bcs  \frac{1}{\pi}  (2N-z^2)^{1/2}dz&\hbox{if }|z|<(2N)^{1/2}\\
 0&\hbox{if } |z|>(2N)^{1/2}\ecs\end{align*}
 is given by the circular distribution (Wigner's semi-circe law). We have that for $z\sim 0$ the average spacing between the eigenvalues near the origin is $\sim (K_N(0,0))^{-1}=\pi/\sqrt{2N}$ and near the edge $(z\sim \sqrt{2N})$ is 
 $1/(2^{1/2}N^{1/6})$, leading to 
 \begin{align*}
 &\lim_{N\uparrow \infty} \frac{1}{K_N(0,0)} K_N\lrp{\frac{z}{K_N(0,0)},\frac{z'}{K_N(0,0)}}\\&\qquad{} =K(z,z')=
 \frac{1}{\pi} \frac{\sin \pi(z-z')}{z-z'}\quad\hbox{(bulk scaling limit),}\\
& \lim_{N\uparrow\infty} \frac{1}{K_N(0,0)} K_N  \lrp{\sqrt{2N} +\frac{z}{2^{1/2} N^{1/6}},
 \sqrt{2N}+\frac{z'}{2^{1/2}N^{1/6}} }\\&\qquad{}= \int^\infty_0 A(x+z)A(x+z')dx\quad\hbox{(edge scaling limit),}\end{align*}
 in terms of the classical Airy function.  In a similar context one also finds the Bessel kernel; for background on such matters, consult \cite{8}.
 
 This section deals with computing PDEs for the gap probabilities given by Fredholm determinants involving the limiting ``universal" Airy and Bessel kernel, which appear now in a variety of circumstances.  Miraculously these gap probabilities are given by ``continuous soliton formulas" of the KP equation and in particular the KdV equation, i.e., the 2-Gel$'$fand-Dickey equations.  This plus Virasoro constraints built into the very special KdV solutions associated with these kernels yield the PDEs.
 We shall be dealing with the KP hierarchy, briefly explained in Section 2; remember it is a hierarchy of isospectral deformations of a pseudodifferential operator $L=D+\sum_{i\geq 1} a_i(x,t)D^{-i}$, with $D:=d/dx$,
 \[
 \frac{\part L}{\part t_n}=[(L^n)_+,L]\quad\hbox{ for }t\in \C^\infty.\]
 We also consider the $p$-\gd\ hierarchy, i.e., the reduction to $L$'s such that $L^p$ is a differential operator for some fixed $p\geq 2$.  Sato tells us that the solution $L$ to  the KP equations   can ultimately be expressed in terms of a $\tau$-function.  The wave and adjoint wave functions, expressed in terms of the $\tau$-function \cite{27}
 \begin{equation}\lab{4.1} 
 \Psi(x,t,z)  = e^{xz+\sum^\infty_1 t_i z^i} \frac{\tau(t-[z^{-1}])}{\tau(t)}, \enspace
 \Psi^\ast (x,t,z) = e^{-xz-\sum^\infty_1 t_i z^i} \frac{\tau(t+[z^{-1}])}{\tau(t)} \end{equation}
 satisfy
 \begin{alignat}{5}\label{4.2} 
 z\Psi&=L\Psi &\quad z\Psi^\ast& = L^\top \Psi^\ast \\
 \frac{\part \Psi}{\part t_n}&=(L^n)+\Psi&\qquad \frac{\part \Psi^\ast}{\part t_n}&=-(L^{\top n})_+  \nonumber\Psi. \end{alignat}
 As in the general theory of integrable systems, vertex operators play a prominent role: they are Darboux transforms involving all times.  In particular, for the KP equation, the vertex operator
 \begin{equation} \lab{4.3}
 X(t,y,z) := \frac{1}{z-y} e^{\sum^\infty_1 (z^i-y^i) t_i } e^{\sum^\infty_1 (y^{-i}-z^{-i}) \frac{1}{i} \frac{\part}{\part t_i}} \end{equation}
 has the striking feature that $X(t,y,z)\tau$ and $\tau+X(t,y,z)\tau$ are both $\tau$-functions.  Given distinct roots of unity $\om,\om'\in \zeta_p := \{\om\mid\om^p = 1\}$, the vertex operator $X(t,\om z,\om' z)$ maps the space of $p$-Gel$'$fand-Dickey $\tau$-functions into itself.
 
 We also note that the $2$-\gd\ KP equation satisfied by $\tau$ is
 \begin{equation}\lab{4.4}
 \lrp{ \lrp{\frac{\part}{\part t_1}}^4 - 4\frac{\part^2}{\part t_i \part t_3}}\log \tau + 6 \lrp{\frac{\part^2}{\part t^2_1} \log\tau}^2 = 0.\end{equation}
 
 We have the two basic theorems \cite{13}:
 \begin{thm}\lab{t4.1}
 Define the $(x,t)$-dependent kernel $k_{x,t}(y,z)$ and $k^E_{x,t}(y,z):= k_{x,t}(y,z) I_E(z)$ with $x\in \R$, $t\in \C^\infty$, $y,z\in \C$, and $E\subset \R^+$ a Borel subset:
 \begin{equation} \lab{4.5}
 k_{x,t}(y,z):= \int^x dx \sum_{\om\in\zeta_p}a_\om \Psi^\ast (x,t,\om y) \sum_{\om'\in\zeta_p} b_{\om
'} \Psi(x,t,\om' z),\end{equation}
where $\Psi(x,t,z)$ and $\Psi^\ast(x,t,z)$ are the wave and adjoint wave function for the $p$-Gel\,{\cprime}fand-Dickey hierarchy and where the coefficients $a_\om,b_\om\in\C$ are subjected to $\sum_{\om\in \zeta_p} \frac{a_\om b_\om}{\om}=0$.  Then the following holds:
{\setbox0\hbox{(ii)}\leftmargini=\wd0 \advance\leftmargini\labelsep
 \begin{itemize}
\item[(i)] The kernel $k(y,z)$, its determinant and its Fredholm determinant are all three expressible in terms of the vertex operator
\begin{equation}\lab{4.6}
Y(t,y,z) := \sum_{\om,\om'\in \zeta_p} a_\om b_{\om'} X(t,\om y,\om' z)\end{equation}
action on the underlying $\tau$-function:
\begin{align} \label{4.7}
k_{x,t}(y,z)&= \frac{1}{\tau} Y(t,y,z)\tau\\
\det (k_{x,t} (y_i,z_j))_{1\leq i, j\leq n} &= \frac{1}{\tau} Y(t,y_1,z_1)\cdots Y(t,y_n,z_n)\tau\nn \\\nn
\det (I-\mu k^E_{x,t}) &=\frac{1}{\tau} e^{-\mu \int_E dz\; Y(t,z,z) }\tau \quad\hbox{(``continuous" soliton formula)}.\end{align}
\item[(ii)] Let the kernel $k_{x,t}(y,z)$ in \pref{4.5} be such that the underlying $\tau$-function of $\Psi$ and $\Psi^\ast$ satisfies a Virasoro constraint:\footnote{Define $W^{(0)}_n=\delta_{n,0}$,
\[
J^{(1)}_n:= W^{(1)}_n = \bcs \part/\part t_n&\hbox{if }n>0\\
(-n)t_{-n}&\hbox{if }n<0\\
0&\hbox{if }n=0\ecs,\quad J^{(2)}_n :=W^{(2)}_n+(n+1)W^{(1)}_n=\sum_{i+j=n} : J^{(1)}_i J^{(1)}_j:\]}
\[W^{(2)}_{kp}\tau = c_{kp}\tau\hbox{ for a fixed }k\geq -1.\]
Then for the disjoint union $E=\bigcup^r_{i=1}[a_{2i-1},a_{2i}]\subset \R_+$, the Fredholm determinant $\det(I-\mu k^E_{x,t})$ satisfies the following Virasoro constraint for that same $k\geq -1$:
\begin{equation}\lab{4.8}
\lrp{-\sum^{2r}_{i=1} a^{kp+1}_i \frac{\part}{\part a_i} +\frac12 (W^{(2)}_{kp} -c_{kp} )} \lrp{\tau\det(I-\mu k^E_{x,t})} = 0;\end{equation}
note the boundary $a$-part and the time $t$-part decouple.

\end{itemize}
} \end{thm}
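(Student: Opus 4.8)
The plan is to reduce the whole statement to a single identity expressing a product of wave functions as the action of the bare vertex operator $X(t,y,z)$ of \pref{4.3} on $\tau$, and then to propagate that identity through the vertex-operator algebra.

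\emph{Step 1: the basic identity.} First I would insert \pref{4.1} to obtain, for spectral parameters $p,q$,
\[
\Psi^\ast(x,t,p)\,\Psi(x,t,q)=e^{x(q-p)+\sum_1^\infty(q^i-p^i)t_i}\,\frac{\tau(t+[p^{-1}])\,\tau(t-[q^{-1}])}{\tau(t)^2},
\]
whereas \pref{4.3} gives $\dfrac{X(t,p,q)\tau}{\tau}=\dfrac{1}{q-p}\,e^{\sum_1^\infty(q^i-p^i)t_i}\,\dfrac{\tau(t+[p^{-1}]-[q^{-1}])}{\tau(t)}$. The first line of \pref{4.7} is then, after setting $p=\om y$, $q=\om' z$ and summing against $a_\om b_{\om'}$, the assertion that the $x$-primitive of the product equals the vertex expression,
\[
\int^x \Psi^\ast(x',t,p)\,\Psi(x',t,q)\,dx'=\frac{X(t,p,q)\tau}{\tau},
\]
which is exactly the differential Fay (Hirota bilinear) identity for $\tau$; I would check it by differentiating the right-hand side in $x=t_1$ and matching the two sides as Laurent series in $p,q$. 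The hypothesis $\sum_\om a_\om b_\om/\om=0$ is needed only to make the diagonal limit $y\to z$ finite: in $Y(t,z,z)$ the dangerous $\om=\om'$ terms carry a factor $1/(\om(z-y))$ whose leading pole is proportional to $\sum_\om a_\om b_\om/\om$, and it also fixes the $x$-integration constant of the summed primitive.

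\emph{Step 2: determinant and Fredholm determinant.} For the second line of \pref{4.7} I would exploit the operator-product expansion of the $X$'s: commuting the shift part of each factor past the multiplication parts of the others produces a Cauchy-type scalar, so that $\dfrac{X(t,p_1,q_1)\cdots X(t,p_n,q_n)\tau}{\tau}$ factors as a Cauchy determinant in the $(p_i,q_j)$ multiplied by one shifted $\tau$; this is precisely $\det\!\big(\tfrac1\tau X(t,p_i,q_j)\tau\big)$. Summing against the $a$'s and $b$'s and using Step 1 turns it into $\det(k_{x,t}(y_i,z_j))$. The third line is then the Fredholm expansion
\[
\det(I-\mu k^E_{x,t})=\sum_{n\ge0}\frac{(-\mu)^n}{n!}\int_{E^n}\det\big(k_{x,t}(z_i,z_j)\big)_{1\le i,j\le n}\,dz_1\cdots dz_n,
\]
into which I substitute the determinant formula and resum the exponential series to obtain $\tfrac1\tau\,e^{-\mu\int_E Y(t,z,z)\,dz}\tau$; finiteness of the integrand $Y(t,z,z)$ is supplied by the constraint of Step 1.

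\emph{Step 3: the Virasoro constraint.} Multiplying the last line of \pref{4.7} by $\tau$ gives $\tau\det(I-\mu k^E_{x,t})=e^{-\mu\int_E Y(t,z,z)\,dz}\tau$. I would then apply $\tfrac12 W^{(2)}_{kp}$, use $W^{(2)}_{kp}\tau=c_{kp}\tau$, and reduce to the commutator of $\tfrac12 W^{(2)}_{kp}$ with $\int_E Y(t,z,z)\,dz$. The mechanism is that $X(t,y,z)$ is a Virasoro primary field and that the sums over $\zeta_p$ single out precisely the generators $W^{(2)}_{kp}$, for which the diagonal commutator takes the transport form
\[
\Big[\tfrac12 W^{(2)}_{kp},\,Y(t,z,z)\Big]=\frac{\part}{\part z}\Big(z^{kp+1}Y(t,z,z)\Big)+(\hbox{multiple of }Y),
\]
the multiple-of-$Y$ piece being absorbed into the $c_{kp}$ term. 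Integrating the total $z$-derivative over $E=\bigcup_{i=1}^r[a_{2i-1},a_{2i}]$ yields the boundary sum $\sum_{i=1}^{2r}(\pm)\,a_i^{kp+1}Y(t,a_i,a_i)$; since $\part/\part a_i$ applied to the Fredholm determinant brings down the diagonal kernel $\tfrac1\tau Y(t,a_i,a_i)\tau$ at that endpoint, this boundary sum is exactly $\sum_{i=1}^{2r} a_i^{kp+1}\,\part/\part a_i$ acting on $\tau\det(I-\mu k^E_{x,t})$. Collecting the two contributions gives \pref{4.8}, with the boundary $a$-part and the time $t$-part manifestly decoupled.

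\emph{Main obstacle.} The hard part will be Step 3, namely pinning down the commutation identity between $W^{(2)}_{kp}$ and the diagonal vertex operator $Y(t,z,z)$: I must verify that the transport term is exactly the total $z$-derivative of $z^{kp+1}Y$ (so that the exponent $kp+1$ and the admissible range $k\ge-1$ emerge correctly) and that the remaining non-derivative term is precisely the $c_{kp}$ contribution. This relies on the conformal-primary transformation law of the bare operator $X(t,y,z)$ together with the $\zeta_p$-reduction, and it must be carried out in the delicate diagonal limit that is regularized by $\sum_\om a_\om b_\om/\om=0$. By comparison the Fay-identity input of Step 1 and the Cauchy/OPE computation of Step 2 are technical but standard.
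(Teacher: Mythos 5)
First, a point of comparison: the paper does not prove Theorem \ref{t4.1} at all; it is stated as imported from \cite{13}, so the benchmark is the Adler--Shiota--van Moerbeke proof there, and your three-step architecture is exactly that proof: the differential Fay identity in the form $\Psi^\ast(x,t,y)\Psi(x,t,z)=\frac{\part}{\part x}\bigl(X(t,y,z)\tau/\tau\bigr)$ for the first line of \pref{4.7}, determinantal (higher) Fay identities plus the Fredholm expansion for the second and third lines, and the commutation of $\frac12 W^{(2)}_{kp}$ with the regularized diagonal vertex operator for \pref{4.8}. You also correctly identify the role of the constraint $\sum_{\om\in\zeta_p}a_\om b_\om/\om=0$: the $\om=\om'$ terms of $Y(t,y,z)$ carry the pole $\frac{a_\om b_\om}{\om(z-y)}$ as $y\to z$, and the constraint is exactly what makes the diagonal $Y(t,z,z)$ finite. (Your side remark that it also fixes the $x$-integration constant is not right --- that is a separate normalization of the primitive --- but nothing rests on it.)

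Two corrections of substance. In Step 2, the identity $\frac1\tau\prod_k X(t,p_k,q_k)\tau=\det\bigl(\frac1\tau X(t,p_i,q_j)\tau\bigr)_{i,j}$ is not produced by OPE/Cauchy algebra alone: the OPE computes the left side as a Cauchy--Vandermonde prefactor times the \emph{single} multiply-shifted $\tau\bigl(t+\sum_k([p_k^{-1}]-[q_k^{-1}])\bigr)$, whereas the right side is a signed sum of products of \emph{singly}-shifted $\tau$'s; their equality is a higher Fay identity, i.e.\ it uses that $\tau$ solves the KP bilinear identity, and must be invoked as such (after which multilinearity of the determinant in rows and columns converts products of $X$'s into products of $Y$'s, and your resummation gives the soliton formula). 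In Step 3, your fallback that a non-derivative ``multiple of $Y$'' in the commutator can be ``absorbed into the $c_{kp}$ term'' would fail: $c_{kp}$ enters only through the eigenvalue equation $W^{(2)}_{kp}\tau=c_{kp}\tau$ after $\frac12 W^{(2)}_{kp}$ has been commuted through $e^{-\mu\int_E Y(t,z,z)\,dz}$ onto $\tau$, while a leftover multiple of $Y$ would contribute a term proportional to $\int_E(\cdot)\,Y\,dz$ in the exponent and would destroy \pref{4.8} and its decoupling of the $a$-part from the $t$-part. What saves the theorem --- and what is the actual central lemma of \cite{13}, precisely the step your proposal defers as the ``main obstacle'' --- is that for the $\zeta_p$-symmetrized generators and under the same constraint $\sum_\om a_\om b_\om/\om=0$ the commutator is the clean transport term $\bigl[\frac12 W^{(2)}_{kp},\,Y(t,z,z)\bigr]=\frac{\part}{\part z}\bigl(z^{kp+1}Y(t,z,z)\bigr)$ with no remainder, for $k\ge-1$; integrating over $E=\bigcup_i[a_{2i-1},a_{2i}]$ then yields the boundary sum that cancels $-\sum_i a_i^{kp+1}\part/\part a_i$ acting on the integration limits, exactly as you describe. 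So your route is the right one and coincides with the source's, but as written the proposal leaves this load-bearing identity unverified, and its proposed escape hatch in case the identity has a remainder is not viable.
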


In the next theorem we apply equation \pref{4.8} to compute the partial differential equations for the distribution of the spectrum for matrix ensembles whose probability is given by a kernel.  To state the problem, consider a first-order differential operator $A$ in $z$ of the form
\begin{equation}\lab{4.9}
A=A_z = \frac12 z^{-m+1}\lrp{\frac{\part}{\part z}+V'(z)} + \sum_{i\geq 1}c_{-2i} z^{-2i},\end{equation}
with
\begin{equation}\lab{4.10}
V(z)=\frac{\alpha}{2} z+\frac{\beta}{6} z^3\not\equiv 0,\quad m=\deg V' = 0\hensp{or} 2,\end{equation}
and the differential part of its ``Fourier" transform
\begin{equation}\lab{4.11}
\hat A=\hat A_x =\lrp{\frac12 (x+V'(D)) D^{-m+1}+\sum_{i\geq 1} c_{-2i} D^{-2i}}_+\hensp{with} D=\frac{\part}{\part x}.\end{equation}
Given a disjoint union $E=\bigcup^r_{i=1} [A_{2i-1}, A_{2i}]\subset \R^+$, define differential operators $\cA_n$, which we declare to be of homogeneous ``weight" $n$, as follows:
\[\cA_n:= \sum^{2r}_{i=1} A_i^{\frac{n+1-m}{2}} \frac{\part}{\part A_i},\qquad n=1,3,5,\dots\;.\]
We can now state the second main theorem.

\begin{thm}\lab{t4.2}
Let $\Psi(x,z)$, $x\in\R$, $z\in \C$ be a solution of the linear partial differential equation
\begin{equation}\lab{4.12}
A_z\Psi(x,z) = \hat A_x \Psi(x,z),\end{equation}
with holomorphic (in $z^{-1}$) initial condition at $x=0$, subjected to the following differential equation for some $a,b,c\in \C$,
\begin{equation}\lab{4.13}
(a A^2_z +b A_z +c)\Psi(0,z) = z^2\Psi(0,z),\hensp{with} \Psi(0,z)=1+O(z^{-1}).\end{equation}
Then
{\setbox0\hbox{()}\leftmargini=\wd0 \advance\leftmargini\labelsep
 \begin{itemize}
\item $\Psi(x,z)$ is a solution of a second order problem for some potential $q(x)$
\begin{equation}\lab{4.14}
(D^2+q(x))\Psi(x,z) = z^2\Psi(x,z).\end{equation}
\item Given the kernel
\begin{equation}\lab{4.15}
K^E_x (y,z):= I_E(z)\int^x \frac{\Phi(x,\sqrt{y})\Phi(x,\sqrt{z})}{2y^{1/4} z^{1/4}} dx,\end{equation}
with
\[
\Phi(x,u) := \sum_{\om = \pm 1} b_\om e^{\om V(u)} \Psi(x,\om u),\]
the Fredholm determinant $f(A_1,\dots,A_{2r}):= \det (I-\la K^E_x)$ satisfies a hierarchy of bilinear partial differential equations\footnote{The $p_i$ are the elementary Schur polynomials $e^{\sum^\infty_1 t_i z^i} = \sum^\infty_0 p_n(t)z^n$, and $p_i(\pm \tilde\cA):= p_i(\pm \cA_1,0,\pm \frac13 \cA_3,0,\dots)$.} in the $A_i$ for odd $n\geq 3$:
\begin{align}
\label{4.16}
&f\cdot \cA_n \cA_1 f-\cA_n f\cdot \cA_1 f- \sum_{i+j=n+1} p_i (\tilde \cA) f\cdot p_j (-\tilde\cA)f\\
\nn &\qquad\quad{}+ \hbox{(terms of lower weight $i$ for } 1\leq i\leq n)=0,\end{align}
where $x$ appears in the coefficients of lower weight terms only.\end{itemize}}   \end{thm}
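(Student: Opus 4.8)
The plan is to reduce everything to Theorem~\ref{t4.1}, which already packages the kernel, its minors, and its Fredholm determinant in terms of a vertex operator acting on a $\tau$-function, together with the Virasoro constraint~\pref{4.8}. Concretely, I would first identify the kernel $K^E_x$ of~\pref{4.15} as the $p=2$ specialization of the kernel $k_{x,t}$ of~\pref{4.5}: here $\zeta_2=\{+1,-1\}$, the inner sum $\Phi(x,u)=\sum_{\om=\pm1}b_\om e^{\om V(u)}\Psi(x,\om u)$ is exactly the $\om$-combination appearing in~\pref{4.5}, and the change of spectral variable $z\mapsto\sqrt z$ together with the weights $y^{1/4}z^{1/4}$ implements the square-root map from the KdV spectral parameter to the variable of~\pref{4.5}. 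Thus $f=\det(I-\la K^E_x)$ is, up to the factor $1/\tau$, the ``continuous'' soliton formula $e^{-\la\int_E Y}\tau$ of~\pref{4.7}, i.e.\ a KP $\tau$-function in the boundary points $A_i$ and the KP times $t$ (with $x=t_1$).

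For the first bullet I would exploit the bispectral relation~\pref{4.12}. Since $A_z$ acts only in $z$ and $\hat A_x$ only in $x$, the two commute as operators, so~\pref{4.12} propagates to $A_z^k\Psi=\hat A_x^k\Psi$ for every $k$ by induction. Applying the quadratic operator of~\pref{4.13} and evaluating at $x=0$ then transfers the second-order $z$-ODE into the $x$-side identity $(a\hat A_x^2+b\hat A_x+c)\Psi=z^2\Psi$. From the explicit form~\pref{4.11} of $\hat A_x$ and the wave-function normalization $\Psi=(1+O(z^{-1}))e^{xz+\cdots}$, the resulting second-order operator in $x$ is, after the evident gauge normalization, a Schr\"odinger operator $D^2+q(x)$ with eigenvalue $z^2$; equivalently, the hypotheses force the $2$-reduction $L^2=D^2+q$ to be a genuine differential operator, whence $L^2\Psi=z^2\Psi$ is precisely~\pref{4.14}, and $q$ is read off from the $O(z^{-1})$ coefficient of $\Psi$.

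The same computation shows that the string equation~\pref{4.12}--\pref{4.13} is nothing but a Virasoro constraint on the underlying $\tau$: the operator $A_z=\tfrac12 z^{-m+1}(\part_z+V'(z))+\cdots$ of~\pref{4.9} is the vertex-operator realization of a generator $W^{(2)}_{kp}$ (with $p=2$, the index shift governed by $m$ and $\deg V'$), so that~\pref{4.13} becomes $W^{(2)}_{kp}\tau=c_{kp}\tau$ for the relevant $k\ge-1$. This is exactly the hypothesis of part~(ii) of Theorem~\ref{t4.1}, which therefore yields the Virasoro constraint~\pref{4.8} for $\tau\det(I-\la K^E_x)=\tau f$, with the boundary $a$-part and the time $t$-part decoupling.

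It remains to convert~\pref{4.8}, together with the bilinear (Hirota) identities satisfied by the $\tau$-function $\tau f=e^{-\la\int_E Y}\tau$, into the graded hierarchy~\pref{4.16}. The mechanism is that the decoupling in~\pref{4.8} lets one trade each KP time-derivative $\part/\part t_n$ acting on $f$ for the boundary vector field $\cA_n=\sum_i A_i^{(n+1-m)/2}\part/\part A_i$, modulo operators of strictly lower weight; the weight assignment is designed precisely so that the $2$-\gd\ bilinear hierarchy is homogeneous. Feeding these substitutions into the Hirota form of the KP equations, whose polynomial coefficients are the elementary Schur polynomials $p_i(\pm\tilde\cA)$, reproduces the leading-weight combination $f\cdot\cA_n\cA_1 f-\cA_n f\cdot\cA_1 f-\sum_{i+j=n+1}p_i(\tilde\cA)f\cdot p_j(-\tilde\cA)f$, with all corrections collected into the lower-weight terms; since $x=t_1$ is the lowest time, it enters only through those lower-weight coefficients. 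I expect the main obstacle to be exactly this last translation: iterating the single Virasoro relation~\pref{4.8} to express all the needed higher $t$-derivatives of $f$ through the $A_i$-derivatives, and bookkeeping the weight grading carefully enough that the top-weight part matches~\pref{4.16} on the nose while every commutator correction lands in the lower-weight remainder.
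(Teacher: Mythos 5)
Your proposal is correct and follows essentially the paper's own route (the one carried out explicitly in the proofs of Theorems \ref{t4.3} and \ref{t4.4} and in the cited source \cite{13}): specialize the vertex-operator/soliton formula of Theorem \ref{t4.1} to $p=2$ via the square-root change of spectral variable, recognize \pref{4.12}--\pref{4.13} as the Virasoro constraint $W^{(2)}_{kp}\tau = c_{kp}\tau$ so that \pref{4.8} applies with its decoupling of boundary and time parts, and then trade $t$-partials for the boundary operators $\cA_n$ modulo lower-weight terms before substituting into the Hirota form of the $2$-\gd\ hierarchy. One minor caution: commutativity of $A_z$ and $\hat A_x$ by itself does not propagate \pref{4.13} from $x=0$ to all $x$ (since $z^2$ does not commute with $A_z$, the naive evolution argument fails); the clean mechanism is the one you also invoke, namely that the hypotheses force $\Psi$ to be a KdV wave function with the $2$-reduction $L^2=D^2+q$ a genuine differential operator, so that \pref{4.14} is $L^2\Psi=z^2\Psi$.
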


As an application of Theorem \ref{t4.2} we find \cite{13}
\begin{thm}\lab{t4.3}
Given the Airy kernel
\[
K^E_x(y,z)=I_E(z)\frac{1}{2\pi} \int^x A(x+y)A(x+z)dx,\]
the Fredholm determinant $f(A_1,\dots,A_{2r}):= \det(I-\la K^E_x)$ satisfies the hierarchy of bilinear partial differential equations in the $A_i$ for odd $n\geq 3$:
\begin{equation}\lab{4.17}
f\cdot \cA_n \cA_1 f-\cA_n f\cdot \cA_1f-\sum_{i+j=n+1} p_i (\tilde\cA) f\cdot p_j(-\tilde\cA) =0,\end{equation}
with
\begin{equation}\lab{4.18}
\cA_n = \sum^{2r}_{i=1} A^{\frac{n-1}{2}}_i \frac{\part}{\part A_i}, \qquad n=1,3,5,\dots\;.\end{equation}
The variables appearing in the Schur polynomials $p_i$ are non-commutative and are written according to a definite order.  Finally, the first equation in the hierarchy {\rm (4.16)} takes on the following form:
\begin{equation}\lab{4.19}
\lrp{\cA^3_1 - 4\lrp{\cA_3-\frac12}} R+6(\cA_1 R)^2=0\end{equation}
for
\[
R:= \cA_1 \log f = \sum^{2r}_1 \frac{\part\log \det(I-\la K^E)}{\part A_i}.\]
When $E=(-\infty,A)$, the function $R=\cA_1\log f=\frac{\part}{\part A} \log\det(I-K^E)$ satisfies
\[
R^{'''} - 4A R' +2R + 6R^{'2}=0 \quad \hbox{{\rm (Painlev\'e II)}}.\]\end{thm}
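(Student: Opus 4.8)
The plan is to obtain Theorem \ref{t4.3} as the specialization of Theorem \ref{t4.2} to the Airy situation, and then to reduce the first member of the resulting hierarchy to Painlev\'e II. First I would identify the relevant data: one takes the KdV reduction (the case $p=2$, i.e.\ $2$-\gd) with cubic potential $V(z)=\frac{\alpha}{2}z+\frac{\beta}{6}z^3$, so that $V'$ has degree two and hence $m=\deg V'=2$; substituting $m=2$ into the generic weight $\frac{n+1-m}{2}$ of the operators $\cA_n$ defined just before Theorem \ref{t4.2} produces exactly the exponent $\frac{n-1}{2}$ of \pref{4.18}. The structural reason the Airy kernel respects this homogeneous-weight scheme is the self-similarity $A''(s)=sA(s)$ of the Airy function, which makes $A(x+z)$ transform covariantly under the scalings generated by the $\cA_n$.

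Next I would verify the hypotheses \pref{4.12} and \pref{4.13} of Theorem \ref{t4.2}. One takes $\Psi(x,z)$ to be the wave function built from the Airy function, essentially $\Psi(x,z)\propto A(x+z^2)$ renormalized by a $z$-only prefactor so that $\Psi(0,z)=1+O(z^{-1})$ holomorphically in $z^{-1}$; the identity $A''(s)=sA(s)$ then furnishes both the linear PDE $A_z\Psi=\hat A_x\Psi$ of \pref{4.12} (with $\Psi$ consequently solving the companion Schr\"odinger equation $(D^2+q)\Psi=z^2\Psi$ of \pref{4.14} for a linear potential $q$) and, read at $x=0$, the string equation \pref{4.13}, the latter fixing the constants $a,b,c$ together with the $c_{-2i}$. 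For an appropriate choice of the weights $b_\om$ and of $V$, the substitution $u=\sqrt z$ in $\Phi(x,u)=\sum_{\om=\pm1}b_\om e^{\om V(u)}\Psi(x,\om u)$ reduces the kernel \pref{4.15} (up to the normalization $\frac1{2\pi}$) to the Airy kernel $I_E(z)\int^x A(x+y)A(x+z)\dx$ of the theorem. This verification---the precise dictionary between the Airy function and the data $(A_z,\hat A_x,a,b,c)$ of Theorem \ref{t4.2}---is the step I expect to be the main obstacle, since it is what certifies that the hypotheses genuinely hold.

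Granting the hypotheses, Theorem \ref{t4.2} immediately supplies the bilinear hierarchy \pref{4.16}, which here becomes \pref{4.17} with $\cA_n$ as in \pref{4.18}. To extract the explicit first equation \pref{4.19} I would set $n=3$ and expand the Schur polynomials at $\tilde\cA=(\cA_1,0,\tfrac13\cA_3,0,\dots)$ using $p_0=1$, $p_1(\tilde\cA)=\cA_1$, $p_2(\tilde\cA)=\tfrac12\cA_1^2$, and $p_3(\tilde\cA)=\tfrac16\cA_1^3+\tfrac13\cA_3$, respecting the prescribed order of the noncommuting operators $\cA_i$. Assembling the five terms of $\sum_{i+j=4}p_i(\tilde\cA)f\cdot p_j(-\tilde\cA)f$ together with the explicitly computed lower-weight contributions and dividing through by $f^2$, everything reorganizes in terms of $R=\cA_1\log f$ (for instance the leading pair gives $\cA_3 R$); the careful bookkeeping of those lower-weight terms is the principal computational labor, and it yields $\lrp{\cA_1^3-4(\cA_3-\tfrac12)}R+6(\cA_1R)^2=0$, which is \pref{4.19}.

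Finally, for the one-sided interval $E=(-\infty,A)$ there is a single finite endpoint, so \pref{4.18} gives $\cA_1=\part/\part A$ and $\cA_3=A\,\part/\part A$. Writing $R'=dR/dA$, we then have $\cA_1^3R=R'''$ and $\cA_3R=AR'$, so \pref{4.19} becomes $R'''-4AR'+2R+6R'^2=0$, the Painlev\'e II equation, which completes the argument.
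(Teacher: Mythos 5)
Your treatment of the hierarchy part tracks the paper: you specialize Theorem \ref{t4.2} with the cubic potential (the paper takes $V(z)=\frac{2}{3}z^3$, so $m=2$ and the exponent $\frac{n+1-m}{2}$ becomes $\frac{n-1}{2}$ of \pref{4.18}), take $\Psi(x,z)$ proportional to $\sqrt{z}\,e^{-\frac{2}{3}z^3}A(x+z^2)$ normalized to $1+O(z^{-1})$, verify \pref{4.12}--\pref{4.13} (the paper exhibits $A_z=\frac{1}{2z}\lrp{\frac{\part}{\part z}+2z^2}-\frac14 z^{-2}$, $\hat A=\frac{\part}{\part x}$, and $\hat A^2\Psi(0,z)=z^2\Psi(0,z)$), and recover the Airy kernel via $b_+=1$, $b_-=0$. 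One caveat even here: passing from \pref{4.16}, which carries lower-weight terms, to the clean \pref{4.17} is a real step --- the paper invokes a rewriting of the variables of $p_n$ justified by ``a combinatorial argument'' --- which your word ``immediately'' skates over; you cannot ``explicitly compute'' those lower-weight terms from the statement of Theorem \ref{t4.2}, since it does not specify them.

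The genuine gap is in your derivation of \pref{4.19}. You propose to set $n=3$ in \pref{4.17} and expand the Schur polynomials ``respecting the prescribed order'' of the noncommuting $\cA_i$ --- but no order is prescribed independently of the result: the theorem deliberately says only that the variables ``are written according to a definite order,'' and that order is precisely what must be determined. The ambiguity is not cosmetic. Since $[\cA_1,\cA_3]=\cA_1$, each reordering shifts the equation by multiples of $R=\cA_1\log f$: a commutative expansion gives $(\cA_1^3-4\cA_3)R+6(\cA_1R)^2=0$, while replacing $\cA_3\cA_1$ by $\cA_1\cA_3$ gives $-4R$ instead of the required $+2R$, so no choice of ordering of the written terms, absent further input, certifies the constant $-\frac12$ in $-4\lrp{\cA_3-\frac12}$. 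The paper obtains that constant by an entirely different mechanism: from \pref{4.8} with $p=2$ and the shift $t_3\to t_3+\frac23$ it writes the Virasoro constraints $\cA_1\log\tau^E=\frac12\lrp{\sum_{i\geq3}it_i\frac{\part}{\part t_{i-2}}+2\frac{\part}{\part t_1}}\log\tau^E+\frac{t_1^2}{4}$ and $\cA_3\log\tau^E=\frac12\lrp{\sum_{i\geq1}it_i\frac{\part}{\part t_i}+2\frac{\part}{\part t_3}}\log\tau^E+\frac{1}{16}$, inverts them along $t=0$ to express the $t_1$-, $t_3$- and mixed $t_1t_3$-partials of $\log\tau^E$ in terms of $\cA_1,\cA_3$, and substitutes into the KP equation \pref{4.4}; the inhomogeneous constants $\frac{t_1^2}{4}$, $\frac{1}{16}$ and the $t_3$-shift are exactly what produce the extra $2R$. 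Without this Virasoro input your bookkeeping cannot fix the coefficient of $R$, so the expansion route as you describe it does not prove \pref{4.19}. Your final specialization $\cA_1=\part/\part A$, $\cA_3=A\,\part/\part A$ for $E=(-\infty,A)$, yielding $R'''-4AR'+2R+6R'^2=0$, is correct and agrees with the paper.
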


\begin{proofn}
The Painlev\'e II equation for the logarithmic derivative has been obtained previously by Tracy and Widom \cite{TW}; equation \pref{4.19}, which leads to Painlev\'e, is new.  Setting in \pref{4.10} $V(z)=\frac{2}{3} z^3$, from \pref{4.9} and \pref{4.11} find
\begin{equation}\lab{4.20}
A: = \frac{1}{2z} \lrp{\frac{\part }{\part z}+2z^2} -\frac14 z^{-2} \hensp{and} \hat A=\frac{\part}{\part x}.\end{equation}
Then in terms of the Airy function
\[
F(u) := \int^\infty_{-\infty} e^{-\frac{y^3}{3}+ y u} dy,\]
$\Psi(x,z)$ has the following expression:
\[
\Psi(x,z)=\frac{1}{\sqrt{\pi}} e^{-\frac{2}{3} z^3} \sqrt{z} F(x+z^2)\]
and is a solution of
\[
A\Psi (x,z) = \hat A\Psi(x,z),\]
with $\Psi(0,z)$ satisfying
\[
\hat A^2 \Psi(0,z)=z^2 \Psi(0,z)\hensp{and} \Psi (0,z)=1+O(z^{-1}).\]
Setting in \pref{4.5} $b_+=1$ and $b_-=0$, we find for $\Phi(x,u)$ and $K^E_x(y,z)$ in \pref{4.15},
\[
\Phi(x,u) = \frac{\sqrt{u}}{\sqrt\pi}  A(x+u^2).\]
Then from Theorem \ref{t4.2}, $f(A_1,\dots,A_{2r}):= \det(I-\la K^E_x)$ satisfies the hierarchy of equations (4.16), with lower weight terms; the $\cA_n$ are defined by \pref{4.18}.  However, upon rewriting the variables of $p_n$ in an appropriate way, all lower weight terms can be removed, as follows from a combinatorial argument.

Finally from \pref{4.8} with $p=2$, $kp\to k$, $a^{kp}_i\to A^k_i$, ``$E\to E^2$,"  $t_3\to t_3+\frac{2}{3}$, $\tau^E:= \tau\det (I-\la K^E_{x,t})$ satisfies 
\begin{align*}
\cA_1 \log\tau^E&=\frac12 \lrp{\sum_{i\geq 3} it_i \frac{\part}{\part t_{i-2}} + 2\frac{\part}{\part t_1}} \log \tau^E + \frac{t^2_1}{4}\\
\cA_3\log\tau^E&= \frac12\lrp{\sum_{i\geq 1} it_i \frac{\part}{\part t_i}+2\frac{\part}{\part t_3}} \log\tau^E+ \frac{1}{16}\end{align*}
from which the partial derivatives $\frac{\part}{\part t_i}\log\tau^J$, $\frac{\part}{\part t_3}\log \tau^J$ and $\frac{\part^2}{\part t_1 \part t_3}\log \tau^J$ at $t= 0$ can be extracted.  Putting these partials in the KP-equation \pref{4.4} leads to the equation \pref{4.19}.  In the particular case of a semi-infinite interval $(-\infty,A)$, one finds the Painlev\'e II equation.\end{proofn}

We also have another application \cite{13} of Theorem \ref{t4.2}:
\begin{thm}\lab{t4.4}
Given the (Bessel) kernel
\begin{equation}\lab{4.21}
K^E (y,z) = -\frac{1}{2} I_E(z)\int^1_0 sJ_\nu (s\sqrt{y})J_\nu (s\sqrt{z})ds,\end{equation}
the Fredholm determinant $f(A_1,\dots,A_{2r}):= \det(I-\la K^E_x)$ satisfies the hierarchy  {\rm (4.16)} of bilinear partial differential equations in the $A_i$ for odd $n\geq 3$, with
\begin{equation}\lab{4.22}
\cA_n:= \sum^{2r}_{i=1} A^{\frac{n+1}{2}} \frac{\part}{\part A_i},\qquad n=1,3,5,\dots \;.\end{equation}
The first equation in the hierarchy {\rm (4.16)} for $F:=\log\det(I-\la K^E)$ takes on the following form:
\begin{align}\label{109}
 \lrp{\cA^4_1 - 2\cA^3_1 + (1-\nu^2)\cA^2_1 +\cA_3 \lrp{\cA_1-\frac12}} F\qquad&\\
 -4(\cA_1 F)(\cA^2_1 F) +6(\cA^2_1 F)^2=0;&\notag\end{align}
when $E=(0,A)$, we have for $R: =-\cA_1 F=-A\frac{\part}{\part A}\log\det (I-\la K^E)$, the equation \cite{TWb}
\[
A^2 R^{'''} +AR'' +(A-\nu^2)R' -\frac{R}{2} +4RR' -6AR^{'2} =0 \quad\hbox{{\rm (Painlev\'e V)}}.\]
\end{thm}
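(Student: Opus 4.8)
The plan is to run the same machinery used in the proof of Theorem~\ref{t4.3}, but now feeding Theorem~\ref{t4.2} the data that produces the Bessel kernel \pref{4.21} in place of the Airy data. Concretely, I would work at the \emph{hard edge}, where the natural choice in \pref{4.9}--\pref{4.11} is $m=0$, $V$ linear (so that $V'$ is the nonzero constant forced by $\deg V'=0$), and the single coefficient $c_{-2}$ carrying the parameter $\nu$. With these choices the operator $A_z$ in \pref{4.9} becomes a first-order operator of the shape $\frac12 z(\partial_z+V')+c_{-2}z^{-2}$, while $\hat A_x$ is its differential ``Fourier'' transform \pref{4.11}. The corresponding solution $\Psi(x,z)$ of \pref{4.12} is then built from the Bessel function $J_\nu$ (the hard-edge analogue of the Airy function $F$ in the proof of Theorem~\ref{t4.3}), and the associated $\Phi(x,u)=\sum_{\om=\pm1}b_\om e^{\om V(u)}\Psi(x,\om u)$ collapses the kernel \pref{4.15} to the Bessel kernel \pref{4.21} after the substitutions $y\mapsto\sqrt y$, $z\mapsto\sqrt z$ and a rescaling of the integration variable.

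The crucial hypothesis to verify is the quadratic initial condition \pref{4.13}. Here I would check that $\Psi(0,z)$ satisfies $(aA_z^2+bA_z+c)\Psi(0,z)=z^2\Psi(0,z)$ with $\Psi(0,z)=1+O(z^{-1})$ for explicit constants $a,b,c$; this relation is nothing but a repackaging of Bessel's differential equation $z^2u''+zu'+(z^2-\nu^2)u=0$, and it is at this step that the coefficient $1-\nu^2$ appearing later in the hierarchy gets fixed. This is the \textbf{main obstacle}: pinning down the hard-edge data $(V,c_{-2})$ and confirming that the initial condition genuinely obeys a \emph{second}-order (rather than first-order, as in the Airy case) constraint in $A_z$, since Theorem~\ref{t4.2} demands exactly this quadratic relation, and the bookkeeping that threads $\nu$ through to the final coefficients is delicate.

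Granting these verifications, Theorem~\ref{t4.2} applies and yields the bilinear hierarchy \pref{4.16} for $f=\det(I-\la K^E_x)$, now with the weights $\cA_n=\sum_{i=1}^{2r}A_i^{(n+1)/2}\partial/\partial A_i$ of \pref{4.22} (the exponent being $(n+1-m)/2=(n+1)/2$ because $m=0$). The lower-weight terms are then stripped off by the same combinatorial rewriting of the variables inside the Schur polynomials $p_i$ as in Theorem~\ref{t4.3}, and rewriting in terms of $F:=\log f$ leaves the clean first equation \pref{109}.

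Finally, to reach Painlev\'e~V I would specialize to $E=(0,A)$ and follow the Airy template. Starting from the Virasoro constraint \pref{4.8} with $p=2$ and the hard-edge substitutions (the analogue of ``$E\to E^2$'' together with the time shift appropriate to the Laguerre/Bessel $W^{(2)}$-constraint), express $\cA_1\log\tau^E$ and $\cA_3\log\tau^E$, where $\tau^E:=\tau\det(I-\la K^E_{x,t})$, in terms of the time derivatives $\partial\log\tau^E/\partial t_i$ at $t=0$, and substitute these into the KP equation \pref{4.4}. Reducing the first hierarchy equation \pref{109} for $R:=-\cA_1F=-A\,\partial_A\log\det(I-\la K^E)$ then collapses it to the single ordinary differential equation $A^2R'''+AR''+(A-\nu^2)R'-\frac{R}{2}+4RR'-6A(R')^2=0$, which is the Painlev\'e~V form of Tracy--Widom~\cite{TWb}.
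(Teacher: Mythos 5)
Your outline is the paper's route in broad strokes: hard-edge data in \pref{4.9}--\pref{4.11} with $m=0$ and $V$ linear, verify the quadratic constraint \pref{4.13} as a repackaged Bessel equation, invoke Theorem \ref{t4.2} to get \pref{4.16} with exponents $(n+1)/2$, then combine the Virasoro constraint \pref{4.8} (with $p=2$) with the KP equation \pref{4.4} to reach \pref{109}, and specialize $E=(0,A)$ for Painlev\'e V. But there is a concrete error in how you encode $\nu$. You propose $A_z=\frac12 z(\part_z+V')+c_{-2}z^{-2}$ with ``$c_{-2}$ carrying $\nu$.'' With $m=0$ the Fourier side is $\hat A_x=\lrp{\frac12(x+V'(D))D+\sum_{i\geq1}c_{-2i}D^{-2i}}_+$, and since $(c_{-2}D^{-2})_+=0$, the term $c_{-2}z^{-2}$ in $A_z$ has no counterpart in $\hat A_x$; the intertwining \pref{4.12} then fails for the hard-edge wave function, whose $\sqrt z$ prefactor is already absorbed by $\frac12 z\part_z$ (unlike the Airy case, where $m=2$ makes $\frac{1}{2z}\part_z$ acting on $\sqrt z$ produce the compensating $-\frac14 z^{-2}$, which is why $c_{-2}\ne0$ appears there). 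In the paper all $c_{-2i}=0$: one takes $V(z)=-z$, so $A_z=\frac12 z\lrp{\part_z-1}$ and $\hat A_x=\frac12(x-1)\part_x$, and $\nu$ enters \emph{exclusively} through the constant term of the quadratic constraint, namely \pref{4.24}, $\lrp{4A_z^2-2A_z-\nu^2+\frac14}\Psi(0,z)=z^2\Psi(0,z)$, solved by the Hankel function $\Psi(0,z)=\varepsilon\sqrt z\,H_\nu(iz)$, $-\frac12<\nu<\frac12$. Your instinct that this step is Bessel's equation in disguise is right, but it is the initial condition, not the operator $A_z$, that carries $\nu$; note also that the Airy case is likewise quadratic in $A_z$ ($A^2\Psi(0,z)=z^2\Psi(0,z)$), not first order as you claim, so the contrast you draw is not the real delicacy.

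Two further steps you gloss over are what fix the constants in \pref{109}. First, $\Psi(x,z)=e^{xz}B((1-x)z)$ is built from the Hankel function, and $J_\nu$ only emerges after the specific choice $b_+=e^{-i\pi\nu}/2\sqrt\pi$, $b_-=i\bar b_+$ in \pref{4.5}, giving $\Phi(x,z)=\sqrt{(x-1)z/2}\,J_\nu((1-x)iz)$ and the kernel $-\frac12 I_E(z)\int_0^{(1-x)i}sJ_\nu(s\sqrt y)J_\nu(s\sqrt z)\,ds$; to land on the standard Bessel kernel \pref{4.21} with upper limit $1$ the paper evaluates at the \emph{complex} time $x=1+i$. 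Your ``rescaling of the integration variable'' would instead rescale $y,z$ and hence the endpoints $A_i$, perturbing the coefficients $-2$ and $1-\nu^2$ in \pref{109}. Second, this same evaluation induces the shift $t_1\mapsto t_1+1+i$ in the Virasoro constraints \pref{4.8}, which is precisely what produces the $\sqrt{-1}\,\part/\part t_1$ and $\sqrt{-1}\,\part/\part t_3$ terms in the paper's expressions for $\cA_1\log\tau^E$ and $\cA_3\log\tau^E$; without tracking it, the substitution of the $t$-partials into \pref{4.4} does not close to \pref{109}. With these two repairs --- $\nu$ in the constant of \pref{4.13} with all $c_{-2i}=0$, and the $x=1+i$ evaluation with its $t_1$-shift --- your argument coincides with the paper's proof.
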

\begin{proofn}
Pick  in \pref{4.10} $V(z)=-z$; then from \pref{4.9} and \pref{4.11}
\[
A_z = \frac12 z \lrp{\frac{\part}{\part z}-1}\hensp{and} \hat A_x = \frac12(x-1)\frac{\part}{\part x}.\]
We look for a function $\Psi(x,z)$ satisfying
\begin{equation}\lab{4.23}
A_z\Psi(x,z) = \hat A_x \Psi(x,z)\end{equation}
with initial condition $\Psi(0,z)$ satisfying
\begin{equation}\lab{4.24}
\lrp{4A^2_z-2A_z - \nu^2 +\frac14} \Psi (0,z)=z^2\Psi(0,z),\quad \Psi(0,z)=1+ O\lrp{\frac1z} .\end{equation}
The solution to the differential equation \pref{4.24} is given by
\begin{align*}
\Psi(0,z) = B(z)&= \vareps \sqrt{z} H_\nu(iz)\\
&= \frac{e^z 2^{\nu+1/2}}{\Ga(-\nu+1/2)} \int^\infty_1 \frac{z^{-\nu+1/2}e^{-uz}}{(u^2-1)^{\nu+1/2}} du\\
&= 1+O\lrp{\frac1z}.\end{align*}
with $\vareps = i\sqrt{\pi/2} e^{i\pi\frac{\nu}{2}}$, $-\frac12<\nu<\frac12$.  Then
\[
\Psi(x,z) = e^{xz} B((1-x)z)\]
satisfies \pref{4.23}; from Theorem \ref{t4.2}, $\Psi(x,z)$ satisfies a second order spectral problem, which can explicitly be computed:
\[
\lrp{\frac{d^2}{dx^2}-\frac{(\nu^2-\frac14)}{(x-1)^2}} \Psi(x,z)=z^2\Psi (x,z).\]
Picking in \pref{4.5} $b_+ = e^{-i\pi \nu}/2\sqrt{\pi}$ and $b_-=i\bar b_+$, yield for \pref{4.15}
\begin{align*}
\Phi(x,z)&= \frac{e^{-i\pi \nu/2}}{2\sqrt{\pi}} e^{-z}\Psi(x,z)+\frac{ie^{i\pi\nu/2}}{2\sqrt{\pi}} e^z \Psi(x,-z)\\
&=\sqrt{\frac{(x-1)z}{2}} J_\nu ((1-x)iz),\end{align*}
and
\begin{align*}
K^E_x&= I_E(z) \int^x_1 \frac{\Phi(x,\sqrt{y})\Phi(x,\sqrt{z})}{2y^{1/4} z^{1/4}} dx\\
&= -\frac12 I_E(z)\int^{(1-x)i}_0 sJ_\nu(s\sqrt{y}) J_\nu (s\sqrt{z}) ds.\end{align*}
The special value $x=i+1$ leads to the standard Bessel kernel:
\begin{align*}
K^E_{1+i}&= -\frac12 I_E(z)\int^1_0 sJ_\nu(s\sqrt{y}) J_\nu (s\sqrt{z})ds\\
&=I_E(z) \frac{J_\nu(\sqrt y) \sqrt z J'_\nu(\sqrt z)-J_\nu (\sqrt z) \sqrt y J'_\nu (\sqrt y)}{
2(z-y)}.\end{align*}
From Theorem \ref{t4.2}, the Fredholm determinant 
\[
f(A_1,\dots,A_{2r}):= \det(I-\la K^E_{1+i})\]
satisfies equation (4.16), with
\[
\cA_n =\sum^{2r}_{i=1} A^{\frac{n+1}{2}}_i \frac{\part}{\part A_i},\qquad n=1,3,5,\dots\;.\]
Picking the special value $x=i+ 1$ leads to the shift $t_1\mapsto t_1+i+1$ and \pref{4.8} with $p=2$, $kp\to k$, $a^{kp}_i\to A^k_i$, etc.\ leads to
\begin{align*}
\cA_1\log \tau^E&= \frac12\lrp{\sum_{i\geq 1} it_i \frac{\part}{\part t_i} +\sqrt{-1} \frac{\part}{\part t_1}}
\log \tau^E +\frac14 \lrp{\frac14-\nu^2}\\
\cA_3 \log\tau^E&=\frac12\lrp{\sum_{i\geq 1}it_i \frac{\part}{\part t_{i+2}}+\frac12 \frac{\part^2}{\part t^2_1} + \sqrt{-1} \frac{\part}{\part t_3}} \log \tau^E\\
&\quad{} + \frac14\lrp{\frac{\part}{\part t_1}\log \tau^E}^2;\end{align*}
expressing the partial derivatives appearing in \pref{4.4} at $t=0$ in terms of the operators $\cA_1=\sum^{2r}_1 A_i \frac{\part}{\part A_i}$ and $\cA_3 = \sum^{2r}_1 A^2_i \frac{\part}{\part A_i}$ leads to the partial differential equation (4.23), which for $E=(0,A)$ leads to the Painlev\'e V equation, ending the proof of Theorem \ref{t4.4}.
\end{proofn}

\section{The distribution of the spectrum in Hermitian, symmetric and symplectic random ensembles  and their relation to the Toda and the Pfaff Lattices}

In this section we derive partial differential equations and partial differential-recursion relations for the spectral gap probabilities for the Gaussian and Laguerre ensemble for the Hermitian, symmetric and symplectic cases of random matrix ensemble \cite{14}.  The crucial tool is that the deformation classes for these ensembles are the (integrable) Toda and Pfaff lattices, and we rely heavily on the tau-function theory of these lattices as well as the Virasoro symmetries coming from the gauge transformations inherent in random matrix integrals.

Consider the weights of the form $\rho(z)dz:= e^{-V(z)} dz$ on an interval $F=[A,B]\subseteq \R$, with rational logarithmic derivative and subjected to the following boundary conditions:
\begin{equation}\lab{5.1}
-\frac{\rho'}{\rho}=V'=\frac{g}{f} = \frac{\sum^\infty_0 b_i z^i}{\sum^\infty_0 a_i z^i},\enspace
\lim_{z\to A,B} f(z)\rho(z) z^k = 0\hensp{for all} k\geq 0,\end{equation}
together with a disjoint union of intervals,
\begin{equation}\lab{5.2}
E=\bigcup^r_1 [c_{2i-1},c_{2i}]\subseteq F\subseteq \R.\end{equation}
The data \pref{5.1} and \pref{5.2} define an algebra of differential operators
\begin{equation}
\lab{5.3}
\cB_k  = \sum^{2r}_1 c^{k+1}_i f(c_i) \frac{\part}{\part c_i}.\end{equation}
Let $\cH_n$, $\cS_n$ and $\cT_n$ denote the Hermitian $(M=\ol M^\top)$, symmetric $(M=M^\top)$ and ``symplectic" ensembles $(M=\ol M^\top$, $M=J\ol M J^{-1})$, respectively.  Traditionally, the latter is called the ``symplectic ensemble," although the matrices involved are not symplectic!  These  conditions guarantee the reality of the spectrum of $M$.  Then, $\cH_n(E)$, $\cS_n(E)$ and $\cT_n(E)$ denote the subsets of $\cH_n$, $\cS_n$ and $\cT_n$ with spectrum in the subset $E\subseteq F\subseteq \R$.  The aim of this section is to find PDEs for the probabilities
\begin{equation} \label{5.4}
\begin{aligned}
P_n(E): & =P_n\hbox{ (all spectral points of $M\in E$)}\\
&= \frac{\int_{\cH_n(E), \cS_n(E)\text{ or }\cT_n(E)} e^{-\tr V(M)} dM}{
\int_{\cH_n(F),\cS_n(F)\text{ or }\cT_n(F) } e^{-\tr V(M)}dM}\\
&=\frac{ \int_{E^n} |\Delta_n(z)|^\beta \prod^n_{k=1}e^{-V(z_k)}dz_k}{
\int_{F_n}|\Delta_n(z)|^\beta \prod^n_{k=1} e^{-V(z_k)} dz_k},\quad \beta=2,1,4\text{ respectively }\end{aligned}
\end{equation}
for the Gaussian, Laguerre and Jacobi weights.

The method used in \cite{14} to obtain these PDEs involves inserting time-parameters into the integrals, appearing in \pref{5.4} and to notice that the integrals obtained satisfy
{\setbox0\hbox{()}\leftmargini=\wd0 \advance\leftmargini\labelsep
 \begin{itemize}
\item Virasoro constraints: linear PDEs in $t$ and the boundary points of~$E$, and
\item integrable hierarchies: satisfied by matrix integrals:
\[
\begin{array}{l|c|l}
\text{ensemble}&\beta&\text{lattice}\\
\hline
\text{Hermitian}&\beta=2&\text{Toda}\\
\text{symmetric}&\beta=1&\text{Pfaff}\\
\text{symplectic}&\beta=4&\text{Pfaff}\end{array}\]
\end{itemize}}  \noindent 
As a consequence of duality between $\beta$-Virasoro generators under the map $\beta\mapsto 4/\beta$, the PDEs obtained have a remarkable property: the coefficients $Q$ and $Q_i$ in the PDEs are functions of the variables $n,\beta,a,b$, and have the invariance property under the map
\[n\to -2n,\; a\to -\frac{a}{2},\; b\to -\frac{b}{2};\]
to be precise,
\begin{equation}\lab{5.5}
Q_i\lrp{-2n,\beta,-\frac{a}{2},-\frac{b}{2}}\Big|_{\beta=1} - Q_i(n,\beta,a,b)|_{\beta=4}.\end{equation}

\subsubsection*{Important remark} For $\beta=2$, the probabilities satisfy PDEs in the boundary points of $E$, whereas in the case $\beta=1,4$, the equations are inductive.  Namely, for $\beta=1$ (resp.\ $\beta=4$), the probabilities $P_{n+2}$ (resp.\ $P_{n+1}$) are given in terms of $P_{n-2}$ (resp.\ $P_{n-1}$) and a differential operator acting on $P_n$.

\subsection{Virasoro constraints}

\begin{thm}[Adler-van Moerbeke \cite{15}]\lab{t5.1}
The multiple integrals
\begin{equation}\lab{5.6}
I_n(t,c;\beta):= \int_{E^n} |\Delta_n(z)|^\beta \prod^n_{k=1} \lrp{e^{\sum^\infty_1 t_i z^i_k} \rho(z_k)dz_k}\hensp{for} n>0\end{equation}
and
\begin{equation}\lab{5.7}
I_n\lrp{t,c;\frac{4}{\beta}}:= \int_{E^n} |\Delta_n(z)|^{4/\beta} \prod^n_{k=1} \lrp{e^{\sum^\infty_1 t_i z^i_k} \rho(z_k) dz_k} \hensp{for} n>0,\end{equation}
with $I_0=1$, satisfying respectively the following Virasoro constraints\footnote{When $E$ equals the whole range $F$, then the $\cB_k$'s are absent in the formula \pref{5.8}.}
for all $k\geq -1$:
\begin{equation}\label{5.8}
\begin{aligned}
&\lrp{-\cB_k + \sum_{i\geq 0} \lrp{ a_i \, ^\beta \J^{(2)}_{k+i,n} (t,n) - b_i\, ^\beta \J^{(1)}_{k+i+1,n}(t,n)}} I_n(t,c;\beta)=0,\\
&\lrp{-\cB_k+\sum_{i\geq 0} \lrp{a_i \,^\beta \J^{(2)}_{k+i,n} \lrp{-\frac{\beta t}{2}, -\frac{2n}{\beta}}\right.\right.\\
&\hspace*{1in}\left.\left. +
\frac{\beta b_i}{2} \,^\beta \J^{(1)}_{k+i+1,n} \lrp{-\frac{\beta t}{2}, \frac{2n}{\beta}}}} I_n \lrp{t,c;\frac{4}{\beta}}=0,\end{aligned}\end{equation}
in terms of the coefficients $a_i$, $b_i$ of the rational function $(-\log \rho)'$ and the end points $c_i$ of the subset $E$, as in \pref{5.1}--\pref{5.3}.  For all $n\in \Z$, the ${}^\beta \J^{(2)}_{k,n}(t,n)$ and ${}^\beta \J^{(1)}_{k,n}(t,n)$ form a Virasoro and a Heisenberg algebra respectively, with central charge
\begin{equation}\lab{5.9}
c=1-6\lrp{\lrp{\frac{\beta}{2}}^{1/2} - \lrp{\frac{\beta}{2}}^{-1/2}}^2 .\end{equation}
\end{thm}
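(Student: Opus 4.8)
The plan is to derive the constraints \eqref{5.8} from the invariance of the integral $I_n$ under a one–parameter change of integration variables, and then to establish the algebra structure and the central charge \eqref{5.9} by a separate, purely algebraic commutator computation on the generators. First I would fix $k\geq -1$ and substitute $z_j\mapsto z_j+\vareps\, z_j^{k+1} f(z_j)$ simultaneously in all $n$ integration variables, where $f$ is the denominator in \eqref{5.1}. Since this is merely a relabelling, $I_n$ is unchanged; equivalently, viewing the map as carrying a deformed domain $\tilde E$ with endpoints $\tilde c_i = c_i-\vareps\, c_i^{k+1} f(c_i)+O(\vareps^2)$ onto $E$, the identity in $\vareps$ forces $\frac{d}{d\vareps}\big|_0$ of the transformed expression to vanish. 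Differentiating produces four contributions: the variation of $|\Delta_n(z)|^\beta$, the variation of the weight $\prod_j e^{\sum_i t_i z_j^i}\rho(z_j)$, the variation of the product measure $\prod_j dz_j$ (the Jacobian), and the variation of the domain through the $\tilde c_i$. The decay hypothesis $\lim_{z\to A,B} f(z)\rho(z)z^k=0$ kills the boundary contributions at the outer endpoints of $F$, so only the interior $c_i$ survive, and the domain variation is exactly $-\cB_k I_n$.

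The crucial design choice is the factor $f(z)$ in the vector field: the $-V'$ part of the weight variation becomes $z_j^{k+1} f(z_j)\bigl(-g(z_j)/f(z_j)\bigr)=-z_j^{k+1} g(z_j)$, clearing the rational function $V'=g/f$ so that every contribution is polynomial in the $z_j$. Writing $f=\sum_i a_i z^i$, $g=\sum_i b_i z^i$, each contribution is a symmetric power–sum expression, and the elementary identity $\partial_{t_p} I_n=\int(\sum_j z_j^p)(\cdots)$ turns the insertion of a power sum under the integral into $\partial_{t_p}$ acting on $I_n$. The Vandermonde variation gives $\beta\sum_{a<b}\frac{z_a^{m}-z_b^{m}}{z_a-z_b}=\beta\sum_{a<b}\sum_{p+q=m-1}z_a^p z_b^q$ with $m=k+i+1$, a term quadratic in the power sums of total degree $k+i$, which is the quadratic core of ${}^\beta\J^{(2)}_{k+i,n}$ weighted by $a_i$; the $\sum \ell t_\ell z^{\ell-1}$ part of the weight supplies the mixed $t\partial_t$ terms and the Jacobian $(z^{k+1}f)'$ supplies the linear and constant ($n$–counting) terms, all with coefficient $a_i$ and all absorbed into the same ${}^\beta\J^{(2)}_{k+i,n}$. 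The pure $-z^{k+1}g$ term assembles into the first–order Heisenberg generator ${}^\beta\J^{(1)}_{k+i+1,n}$ weighted by $b_i$. This reproduces the first line of \eqref{5.8}, with the index shifts ($k+i$ for $a_i$, $k+i+1$ for $b_i$) coming out automatically.

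For the dual integral \eqref{5.7} I would run the identical computation with $\beta$ replaced by $4/\beta$ and then re–express the resulting generators in the original normalization; the rescalings $t\mapsto -\beta t/2$ and $n\mapsto \pm 2n/\beta$ recorded in \eqref{5.8} are precisely the substitutions restoring ${}^\beta\J$–form, and the symmetry of \eqref{5.9} under $\beta\mapsto 4/\beta$ (which only flips the sign of $(\beta/2)^{1/2}-(\beta/2)^{-1/2}$) makes this consistent. The algebra statement is independent of the integral: from the explicit definitions of ${}^\beta\J^{(2)}_{k,n}$ and ${}^\beta\J^{(1)}_{k,n}$ in terms of the Heisenberg modes I would compute $[{}^\beta\J^{(2)}_{k,n},{}^\beta\J^{(2)}_{\ell,n}]$ and $[{}^\beta\J^{(1)}_{k,n},{}^\beta\J^{(1)}_{\ell,n}]$ directly, verify the Virasoro relation $(k-\ell){}^\beta\J^{(2)}_{k+\ell,n}+\frac{c}{12}(k^3-k)\delta_{k+\ell,0}$, and read off $c$.

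The main obstacle is the bookkeeping in the reassembly step and, above all, the emergence of the central charge \eqref{5.9}. The subtlety is normal ordering: the coincidence terms in the double power sum, the distinction between $\frac{\partial^2}{\partial t_p\partial t_q}$ and $t_p t_q$ cross–terms, and the genuinely $n$– and $\beta$–dependent shifts built into ${}^\beta\J^{(2)}$ are what produce the anomaly. Tracking these signs together with the $\beta/2$ weightings is exactly where the factor $\bigl((\beta/2)^{1/2}-(\beta/2)^{-1/2}\bigr)^2$ appears, and it is the one place where a purely formal symmetric–function manipulation does not suffice.
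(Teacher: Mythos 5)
Your proposal takes essentially the same route as the paper: the paper derives Theorem \ref{t5.1} from exactly the variational formula $\frac{d}{d\vareps}\, dI_n(x_i\mapsto x_i+\vareps f(x_i)x^{k+1}_i)\big|_{\vareps=0} = \sum^\infty_{\ell=0} \lrp{a_\ell \,\bJ^{(2)}_{k+\ell,n} - b_\ell\, \bj^{(1)}_{k+\ell+1,n}}dI_n$, i.e., the same change of variables with the factor $f$ chosen to clear the rational function $V'=g/f$, the domain variation producing $-\cB_k$ via the operators \pref{5.3}, and the decay hypothesis \pref{5.1} killing the outer boundary terms. Your handling of the $4/\beta$ case through the substitutions $t\mapsto -\beta t/2$, $n\mapsto \pm 2n/\beta$, and your separate commutator computation yielding the central charge \pref{5.9}, likewise match the argument of \cite{15} that the paper invokes.
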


\begin{rem}
The $\bj^{(2)}_{k,n}$'s are defined as follows:
\begin{equation}\lab{5.10}
\bj^{(2)}_{k,n}=\frac{\beta}{2}\sum_{i+j=k} :\bj^{(1)}_{i,n} \bj^{(1)}_{j,n}:+ \lrp{1-\frac{\beta}{2}} \lrp{(k+1)\bj^{(1)}_{k,n} - k\J^{(0)}_{k,n}}.\end{equation}
Componentwise, we have
\[
\bj^{(1)}_{k,n} (t,n)=\bJ^{(1)}_k +nJ^{(0)}_k \hensp{and} \bj^{(0)}_{k,n} = nJ^{(0)}_k=n\delta_{0k}\]
and hence
\begin{align*}
\bj^{(2)}_{k,n}(t,n)&= \lrp{\frac{\beta}{2}} \bJ^{(2)}_k + \lrp{n\beta+(k+1)\lrp{1-\frac{\beta}{2}}}\bJ^{(1)}_k\\
&\quad{}+ n\lrp{(n-1) \frac{\beta}{2}+1} J^{(0)}_k.\end{align*}
\end{rem}

Setting
\[
dI_n(x):= |\Delta_n(x)|^\beta \prod^n_{k=1} \lrp{e^{\sum^\infty_1 t_i x^i_k} \rho(x_k)dx_k},\]
Theorem \ref{t5.1} is  based on the following  variational formula:
\[
\frac{d}{d\vareps} dI_n(x_i\mapsto x_i+\vareps f(x_i)x^{k+1}_i)\big|_{\vareps=0} = \sum^\infty_{\ell=0} \lrp{a_\ell \bJ^{(2)}_{k+\ell,n} - b_\ell \bj^{(1)}_{k+\ell+1,n}}dI_n.\]

\subsection{Matrix integrals and associated integrable systems}
\subsubsection*{Hermitian matrix integrals and the Toda lattice}  Given a weight $\rho(z)=e^{-V(z)}$ defined as in \pref{5.1}, the inner-product
\begin{equation}\lab{5.11}
\lra{f,g}_t = \int_E f(z) g(z) \rho_t (z)dz,\quad\hbox{ with }\rho_t := e^{\sum^\infty_1 t_iz^i} \rho(z) \end{equation}
leads to a moment matrix
\begin{equation}\lab{5.12}
m_n(t)=(\mu_{ij}(t))_{0\leq i,j<n} = \lrp{\lra{z^i,z^j}_t}_{0\leq i,j<n},\end{equation}
which is a H\"ankel matrix,\footnote{H\"ankel means $\mu_{ij}$ depends on $i+j$ only.}
thus symmetric.  H\"ankel is tantamount to $\La m_\infty = m_\infty \La^\top$.  The semi-infinite moment matrix $m_\infty$ evolves in $t$ according to the equations
\begin{equation}\lab{5.13}
\frac{\part \mu_{ij}}{\part t_k} = \mu_{i+k,j},\hensp{and thus} \frac{\part m_\infty}{\part t_k} = \La^k m_\infty \qquad\begin{pmatrix} \text{commuting}\\
\text{vector fields}\end{pmatrix}.\end{equation}
Another important ingredient is the factorization of $m_\infty$ into a lower- times an upper-triangular matrix\footnote{This factorization is possible for those $t$'s for which $\tau_n(t):= \det m_n(t)\ne 0$ for all $n>0$.}
 \[
m_\infty(t)=S(t)^{-1} S(t)^{\top-1},\]
where $S(t)$ is lower-triangular with nonzero diagonal elements. The following theorem can be found in \cite{28}. 

\begin{thm}\lab{t5.2}
The vector $\tau(t)=(\tau_n(t))_{n\geq 0}$, with
\begin{equation}\lab{5.14}
\tau_n(t):= \det m_n(t)=\frac{1}{n!} \int_{E^n} \Delta^2_n (z) \prod^n_{k=1} \rho_t (z_k) dz_k\end{equation}
satisfies
{\setbox0\hbox{(iv)}\leftmargini=\wd0 \advance\leftmargini\labelsep
 \begin{itemize}
\item[(i)] \emph{Virasoro constraints} \pref{5.8} for $\beta=2$,
\begin{equation}\lab{5.15}
\lrp{ - \sum^{2r}_1 c^{k+1}_i f(c_i) \frac{\part}{\part c_i}+\sum_{i\geq 0} \lrp{ a_i\J^{(2)}_{k+i}-b_i \J^{(1)}_{k+i+1}}}\tau=0.\end{equation}
\item[(ii)]The $\KP$-hiearchy\footnote{For the customary Hirota symbol $p(\part_t) f\circ g: = p\lrp{\frac{\part}{\part y}} f(t+y)g(t-y)\big|_{y=0}$, the $p_\ell$'s are the elementary Schur polynomials $e^{\sum^\infty_1 t_i z^i} := \sum_{i\geq 0} p_i(t_1,t_2,\dots)z^i$ and $p_\ell(\ol\part):= p_\ell \lrp{\frac{\part}{\part t_1},\frac12,\frac{\part}{\part t_2},\dots}$.}
\[
\lrp{ p_{k+4}(\tilde\part) -\frac12 \frac{\part^2}{\part t_1 \part t_{k+3}}} \tau_n\circ \tau_n=0,\]
of which the first equation reads
\[ \displaylines{
\lrp{\lrp{\frac{\part}{\part t_1}}^4 + 3\lrp{\frac{\part}{\part t_2}}^2  - 4\lrp{\frac{\part^2}{\part t_1 \part t_3}}}\log \tau_n+6 \lrp{\frac{\part^2}{\part t^2_1}\log \tau_n}^2=0,\hfill\cr\hfill k=0,1,2,\dots \;.}\]
\item[(iii)]  The \emph{standard Toda lattice}; i.e., the tridiagonal matrix
\begin{equation}\lab{5.16}
L(t):=S(t)\La S(t)^{-1}=
\begin{pmatrix}
\frac{\part}{\part t_1}\log \frac{\tau_1}{\tau_0}&\lrp{\frac{\tau_0\tau_2}{\tau^2_1}}^{1/2}&0\\
\lrp{\frac{\tau_0\tau_2}{\tau^2_1}}^{1/2}& \frac{\part}{\part t_1}\log \frac{\tau_2}{\tau_1}&
\lrp{\frac{\tau_1\tau_3}{\tau^2_2}}^{1/2}&\\
0&\lrp{\frac{\tau_1\tau_3}{\tau^2_2}}^{1/2}&\frac{\part}{\part t_1}\log \frac{\tau_3}{\tau_2}&\ddots\\&&\ddots&\ddots\end{pmatrix}
\end{equation}
satisfies the commuting equations\footnote{$(\;)_-$ means take the skew-symmetric part of $(\;)$ in the decomposition ``skew-symmetric" + ``lower-triangular."}  \pref{2.19}
\begin{equation}
\lab{5.17}
\frac{\part L}{\part t_k} = \lrb{ \frac12 (L^k)_-,L}.\end{equation}
\item[(iv)] \emph{Orthogonal polynomials}: The $n^{\rm th}$ degree polynomials $p_n(t;z)$ in $z$, depending on $t\in \C^\infty$, orthonormal with respect to the $t$-dependent inner product \pref{5.11}
\[
\lra{p_k(t;z) , p_\ell(t;z)}=\delta_{k\ell}\]
are eigenvectors of $L$, i.e., $(L(t)p(t;z))_n=zp_n(t;z)$, $n\geq 0$, and enjoy the following representations:
\begin{align*}
p_n(t;z):= (S(t)\chi(z))_n&= \frac{1}{\sqrt{\tau_n(t)\tau_{n+1}(t)}} \det \lrp{\begin{array}{c|c}
m_n & \begin{matrix} 1\\ z\\ \vdots\end{matrix}\\
\hline
\mu_{n,0} \cdots \mu_{n,n-1}& z^n\end{array}}
\\
&= z^n h^{-1/2}_n \frac{\tau_n(t-[z^{-1}])}{\tau_n(t)},\quad h_n:= \frac{\tau_{n+1}(t)}{\tau_n(t)}.\end{align*}
The functions $q_n(t;z):= z\int_{\R^n} \frac{p_n(t;u)}{z-u} \rho_t(u)du$ are ``dual eigenvectors" of $L$,  i.e., $(L(t)q(t;z))_n = zq_n(t;z)$, $n\geq 1$, and have the following $\tau$-function representation (see the remark after \pref{5.20}):
\begin{align}\label{5.18}
 q_n(t;z) := z\int_{\R^n} \frac{p_n(t;u)}{z-u} \rho_t (u)du&= \lrp{S^{\top-1}(t)\chi(z^{-1})}_n\\
&= \lrp{S(t)m_\infty (t)\chi(z^{-1})}_n\notag\\
&= z^{-n}h^{-1/2}_n \frac{\tau_{n+1}(t+[z^{-1}])}{\tau_n(t)}.\notag \end{align}
\item[(v)] \emph{Bilinear relations:} for all $n,m\geq 0$, and $a,b\in \C^\infty$, such that $a-b=t-t'$,
\begin{align}\label{5.19}
&\oint_{z=\infty} \tau_n (t-[z^{-1}])\tau_{m+1} (t'+[z^{-1}]) e^{\sum^\infty_1 a_iz^i} z^{n-m-1}\frac{dz}{2\pi i}\\
={}& \oint_{z=0} \tau_{n+1}(t+[z])\tau_m(t'-[z]) e^{\sum^\infty_1 b_iz^{-1}} z^{n-m-1} \frac{dz}{2\pi i}.\notag\end{align}

\end{itemize}}    \end{thm}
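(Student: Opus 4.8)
The plan is to read the entire theorem off the semi-infinite moment matrix $m_\infty(t)$ equipped with three inputs: its Borel factorization $m_\infty=S^{-1}S^{\top-1}$ with $S(t)$ lower triangular of nonzero diagonal (valid exactly where all $\tau_n\neq0$); the Hankel relation $\Lambda m_\infty=m_\infty\Lambda^\top$; and the linear flow $\partial m_\infty/\partial t_k=\Lambda^k m_\infty$ of \pref{5.13}. First I would extract the orthogonal-polynomial data (iv), then the Toda Lax structure (iii) by differentiating the factorization, then the bilinear identity (v), and finally deduce the KP hierarchy (ii) from (v) and the Virasoro constraints (i) directly from Theorem \ref{t5.1}.

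For (iv), the determinantal (Heine) formula for the orthonormal polynomials $p_n=(S\chi(z))_n$ comes directly from the orthogonality $\langle p_n,z^j\rangle_t=0$, $j<n$, by Cramer's rule on $m_n$. For the $\tau$-function form I would use that $t\mapsto t-[z^{-1}]$ multiplies the weight $\rho_t(u)$ by $(1-u/z)$, so that $\tau_n(t-[z^{-1}])/\tau_n(t)$ equals $z^{-n}$ times the averaged characteristic polynomial $\langle\det(z\,\Id-M)\rangle$ over the eigenvalue ensemble, i.e.\ the monic orthogonal polynomial $h_n^{1/2}p_n(z)$; this gives $p_n=z^nh_n^{-1/2}\tau_n(t-[z^{-1}])/\tau_n(t)$. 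The dual functions follow from the companion shift $t\mapsto t+[z^{-1}]$, inserting $(1-u/z)^{-1}$, together with the Cauchy expansion $z/(z-u)=\sum_{\ell\ge0}(u/z)^\ell$, yielding $q_n=(S^{\top-1}\chi(z^{-1}))_n=z^{-n}h_n^{-1/2}\tau_{n+1}(t+[z^{-1}])/\tau_n(t)$.

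For (iii), differentiating $S m_\infty S^\top=\Id$ and substituting $\partial_{t_k}m_\infty=\Lambda^k m_\infty$ expresses $(\partial_{t_k}S)S^{-1}$ through the lower-triangular and diagonal parts of $L^k$, where $L:=S\Lambda S^{-1}$; since $[L^k,L]=0$, this reduces to the Lax equation $\partial_{t_k}L=[\tfrac12(L^k)_-,L]$ of \pref{5.17} in the skew-symmetric/lower-triangular splitting \pref{2.13}, with $(\cdot)_-$ the skew-symmetric projection. The tridiagonal shape \pref{5.16} is then forced: $L$ is lower Hessenberg because $S$ is lower triangular and $\Lambda$ is the shift, while conjugating the Hankel relation by $S$ gives $L=L^\top$, and a symmetric Hessenberg matrix is tridiagonal, its entries read off from $S$ in terms of the $\tau_n$.

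The bilinear relations (v) are the crux. I would encode the orthonormality $S m_\infty S^\top=\Id$ and its propagation under the flow in the wave and adjoint-wave generating functions built from $p_n$ and $q_n$, and derive the Toda bilinear identity $\oint_\infty\Psi(t,z)\Psi^\ast(t',z)\tfrac{dz}{2\pi i}=\oint_0(\cdots)$; inserting the $\tau$-representations of (iv) and carrying the extra times with $a-b=t-t'$ turns this into \pref{5.19}, the two contours separating the part of the propagator analytic at $z=\infty$ from that analytic at $z=0$. From (v), taking $m=n-1$ and $b=0$ makes the $z=0$ contour enclose a holomorphic integrand, so it drops out and leaves the pure KP bilinear identity for $\tau_n$; setting $t'=t-2y$ and expanding in $y$ via Schur polynomials then yields the Hirota hierarchy (ii), whose lowest member is the KP equation recorded there. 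Finally (i) is Theorem \ref{t5.1} at $\beta=2$, where $\tau_n=\tfrac{1}{n!}I_n(t,c;2)$ and the quoted variational formula supplies the Heisenberg and Virasoro generators $\J^{(1)},\J^{(2)}$ and the constraint \pref{5.15}. I expect the main obstacle to be this passage from the index-by-index orthogonality to the all-$(n,m)$, all-shift identity \pref{5.19}: one must simultaneously control the integrand at $z=\infty$ and $z=0$ and verify that the residues reproduce both sides exactly. Once \pref{5.19} is established, (ii) is a specialization and (i), (iii), (iv) rest on the factorization already set up.
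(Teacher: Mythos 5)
Your proposal is correct and follows essentially the same route as the paper's source: the paper itself gives no proof of Theorem \ref{t5.2} but cites \cite{28}, and your derivation --- Borel factorization $m_\infty=S^{-1}S^{\top -1}$ plus the H\"ankel relation and the linear flow $\part m_\infty/\part t_k=\La^k m_\infty$, yielding (iv) by Heine/Cramer and the shift $t\mapsto t\mp[z^{-1}]$, (iii) by differentiating the factorization in the skew/lower-triangular splitting \pref{2.14}, (v) via the wave-matrix identity (exactly the mechanism the paper uses for two-Toda in \pref{6.13}--\pref{6.15}), (ii) by the specialization $m=n-1$, $b=0$ killing the $z=0$ contour, and (i) from Theorem \ref{t5.1} at $\beta=2$ --- is precisely that standard argument, including the eigenvector verification the paper records in the remark after \pref{5.20}.
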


In the case $\beta=2$, the Virasoro expressions take on a particularly elegant form, namely, for $n\geq 0$,
\begin{align*}
\J^{(2)}_{k,n}(t)&=\sum_{i+j=k} : \J^{(1)}_{i,n}(t)\J^{(1)}_{j,n} (t):= J^{(2)}_k (t)+2nJ^{(1)}_k(t)+ n^2\delta_{0k}\\
\J^{(1)}_{k,n}(t)&=  J^{(1)}_k(t)+n\delta_{0k},\end{align*}
with\footnote{The expression $J^{(1)}_k=0$ for $k=0$.}
\begin{align} \label{5.20}
J^{(1)}_k &= \frac{\part}{\part t_k}+\frac12(-k) t_{-k},\\
J^{(2)}_k &= \sum_{i+j=k} \frac{\part^2}{\part t_i \part t_j} +\sum_{-i+j=k} it_i \frac{\part}{\part t_j}+\frac14
\sum_{-i-j=k} it_i jt_j.\notag\end{align}

\begin{remn}L. Haine and E. Horozov \cite{HH} have shown that these $\tau$-functions are highest-weight vectors in the context of Lie algebras.
\end{remn}

\begin{remn}
\emph{The vectors $p$ and $q$ are eigenvectors of $L$.}  Indeed, remembering $\chi(z)=(1,z,z^2,\ldots)^\top$, we have\[
\La \chi(z)=z\chi(z)\hensp{and}\La^\top \chi(z^{-1})=z\chi(z^{-1})-z e_1,\hensp{with} e_1=(1,0,0,\ldots)^\top.\]
Therefore, $p(z)=S \chi(z)$ and $q(z)=S^{\top-1}\chi(z^{-1})$ are eigenvectors, in the sense 
\begin{align*}
L p &=S\La S^{-1} S \chi(z)=z S_\chi (z)=zp,\\
L^\top q&= S^{\top-1}\La^\top S^\top S^{\top-1} \chi(z^{-1})\\
&= z S^{\top-1} \chi(z^{-1})-zS^{\top -1} e_1 = zq-zS^{\top-1} e_1.\end{align*}
Then, using $L=L^\top$, one is lead to
\[
((L-zI)p)_n = 0\hensp{for} n\geq 0\hensp{and} ((L-zI)q)_n =0\hensp{for }n\geq 1.\]
\end{remn}

\subsubsection*{Symmetric/symplectic matrix integrals and the Pfaff lattice} Consider an inner-product with a skew-symmetric weight $\rho(y,z)$,
\begin{equation}
\lab{5.21}
\lra{f,g}_t = \iint_{\R^2} f(y)g(z) e^{\sum^\infty_1 t_i(y^i+z^i)} \rho(y,z) dy\dz,\hensp{with}\rho(z,y)=-\rho(y,z).\end{equation}
Then, since
\[
\lra{f,g}_t = - \lra{g,f}_t,\]
the (semi-infinite) moment matrix, depending on $t=(t_1,t_2,\dots)$,
\[
m_n(t)=(\mu_{ij}(t))_{0\leq i,j\leq n-1} = \lrp{\lra{y^i,z^j}_t}_{0\leq i,j\leq n-1}\]
is skew-symmetric and the semi-infinite matrix $m_\infty$ evolves in $t$ according to the \emph{commuting vector  fields}
\begin{equation}\lab{5.22}
\frac{\part \mu_{ij}}{\mu t_k}=\mu_{i+k,j}+\mu_{i,j+k},\hensp{i.e.,} \frac{\part m_\infty}{\part t_k} = \La^k m_\infty + m_\infty \La^{\top k}.\end{equation}
It is well known that the determinant of an odd skew-symmetric matrix equals 0, whereas the determinant of an even skew-symmetric matrix is the square of a polynomial in the entries, the Pfaffian, with a sign specified below.
So
\begin{align*}
\det(m_{2n-1}(t))&=0\\
(\det m_{2n} (t))^{1/2}&= pf(m_{2n}(t)) = \frac{1}{n!} (dx_0\wedge dx_1\wedge\cdots\wedge dx_{2n-1})^{-1}\\
&\hspace*{1.5in} \lrp{\sum_{0\leq i<j \leq 2n-1}\mu_{ij}(t) dx_i\wedge dx_j}^n.\end{align*}
Define now the \emph{Pfaffian $\tau$-functions}:
\begin{equation}\lab{5.23}
\tau_{2n}(t):= pf\; m_{2n}(t).\end{equation}

Considering as a special skew-symmetric weight \pref{5.21}
\begin{equation}\lab{5.24}
\rho(y,z):=2D^\alpha \delta(y-z)\tilde\rho (y)\tilde \rho(z),\hensp{with}\alpha=\mp 1,\; \tilde\rho(y)=e^{-\tilde V(y)},\end{equation}
the inner-product \pref{5.21} becomes\footnote{$\vareps(y)=\text{sign}(y)$, and $\{f,g\}:= f'g-fg'$.  Also notice  that $\vareps' = 2\delta(x)$.} 
\begin{align*}
\lra{f,g}_t&= \iint_{\R^2} f(y)g(z) e^{\sum t_i(y^i+z^i)} 2D^\alpha \delta (y-z)\tilde \rho(y)\tilde\rho(z)\dy\dz\\
&=\bcs \displaystyle 
\iint_{\R^2} f(y)g(z) e^{\sum^\infty_1 t_i(y^i+z^i)}\vareps (y-z)\tilde\rho(y)\tilde\rho(z)\dy\dz&\hensp{for}\alpha=-1\\[11pt]  \displaystyle 
\int_\R\{f,g\}(y) e^{\sum^\infty_1 2t_i y^i} \tilde\rho(y)^2 \dy &\hensp{for} \alpha=+1,\ecs\end{align*}
and 
\begin{equation}\begin{split}
\lab{5.25}
&pf\lrp{\lra{y^i,z^j}_t}_{0\leq i,j\leq 2n-1}\\& \quad{} = 
\bcs \displaystyle
\frac{1}{(2n)!} \int_{\R^{2n}} |\Delta_{2n}(z)| \prod^{2n}_{k=1} e^{\sum^\infty_1 t_iz^i_k}\tilde \rho (z_k)dz_k&\\[11pt]
\displaystyle\qquad \qquad =\frac{1}{(2n)!} \int_{\cS_{2n}} e^{\Tr(-\tilde V(X)+\sum t_i X^i)} dX&\hbox{for }\alpha=-1,\\[11pt]
\displaystyle\frac{1}{n!} \int_{\R^n} |\Delta_n(z)|^4 \prod^n_{k=1} e^{\sum^\infty_1 2t_iz^i_k} \tilde\rho^2 (z_k)dz_k&\\[11pt]
\displaystyle\qquad\qquad = \frac{1}{n!} \int_{\cT_{2n}} e^{\Tr(-2\tilde V(X)+\sum 2 t_iX^i)} dX&\hbox{for }\alpha=+1.\ecs
\end{split}\end{equation} 
Setting
\[
\bcs
\tilde \rho (z)= \rho(z) I_E(z)&\hbox{for }\alpha=-1\\
\tilde\rho(z)= \rho^{1/2} (z)I_E(z),\; t\mapsto t/2&\hbox{for }\alpha=+1\ecs\]
in the identities \pref{5.25}, we are led to the identities between integrals and Pfaffians, which are spelled out in \cite{29} Theorem \ref{t5.3}:
\begin{thm}\lab{t5.3}
The integrals $I_n(t,c)$,
\[\displaylines{
I_n=\int_{E^n} |\Delta_n(z)|^\beta \prod^n_{k=1} \lrp{ e^{\sum^\infty_1 t_iz^i_k}\rho(z_k) dz_k}\hfill\cr\hfill
=\bcs \displaystyle n! pf\lrp{\iint_{E^2} y^i z^j \vareps (y-z) e^{\sum^\infty_1 t_k(y^k+z^k)} \rho(y) \rho(z)\dy\dz}_{0\leq i,j\leq n-1} \\  \displaystyle
 \hspace*{1.4in}= n!\tau_n(t,c),\qquad\qquad n\hensp{even for } \beta=1\\
n!pf\lrp{\int_E\{y^i,y^j\} e^{\sum^\infty_1 t_k y^k}\rho(y)\dy}_{0\leq i,j\leq 2n-1} =n!\tau_{2n}(t/2,c)\\
 \hspace*{2.64in} n\hbox{ arbitrary for }\beta=4\ecs}\]
 and the $\tau_n(t,c)$'s above satisfy the following equations:
{\setbox0\hbox{(iv)}\leftmargini=\wd0 \advance\leftmargini\labelsep
  \begin{itemize}
 \item[(i)] The \emph{Virasoro constraints}\footnote{Here the $a_i$'s and $b_i$'s are defined in the usual way, in terms of $\rho(z)$; namely $-\frac{\rho'}{\rho} = \frac{\sum b_i z^i}{\sum a_iz^i}$.} \pref{5.4} for $\beta=1,4$,
 \begin{equation}\lab{5.26}
 \lrp{-\sum^{2r}_1 c^{k+1}_i f(c_i) \frac{\part}{\part c_i} +\sum_{i\geq 0} \lrp{a_i\,\bj^{(2)}_{k+i,n} - b_i\, \bj^{(1)}_{k+i+1,n}}}I_n=0.\end{equation}
 \item[(ii)] The \emph{Pfaff-KP hierarchy}: (see footnote  11)
 \begin{equation} \begin{split}
 \lab{5.27}
 \lrp{ p_{k+4}(\tilde \part) - \frac12 \frac{\part^2}{\part t_1 \part t_{k+3}}}\tau_n\circ\tau_n = p_k(\tilde \part)\tau_{n+2}\circ \tau_{n-2} &\\
 n\hbox{ even, }k=0,1,2,\dots\ &\end{split}\end{equation}
 of which the first equation reads
 \[\displaylines{
 \lrp{\lrp{\frac{\part}{\part t_1}}^4+3\lrp{\frac{\part}{\part t_2}}^2-4 \frac{\part^2}{\part t_1 \part t_3}}\log\tau_n+6\lrp{\frac{\part^2}{\part t^2_1}\log \tau_n}^2\hfill\cr\hfill = 12\frac{\tau_{n-2}\tau_{n+2}}{\tau^2_n},\qquad n\hbox{ even.}}\]
 
 \item[(iii)] The \emph{Pfaff lattice:} The time-dependent matrix
 \begin{equation}\lab{5.28}
 L(t) = Q(t)\La Q(t)^{-1}\end{equation}
 satisfies the Hamiltonian commuting equations, as in \pref{2.4} 
 \[
 \frac{\part L}{\part t_i} = [-P_+(L^i),L],\qquad \hbox{(Pfaff lattice)}.\]
 \item[(iv)] \emph{Skew-orthogonal polynomials:} The vector of time-dependent polynomials $q(t;z):= (q_n(t;z))_{n\geq 0} = Q(t)\chi(z)$ in $z$ satisfy the eigenvalue problem
 \begin{equation}\lab{5.29}
 L(t) q(t,z) = zq(t,z)\end{equation}
 and enjoy the following representations:
 \begin{align*}
 q_{2n}(t;z) &= z^{2n} h^{-1}_{2n} \frac{\tau_{2n}(t-[z^{-1}])}{\tau_{2n(t)}},\quad h_{2n}=\frac{\tau_{2n+2}(t)}{\tau_{2n}(t)}\\
 q_{2n+1}(t;z) &= z^{2n} h^{-1/2}_{2n} \frac{1}{\tau_{2n}(t)} \lrp{z+\frac{\part}{\part t_1}} \tau_{2n} (t-[z^{-1}]).\end{align*}
 They are skew-orthogonal polynomials in $z$; i.e., 
 \[
 \lra{q_i (z;t),q_j(t;z)}_t = J_{ij}.\]
 \item[(v)] \emph{The bilinear identities:} For all $n,m\geq 0$, the $\tau_{2n}$'s satisfy the following biliear identity:
 \begin{equation}\lab{5.30}\begin{split}
& \oint_{z=\infty} \tau_{2n}(t-[z^{-1}])\tau_{2m+2}(t' + [z^{-1}]) e^{\sum^\infty_1(t_i-t'_i)z^i}z^{2n-2m-2} \frac{dz}{2\pi i}\\
 &\qquad{}+ \oint_{z=0} \tau_{2n+2}(t+[z])\tau_{2m}(t'-[z])e^{\sum^\infty_1(t'_i-t_i)z^{-i}} z^{2n-2m} \frac{dz}{2\pi i}=0.\end{split}\end{equation}
 \end{itemize}}   
 \end{thm}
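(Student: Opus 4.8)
The plan is to route everything through a single moment matrix whose time evolution is \emph{exactly} the Pfaff flow of Theorem \ref{t2.2}, and then to read off (i)--(v) from the two structural inputs already in hand: the variational/Virasoro mechanism behind Theorem \ref{t5.1}, and the skew-Borel factorization picture of Theorem \ref{t2.2}. First I would establish the integral-to-Pfaffian identities in the statement. These are instances of the de Bruijn integration formulas: for even $n$ one writes $|\Delta_n(z)|=\vareps(\text{ordering})\det(z_j^{\,i})$, antisymmetrizes the product measure in pairs $(y,z)$, and collapses the $n$-fold integral to the Pfaffian of the two-fold moment $\mu_{ij}=\iint_{E^2}y^iz^j\vareps(y-z)\rho(y)\rho(z)\,\dy\dz$ in the $\beta=1$ case, respectively of $\int_E\{y^i,y^j\}\rho\,\dy$ in the $\beta=4$ case after the substitution \pref{5.24} with $\alpha=+1$ and $t\mapsto t/2$. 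This yields $I_n=n!\,\tau_n(t,c)$ with $\tau_{2n}=pf\,m_{2n}$. Differentiating the moments in the deformation times, the factor $e^{\sum t_i(y^i+z^i)}$ forces $\part_{t_k}\mu_{ij}=\mu_{i+k,j}+\mu_{i,j+k}$, i.e.\ $\part_{t_k}m_\infty=\Lambda^k m_\infty+m_\infty\Lambda^{\top k}$, which is precisely \pref{2.31} with $\wedge=\Lambda$ and skew-symmetric initial datum. Theorem \ref{t2.2} then supplies the skew-Borel factorization $m_\infty=Q^{-1}JQ^{-1\top}$ with $Q\in G_+$, the evolution \pref{2.33}, and the Lax equation \pref{2.34} for $L=Q\Lambda Q^{-1}$; this \emph{is} statement (iii).

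Statement (iv) then transports the Toda computation of the remark after \pref{5.20} to the skew setting: $q(t;z)=Q\chi(z)$ satisfies the eigenvalue relation $Lq=zq$, since $L=Q\Lambda Q^{-1}$ and $\Lambda\chi(z)=z\chi(z)$, while the skew-orthogonality $\lra{q_i,q_j}_t=J_{ij}$ is just the restatement $Qm_\infty Q^\top=J$ of the factorization, because $\lra{q_i,q_j}_t=(Qm_\infty Q^\top)_{ij}$. The explicit $\tau$-function formulas for $q_{2n}$ and $q_{2n+1}$ follow by expanding the Pfaffian cofactors of $m_\infty$ that constitute the rows of $Q$, exactly as the H\"ankel determinants produce the orthonormal $p_n$ in Theorem \ref{t5.2}. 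For statement (i) I would invoke Theorem \ref{t5.1} directly: the variational formula displayed just after it, applied to $dI_n$ under $x_i\mapsto x_i+\vareps f(x_i)x_i^{k+1}$, gives \pref{5.26} for $\beta=1$, and the duality $\beta\mapsto 4/\beta$ of that same theorem gives the $\beta=4$ version.

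The genuinely new work lies in (v) and (ii). For the bilinear identities (v) I would build, from $Q$ and $Q^{-1\top}$, a wave function and an adjoint wave function, express them through $\tau_{2n}$ in the manner of \pref{4.1}, and prove that the contour integral of $\Psi_n(t,z)\Psi^*_m(t',z)$ at $z=\infty$ matches the companion integral at $z=0$. The clean residue cancellation is forced by the factorization $m_\infty=Q^{-1}JQ^{-1\top}$ together with $\part_{t_k}m_\infty=\Lambda^k m_\infty+m_\infty\Lambda^{\top k}$, which together encode the isospectral structure underlying \pref{5.30}. The Pfaff-KP hierarchy (ii) is then extracted from \pref{5.30} by the standard Hirota procedure: set $t\to t-y$, $t'\to t+y$, expand both contour integrals in the elementary Schur polynomials $p_\ell(\tilde\part)$, and read off the coefficient of each monomial in $y$.

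The main obstacle, and the feature that separates the Pfaff case from ordinary KP, is the inhomogeneous right-hand side $p_k(\tilde\part)\tau_{n+2}\circ\tau_{n-2}$ in \pref{5.27}. Because the skew-Borel decomposition couples the $2\times2$ block at level $n$ to its two neighbors, the bilinear identity links $\tau_n$ not only to itself but to $\tau_{n\pm2}$, so that \pref{5.30} carries \emph{two} contour contributions of unequal index shift. Correctly tracking this coupling term---rather than the diagonal KP part, whose extraction is routine---and matching it against the Pfaffian cofactor expansions of the preceding paragraph is where all the care is needed; this is also the precise point at which the inductive structure noted in the Important Remark (relating $P_{n+2}$ to $P_{n-2}$ for $\beta=1$) originates.
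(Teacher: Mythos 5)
Your proposal is correct and follows essentially the same route as the paper, which obtains the integral-to-Pfaffian identities from the special skew weights \pref{5.24} via the de Bruijn-type formulas \pref{5.25}, gets (iii) and (iv) from the moment evolution \pref{5.22} fed into the skew-Borel factorization machinery of Theorem \ref{t2.2}, gets (i) from Theorem \ref{t5.1}, and defers (ii) and (v) to the bilinear-identity/Hirota analysis of the cited source \cite{29}. Your sketch simply fills in, along the same lines, the details the paper leaves to that reference, including the correct identification of the two unequally shifted contour contributions in \pref{5.30} as the origin of the inhomogeneous term $p_k(\tilde\part)\tau_{n+2}\circ\tau_{n-2}$ in \pref{5.27}.
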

 \subsection{Expressing $t$-partials in terms of boundary-partials}
 
 Given first-order linear operators $\cD_1,\cD_2,\cD_3$ in $c=(c_1,\dots,c_{2r})\in \R^{2r}$ and a function $F(t,c)$, with $t\in \C^\infty$, satisfying the following partial differential equations in $t$ and $c$:
 \begin{equation}\lab{5.31}
\cD_k F =\frac{\part F}{\part t_k} + \sum_{-1\leq j<k} \gamma_{kj} V_j(F)+\gamma_k +\delta_k t_1,\quad k=1,2,3,\dots,\end{equation}
with $V_j(F)$ nonlinear differential operators in $t_i$ of which the first few are given here:
\begin{equation}\lab{5.32}
V_j(F)=\sum_{i,i+j\geq 1} it_i \frac{\part F}{\part t_{i+j}} +\frac{\beta}{2} \delta_{2,j}\lrp{\frac{\part^2 F}{\part t^2_1}+\lrp{\frac{\part F}{\part t_1}}^2},\quad -1\leq j\leq 2.\end{equation}
In \pref{5.31} and \pref{5.32}, $\beta>0$, $\gamma_{ij},\gamma_k,\delta_k$ are arbitrary parameters; also $\delta_{2j}=0$ for $j\ne 2$ and $\delta_{2j}=1$ for $j=2$.  The claim is that the equations \pref{5.31} enable one to express all partial derivatives
\begin{equation}\lab{5.33}
\frac{\part^{i_1+ \cdots +i_k} F(t,c)}{\part t^{i_1}_1 \cdots \part t^{i_k}_k}\Big|_{\cL},\quad\hbox{along } \cL:= \{\hbox{all }t_i=0,\; c=(c_1,\dots,c_{2r})\hbox{ arbitrary}\},\end{equation}
uniquely in terms of polynomials in $\cD_{j_1}\cdots \cD_{j_r}F(0,c)$.  Indeed, the method consists of expressing $\frac{\part F}{\part t_k}\Big|_{t=0}$ in terms of $\cD_kf\big|_{t=0}$, using \pref{5.31}.  Second derivatives are obtained by acting on $\cD_k F$ with $\cD_\ell$, by noting that $\cD_\ell$ commutes with all $t$-derivatives, by using the equation for $\cD_\ell F$, and by setting in the end $t=0$:
\begin{equation}\lab{5.34}\begin{split}
\cD_\ell \cD_kF&= \cD_\ell \frac{\part F}{\part t_k} +\sum_{-1\leq j<k} \gamma_{kj} \cD_\ell (V_j(F))\\
&= \lrp{\frac{\part}{\part t_k}+\sum_{-1\leq j<k} \gamma_{kj} V_j} \cD_\ell(F),\hbox{ provided $V_j(F)$ does not}\\[-12pt]
&\hspace*{2.5in} \hbox{contain nonlinear terms}\\
&= \lrp{\frac{\part}{\part t_k}+\sum_{-1\leq j<k} \gamma_{kj} V_j} \lrp{\frac{\part F}{\part t_\ell}+\sum_{-1\leq j <\ell} \gamma_{\ell j}V_j(F)+\delta_\ell t_1}\\
&= \frac{\part^2 F}{\part t_k \part t_\ell}+\hbox{ lower-weight terms.}\end{split}\end{equation}
When the nonlinear term is present, it is taken care of as follows:
\begin{align}\label{5.35} 
\cD_\ell\lrp{\frac{\part F}{\part t_1}}^2&= 2\frac{\part F}{\part t_1} \cD_\ell \frac{\part F}{\part t_1}\\
&= 2\frac{\part F}{\part t_1} \frac{\part}{\part t_1}\cD_\ell F\nn\\
&= 2\frac{\part F}{\part t_1} \frac{\part}{\part t_1} \lrp{\frac{\part F}{\part t_\ell}+\sum_{-1\leq j<\ell} \gamma_{\ell j} V_j(F) +\gamma_\ell + \delta_\ell t_1};\nn\end{align}
higher derivatives are obtained in the same way.

\subsection{Using the KP-like equations}
Let
\[
\delta^\beta_{1,4}:= 2\lrp{\lrp{\frac{\beta}{2}}^{1/2}-\lrp{\frac{\beta}{2}}^{-1/2}}^2= \bcs
0&\hensp{for}\beta=2\\
1&\hensp{for}\beta=1,4.\ecs\]

From Theorems \ref{t5.2} and \ref{t5.3}, the integrals $I_n(t,c)$, depending on $\beta=2,1,4$, on $t=(t_1,t_2,\dots)$ and on the boundary points $c=(c_1,\dots,c_{2r})$ of $E$, relate to $\tau$-functions, as follows:
\begin{align}\label{5.36}
I_n(t,c)&= \int_{E_n}|\Delta_n(z)|^\beta \prod^n_{k=1} \lrp{e^{\sum^\infty_1 t_iz^i_k}\rho(z_k)dz_k}\\ \nn
&= \bcs
n!\tau_n(t,c),\phantom{/2} \quad n\hensp{arbitrary,}&\beta = 2\\
n!\tau_n(t,c),\phantom{/2}  \quad  n\hensp{even,}&\beta=1\\ 
n!\tau_{2n}(t/2,c)  \quad n\hensp{arbitrary,}&\beta=4.\ecs\end{align}
$I_n(t)$ refers to the integral \pref{5.36} over the full range.  It also follows that $\tau_n(t,c)$ satisfies the $\KP$-like equation
\begin{equation}\lab{5.37}
12\frac{\tau_{n-2}(t,c)\tau_{n+2}(t,c)}{\tau_n(t,c)^2}  \delta^\beta_{1,4}= (\KP)_t \log \tau_n(t,c),\quad\bcs
n&\hensp{arbitrary for} \beta=2\\
n&\hensp{even for}\beta=1,4,\ecs\end{equation}
where
\[
(\KP)_t F: = \lrp{\lrp{ \frac{\part}{\part t_1}}^4 +3\lrp{\frac{\part}{\part t_2}}^2 - 4\frac{\part^2 }{\part t_1 \part t_3}} F+6\lrp{\frac{\part^2 }{\part^2_1}F}^2.\]

$\beta=2,1$. Evaluating the left-hand side of \pref{5.35} (for $\beta=1$) yields, taking into account $P_n:= P_n(E)=I_n(0,c)/I_n(0)$,
\begin{align*}
\left.12\frac{\tau_{n-2}(t,c)\tau_{n+2}(t,c)}{\tau_n(t,c)^2}\right|_{t=0} &
=\left. 12\frac{(n!)^2}{(n-2)!(n+2)!} \frac{I_{n-2}(t,c) I_{n+2}(t,c)}{I_n(t,c)^2} \right|_{t=0}\\
&= 12 \frac{n(n-1)}{(n+1)(n+2)} \frac{I_{n-2}(0) I_{n+2}(0)}{I_n(0)^2} \frac{P_{n-2}P_{n+2}}{P^2_n}\\
&= 12 b^{(1)}_n \frac{P_{n-2}(E)P_{n+2}(E)}{P^2_n(E)},\end{align*}
with $b^{(1)}_n$ a constant.  Concerning the right-hand side of \pref{5.37}, it follows from Theorem \ref{t5.1} that $F_n(t;c)=\log I_n(t;c)$, as in  (5.36), satisfies Virasoro constraints.  As explained in \pref{5.31}--(5.35), we express
\[
\left.\frac{\part^4 F}{\part t^4_1}\right|_{t=0},\; \left.\frac{\part^2 F}{\part t^2_2}\right|_{t=0},\;
\left.\frac{\part^2 F}{\part t_1 \part t_3}\right|_{t=0},\; \left.\frac{\part^2 F}{\part t^2_1}\right|_{t=0},\quad 
F=\log I_n(t,c) \]
in terms of $\cD_k$ then $\cB_k$,  which are linear combinations of the $\cD_k$,which when substituted in the right-hand side of \pref{5.37}, i.e., in the KP-expressions, leads to the following theorems.

\subsubsection*{Hermitian, symmetric and symplectic Gaussian ensembles}
Given the disjoint union $E\subset \R$ and the weight $e^{-bz^2}$, the differential operators $\cB_k$ take on the form
\[
\cB_k = \sum^{2r}_1 c^{k+1}_i \frac{\part}{\part c_i}.\]
Also, define the \emph{invariant} polynomials\footnote{$Q(-2n,\beta,-\frac{a}{2},-\frac{b}{2})\Big|_{\beta=1} =Q(n,\beta,a,b)\big|_{\beta=4}$.}
\[
Q=12 b^2 n \lrp{n+1-\frac{2}{\beta}} ,\quad Q_2 = 4(1+\delta^\beta_{1,4})b\lrp{2n+\delta^\beta_{1,4}\lrp{1-\frac{2}{\beta}}}\]
and
\[
Q_1 = (2-\delta^\beta_{1,4})\frac{b^2}{\beta}.\]

\begin{thm}[Adler-van Moerbeke \cite{14}] \lab{t5.4}
The following probabilities for $(\beta=2,1,4)$
\[
P_n(E)=\frac{\int_{E_n}|\Delta_n(z)|^\beta \prod^n_{k=1} e^{-bz^2_k} dz_k}{\int_{\R^n}|\Delta_n(z)|^\beta \prod^n_{k=1}e^{-bz^2_k} dz_k},\]
satisfy the {\rm PDE}s $(F:= F_n=\log P_n)$:
\begin{align*}
&\delta^\beta_{1/4}Q\lrp{\frac{P_{n-{2\atop 1}} P_{n+{2\atop 1}}}{P^2_n}-1}\hensp{with index}
\bcs 2&\hbox{when $n$ is even and } \beta=1\\
1&\hbox{when $n$ is arbitrary and } \beta=4\ecs\\
&\qquad{}= \lrp{\cB^4_{-1} +(Q_2+6\cB^2_{-1} F)\cB^2_{-1}+4Q_1 (3\cB^2_0 - 4\cB_{-1} \cB_1 +6\cB_0)} F.\end{align*}
\end{thm}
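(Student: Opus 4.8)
The plan is to specialize the general elimination scheme of \pref{5.31}--\pref{5.35} to the Gaussian weight and feed it into the KP-like identity \pref{5.37}, thereby converting every $t$-derivative into the boundary operators $\cB_k$. The two inputs come from Theorems \ref{t5.2} and \ref{t5.3}: first, that $F:=\log I_n(t,c)$ obeys the Virasoro constraints \pref{5.8}, which for $\rho=e^{-bz^2}$ collapse (only $a_0=1$ and $b_1=2b$ survive, so $f\equiv1$ and $\cB_k=\sum_i c_i^{k+1}\frac{\part}{\part c_i}$) to relations of the shape \pref{5.31}, namely $\cB_k F=\frac{\part F}{\part t_k}+(\text{lower weight in }t)+\gamma_k+\delta_k t_1$; and second, that $\tau_n$ satisfies \pref{5.37}.

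Next I would evaluate \pref{5.37} at $t=0$. On the left, \pref{5.36} rewrites $\tau_{n\pm2}/\tau_n$ in terms of $I_{n\pm2}/I_n$ and hence of $P_{n\pm2}/P_n$, giving $12\,b^{(1)}_n\,\delta^\beta_{1,4}\,P_{n-2}P_{n+2}/P_n^2$ as in the computation preceding the statement. To produce the subtracted constant in the theorem, I would invoke \pref{5.37} once for the cut locus $E$ and once for the full range $F$ (where $P_n\equiv1$ and all $\cB_k$-terms are absent) and subtract: the full-range instance forces the constant part of the right-hand side to equal $12\,b^{(1)}_n\,\delta^\beta_{1,4}$, so that moving it across yields exactly $\delta^\beta_{1,4} Q\big(P_{n\mp2}P_{n\pm2}/P_n^2-1\big)$ with $Q=12\,b^{(1)}_n$; a direct evaluation of the Gaussian normalizations $I_n(0)$ then identifies $Q=12b^2 n(n+1-2/\beta)$ (for $\beta=4$ after the rescaling $t\mapsto t/2$ and the index shift $2\mapsto1$ recorded in \pref{5.36}).

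For the right-hand side I would run \pref{5.31}--\pref{5.35}: using the Virasoro relations, express each of $\frac{\part^4 F}{\part t_1^4}$, $\frac{\part^2 F}{\part t_2^2}$, $\frac{\part^2 F}{\part t_1\part t_3}$ and $\frac{\part^2 F}{\part t_1^2}$ at $t=0$ as polynomials in $\cB_{-1}F$, $\cB_0 F$, $\cB_1 F$, by commuting $\cB_\ell$ through the $t$-derivatives as in \pref{5.34} and then setting $t=0$, with the nonlinear contribution $(\part_1^2 F)^2$ handled via \pref{5.35}. Substituting these into $(\KP)_t\log\tau_n$ and grouping by homogeneous weight should assemble the operator $\big(\cB_{-1}^4+(Q_2+6\cB_{-1}^2 F)\cB_{-1}^2+4Q_1(3\cB_0^2-4\cB_{-1}\cB_1+6\cB_0)\big)F$, with $Q_1,Q_2$ produced by the $(1-\beta/2)$ and central-charge data of \pref{5.9}--\pref{5.10}. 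Since $\cB_k$ annihilates the $c$-independent factor $\log I_n(0)$, the operator acts indifferently on $\log I_n(0,c)$ or on $\log P_n$, which is why $F=\log P_n$ appears in the conclusion.

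The main obstacle is the faithful bookkeeping of this last step: tracking all lower-weight and constant terms, the $\beta$-dependent coefficients entering through \pref{5.10}, and then checking that the separate $\beta=1$ and $\beta=4$ outputs coalesce into the single duality-invariant triple $Q,Q_1,Q_2$ obeying \pref{5.5}. The extraction of $\frac{\part^4 F}{\part t_1^4}$ and of the cross term $\frac{\part^2 F}{\part t_1\part t_3}$ is where sign and combinatorial slips are most likely, so I would organize everything by weight and first verify the $\beta=2$ case (where $\delta^\beta_{1,4}=0$ removes the left side) as a consistency check before assembling the full PDE.
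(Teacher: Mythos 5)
Your proposal follows exactly the route the paper itself sketches in \pref{5.31}--\pref{5.37}: use the Virasoro constraints of Theorem \ref{t5.1} (which for $\rho=e^{-bz^2}$ indeed reduce to $f\equiv 1$, $\cB_k=\sum_i c_i^{k+1}\part/\part c_i$) to express $\part^4 F/\part t_1^4$, $\part^2F/\part t_2^2$, $\part^2F/\part t_1\part t_3$, $\part^2F/\part t_1^2$ at $t=0$ in terms of the $\cB_k$, substitute into the KP/Pfaff-KP identity \pref{5.37}, and evaluate the left-hand side via \pref{5.36} as $12\,b^{(1)}_n P_{n-2}P_{n+2}/P_n^2$ (with the $\beta=4$ index shift from $\tau_{2n}(t/2,c)$), your full-range subtraction correctly accounting for the $-1$ and the identification of $Q$, $Q_1$, $Q_2$ with the duality-invariant polynomials. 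This matches the paper's argument (detailed in \cite{14}) in all essentials, so no further comparison is needed.
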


\subsubsection*{Hermitian, symmetric and symplectic Laguerre ensembles} Given the disjoint union $E\subset \R^+$ and the weight $z^a e^{-bz}$, the $\cB_k$ take on the form
\[
\cB_k = \sum^{2r}_1 c^{k+2}_i \frac{\part}{\part c_i}.\]
Also define the polynomials, again respecting the duality (cf.\ footnote 16)
\begin{align*}
Q&= \bcs \frac34 n (n-1)(n+2a)(n+2a+1)&\hbox{for }\beta=1\\[7pt]
\frac32 n(2n+1)(2n+a)(2n+a-1)&\hbox{for }\beta=4,\ecs\\
Q_2&= \lrp{3\beta n^2 -\frac{a^2}{\beta} + 6an +4\lrp{1-\frac32}a+3} \delta^3_{1,4}+ (1-a^2)(1-\delta^\beta_{1,4}),\\
Q_1&= \lrp{\beta n^2 +2an+\lrp{1-\frac{\beta}{2}}a},\; Q_0=b(2-\delta^\beta_{1,4})\lrp{n+\frac{a}{\beta}},\\
Q_{-1}&=\frac{b^2}{\beta} (2-\delta^\beta_{1,4}).\end{align*}

\begin{thm}[Adler-van Moerbeke \cite{14}] \lab{t5.5} The following probabilities
\[
P_n(E)=\frac{\int_{E^n}|\Delta_n(z)|^\beta \prod^n_{k=1} z^a_k e^{-bz} dz_k}{\int_{\R^n_+ }|\Delta_n(z)|^\beta \prod^n_{k=1} z^a_k e^{-bz_k}dz_k} \]
satisfy the {\rm PDE}\footnote{With the index convention $\bcs 2&\hbox{when $n$ is even and $\beta=1$}\\
1&\hbox{when $n$ is arbitrary and $\beta=4$.}\ecs$}
$(F:=F_n=\log P_n)$
\begin{align*}
&\delta^\beta_{1,4}Q\lrp{\frac{P_{n-{2\atop 1}} P_{n+{2\atop 1}}}{P^2_n}-1}\\
&\quad{}=\Big(\cB^4_{-1} -2(\delta^\eta_{1,4} +1) \cB^3_{-1} \\&\qquad{}
+(Q_2+6\cB^2_{-1}
F-4(\delta^\beta_{1,4}+1)\cB_{-1}F)\cB^2_{-1}-3\delta^\beta_{1,4} (Q_1-\cB_{-1} F)\cB_{-1} \\
&\qquad{}+ Q_{-1}  (3\cB^2_0 -4\cB_1\cB_{-1}-2\cB_1) +Q_0(2\cB_0 \cB_{-1}-\cB_0)\Big) F.\end{align*}
\end{thm}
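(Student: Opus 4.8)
The plan is to follow the template already set by the Gaussian case, Theorem~\ref{t5.4}, adapting the Virasoro data to the Laguerre weight $\rho(z)=z^ae^{-bz}$. The single structural input is the $\KP$-like equation \pref{5.37}, which by Theorems~\ref{t5.2} and~\ref{t5.3} holds simultaneously for $\beta=2,1,4$; the whole proof consists of evaluating both of its sides at $t=0$ and translating the surviving $t$-derivatives into boundary derivatives in $c$ through the Virasoro constraints.

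For the left-hand side I would first record the Laguerre Virasoro data: since $-\rho'/\rho=(bz-a)/z$, one has $f(z)=z$, so that $\cB_k=\sum_{i=1}^{2r}c_i^{k+2}\part/\part c_i$ as stated, together with $a_1=1$, $b_0=-a$, $b_1=b$ and all other $a_i,b_i$ vanishing. Using \pref{5.36} to rewrite $\tau_{n\mp2}\tau_{n\pm2}/\tau_n^2$ in terms of the ratio $P_{n\mp}P_{n\pm}/P_n^2$ produces the combinatorial prefactor, which packages into the invariant polynomial $Q$ (different for $\beta=1$ and $\beta=4$ because of the index shift $n\pm2$ versus $n\pm1$ and the rescaling $t\mapsto t/2$). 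The additive $-1$ is obtained by subtracting the full-range instance of \pref{5.37}: over $F$ there are no boundary points, so every $\cB_k$ drops out, $P_n(F)\equiv1$, and the purely $t$-dependent (constant) contributions on the right cancel between the two copies, leaving exactly $\delta^\beta_{1,4}Q(P_{n\mp}P_{n\pm}/P_n^2-1)$.

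The core of the argument is the right-hand side, i.e.\ expressing
\[
\part_{t_1}^4F\big|_{0},\quad \part_{t_2}^2F\big|_{0},\quad \part_{t_1}\part_{t_3}F\big|_{0},\quad \part_{t_1}^2F\big|_{0},\qquad F=\log I_n(t,c),
\]
in terms of the $\cB_k$. I would write out the Virasoro constraints \pref{5.26} for $k=-1,0,1,2,3$; with the Laguerre data each reduces to $\bigl(-\cB_k+\bj^{(2)}_{k+1,n}+a\,\bj^{(1)}_{k+1,n}-b\,\bj^{(1)}_{k+2,n}\bigr)I_n=0$, which after expanding the $\bj$'s via \pref{5.10} has precisely the shape \pref{5.31}. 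Solving these linear relations expresses each first $t$-partial at $t=0$ through the $\cB_j$; the higher derivatives $\part_{t_1}^4$, $\part_{t_2}^2$, $\part_{t_1}\part_{t_3}$ are then generated by repeatedly applying the composition rule \pref{5.34} (using that $\cB_\ell$ commutes with every $\part_{t_k}$) and, for the nonlinear piece $6(\part_{t_1}^2F)^2$ of $(\KP)_t$, the product rule \pref{5.35}. Substituting everything into $(\KP)_tF$ and collecting by weight yields the stated operator $\cB_{-1}^4-2(\delta^\beta_{1,4}+1)\cB_{-1}^3+\cdots$, with the coefficients $Q_2,Q_1,Q_0,Q_{-1}$ emerging as the accumulated $n,\beta,a,b$-dependent constants. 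Since $\log P_n=\log I_n(0,c)-\log I_n(0)$ and $\log I_n(0)$ is independent of $c$, the $\cB_k$ act identically on $\log I_n(0,c)$ and on $F_n=\log P_n$, so the final equation may be written in $F_n$ as claimed.

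The main obstacle is purely organizational: keeping the ``lower-weight'' remainders of \pref{5.31}--\pref{5.35} under control so that they assemble into the exact constants $Q_i$ rather than spurious extra operators, while simultaneously handling the $a$-dependence (absent in the Gaussian case) that the factor $z^a$ injects through $b_0=-a$ and $a_1=1$, and the bifurcation between $\beta=1$ (even $n$, shift $n\pm2$) and $\beta=4$ (arbitrary $n$, shift $n\pm1$, $t\mapsto t/2$). A useful consistency check along the way is the duality $Q_i(-2n,\beta,-a/2,-b/2)|_{\beta=1}=Q_i(n,\beta,a,b)|_{\beta=4}$ of \pref{5.5}, which follows from the $\beta\mapsto4/\beta$ symmetry of the Virasoro generators in Theorem~\ref{t5.1} and must be respected by the computed coefficients.
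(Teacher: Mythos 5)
Your proposal is correct and follows essentially the same route as the paper: Theorem \ref{t5.5} is obtained there exactly as you describe, by evaluating both sides of the KP-like equation \pref{5.37} at $t=0$, rewriting the left side via \pref{5.36} (with the $n\pm2$ versus $n\pm1$ index shift and $t\mapsto t/2$ for $\beta=4$), and converting the partials $\part_{t_1}^4F$, $\part_{t_2}^2F$, $\part_{t_1}\part_{t_3}F$, $\part_{t_1}^2F$ into the boundary operators $\cB_k$ through the Virasoro constraints \pref{5.26} by the scheme \pref{5.31}--\pref{5.35}. Your Laguerre data ($f(z)=z$, so $a_1=1$, $b_0=-a$, $b_1=b$, hence $\cB_k=\sum_i c_i^{k+2}\part/\part c_i$) and the reduced constraint $\bigl(-\cB_k+\bj^{(2)}_{k+1,n}+a\,\bj^{(1)}_{k+1,n}-b\,\bj^{(1)}_{k+2,n}\bigr)I_n=0$ are precisely what the paper's method uses, so the two arguments coincide.
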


\subsubsection*{{\rm ODE}s, when $E$ has one boundary point}  Assume the set $E$ consists of one boundary point $c=x$, besides the boundary of the full range.  In that case the PDEs above lead to ODEs in~$x$:

(1) \emph{Gaussian $(n\times n)$ matrix ensemble} (for the function $\beta=2,1,4$):
\[
f_n(x)=\frac{d}{dx}\log P_n(\mathop{\max}_i \la_i\leq x)\]
satisfies
\begin{align*}
&\delta^{3}_{1,4}Q\lrp{\frac{P_{n-{2\atop 1}}P_{n+{2\atop 1}}}{P^2_n} -1} \\
&\qquad{}= f'''_n +6^{'2}_n  +\lrp{4\frac{b^2 x^2}{\beta}(\delta^\beta_{1,4} - 2) +Q_2}f'_n - 4\frac{b^2x}{\beta} (\delta^\beta_{1,4} -2)f_n.\end{align*}

(2) \emph{Laguerre ensemble} (for $\beta=2,1,4)$: all eigenvalues $\la_i$ satisfy $\la_i\geq 0$, and
\[
f_n(x)=x\frac{d}{dx}\log P_n(\mathop{\max}_i \la_i\leq x)\]
satisfies (with $f:=f_n(x)$)
\begin{align*}
&\delta^\beta_{1,4}Q \lrp{\frac{P_{n-{2\atop 1}} P_{n+{2\atop 1}}}{P^2_n}-1} -
\lrp{3\delta^\beta_{1,4} -\frac{b^2x^2}{\beta} (\delta^\beta_{1,4}-2)-Q_0 x - 3\delta^\beta_{1,4}Q_1}f\\
&= x^3f'''-(2\delta^\beta_{1,4}-1)x^2 f'' +6x^2 f^{'2}\\
&\quad{} - x\lrp{4(\delta^\beta_{1,4} +1) f-\frac{b^2x^2}{\beta}(\delta^\beta_{1,4}-2)-2Q_0 x -Q_2+ 2\delta^\beta_{1,4} +1} f'.\end{align*}

For $\beta=2$, $f_n(x)$ satisfies the third-order equation (of the so-called Chazy-type) with quadratic nonlinearity in $f'_n$ for the each of the ensembles, Gauss, Laguerre and Jacobi.  Then $f_n$ also satisfies an equation, which is the second-order in $f$ and quadratic in $f''$, which after some rescaling can be put in a canonical form.  Namely,
\begin{align*}
\hbox{Gauss}& \qquad g_n(z)=b^{-1/2} f_n(zb^{-1/2})+\frac23 nz,\\
\hbox{Laguerre}&\qquad g_n(z)=f_n(z)=\frac{b}{4} (2n+a)z+\frac{a^2}{4},\end{align*}
satisfies the respective canonical equations of Cosgrove and Cosgrove-Scoufis,
{\setbox0\hbox{(1}\leftmargini=\wd0 \advance\leftmargini\labelsep
 \begin{itemize}
\item $g''^{2}\! =\! -4g'^{3}\! +\! 4(zg' - g)^2\! +\!A_1 g' \!+\!A_2$ \hspace*{1.45in} (Painlev\'e IV)
\item $(zg'')^2 \!=\! (zg'-g)\lrp{-4g'^{2}\!+\!A_1(zg'-g) +A_2} \!+\!A_3 g' \!+\!A_4$ \enspace (Painlev\'e~V).
\end{itemize}}  \noindent 
For the ``Jacobi ensemble", L. Haine and J.-P. Semengue \cite{HS} have shown that the probabilities satisfy the Painlev\'e~VI equation. 

\section{The Spectrum of Coupled Random Matrices and the 2-Toda Lattice}
The purpose of this section is to derive a partial differential equation for the spectral gap probabilities for coupled-GUE-Hermitian random matrices \cite{15}.  The two-Toda lattice is the integrable deformation class of such coupled matrices and the tau-function theory, vertex theory and Virasoro symmetries play a crucial role in the derivation of the partial differential equations.  There is a huge literature on coupled random matrices, see  for example \cite{BZ}, as they play a huge role in matrix models, and string theory, as well as random matrix theory.

\subsection{Matrix Integrals and 2-Toda structure}
Consider the general weight $\rho(y,z)\dy\dz:= \rho_{t,s}(y,z)\dy\dz:= e^{V_{t,s}(y,z)} \dy\dz$ on $\R^2$, with $\rho_0=e^{V_0}$, where
\begin{equation}\lab{6.1}
V_{t,s} (y,z) := V_0(y,z)+\sum^\infty_1 t_iy^i - \sum^\infty_1 s_i z^i = \sum_{i,j\geq 1} c_{ij} y^i z^j+\sum^\infty_1 t_i y^i - \sum^\infty_1 s_i z^i,\end{equation}
with arbitrary $V_0$ and the inner product with regard to a subset $E\subset \R^2$
\begin{equation}
\lab{6.2}
\lra{f,g}_E=\int_E \dy\dz\rho_{t,s} (y,z) f(y)g(z).\end{equation}
Given the moment matrix (over $E$),
\begin{equation}\lab{6.3}
m_n (t,s,c)=: (\mu_{ij})_{0\leq ij \leq n-1}=\lrp{\lra{y^i,z^j}_E}_{0\leq i,j\leq n-1},\end{equation}
according to \cite{30}, the Borel decomposition of the semi-infinite matrix\footnote{$\cD_{k,\ell}$ $(k<\ell\in\Z)$ denotes the set of band matrices with zeros outside the strip $(k,\ell)$. $\La=(\delta_{j=i+1})_{i,j\geq 0}$.}
\begin{align*}
m_\infty&= S^{-1}_1 S_2\hensp{with} S_1\in \cD_{-\infty,0},\; S_2\in\cD_{0,\infty},\\
m_\infty(t,s,c)&= e^{\sum^\infty_1 t_n \La^n} m_\infty(0,0,c) e^{-\sum^\infty_1 s_n \La^{\top n}}\end{align*}
with $S_1$ having $1$'s on the diagonal and $S_2$ having $h_i$'s on the diagonal, leads to two strings $(p^{(1)} (y),p^{(2)}(z))$ of monic polynomials in one variable (dependent on $E$), constructed, in terms of the character $\ol\chi(z)=(z^n)_{n\in \Z,n\geq 0}$, as follows:
\begin{equation}\lab{6.4}
p^{(1)}(y) = :S_1\ol\chi(y),\quad p^{(2)}(z)=: (S^{-1}_2)^\top \ol\chi(z).\end{equation}
We call these two sequences \emph{bi-orthogonal polynomials}; in fact, according to \cite{30}, the Borel decomposition of $m_\infty=S^{-1}_1S_2$ above is equivalent to the ``orthogonality'' relations of the polynomials
\begin{equation}\lab{6.5}
\lra{p^{(1)}_n,p^{(2)}_m}_E=\delta_{n,m}h_n.\end{equation}
The matrices
\[
L_1:= S_1\La S^{-1}_1\hensp{and} L_2:= S_2\La^\top S^{-1}_2\]
interact with the vector of string orthogonal polynomials as follows:
\begin{equation}\lab{6.6}
L_1 p^{(1)}(y) = yp^{(1)}(y),\quad h L^\top_2 h^{-1} p^{(2)} (z)=zp^{(2)}(z).\end{equation}
Also define wave vectors $\Psi_1$ and $\Psi^\ast_2$ as follows:
\begin{alignat}{5} \label{6.7}
\Psi_1(z)&:= e^{\sum t_kz^k}p^{(1)}(z)\quad\hensp{and}\quad& \Psi^\ast_2(z)&:= e^{-\sum s_k z^{-k}} h^{-1} p^{(2)} (z^{-1})\\ \nn
&= e^{\sum t_k z^k}S_1\ol\chi(x)&&= e^{-\sum s_k z^{-k}}(S^{-1}_2)^\top \ol\chi(z^{-1}).\end{alignat}
As a function of $(t,s)$, the couple $L:= (L_1,L_2)$ satisfies the two-Toda lattice equations \pref{2.22}, and $\Psi_1$ and $\Psi^\ast_2$ satisfy \cite{15} (remember that $L$, $\Psi_1$ and $\Psi^\ast_2$ all depend on $E$)
\begin{align}
\label{6.8}
&\bcs \frac{\part}{\part t_n}\Psi = (L^n_0,0)_+ \Psi = ((L^n_1)_u(L^n_1)_u)\Psi\\[7pt]
\frac{\part}{\part s_n}\Psi = (0,L^n_2)_+\Psi = ((L^n_2)_\ell, (L^n_2)_\ell)\Psi\ecs\\[6pt] \nn
&\bcs \frac{\part}{\part t_n}\Psi^\ast = -((L^n_1,0)_+)^\top \Psi^\ast\\[7pt]
\frac{\part}{\part s_n}\Psi^\ast = -((0,L^n_2)_+)^\top \Psi^\ast,\ecs\end{align}
where in the above we introduced also $\Psi^\ast_1,\Psi_2$, $\Psi = (\Psi_1,\Psi_2)$, $\Psi^\ast=(\Psi^\ast_1,\Psi^\ast_2)$.
Moreover $(\tau^E_0:=1)$
\begin{equation}\lab{6.9}\begin{split}
n! \det m_n(t,s,c)&= \iint_{(u,v)\in E^n\subseteq \R^{2n}} \Delta_n (u)\Delta_n(v)\prod^n_{k=1}\lrp{
e^{V_{t,s}(u_k,v_k)}du_k dv_k}\\  
&= n!\det \lrp{E_n(t)m_\infty(0,0,c)E_n(-s)^\top}\\  
&= \prod^{n-1}_0 h_i (t,s,c)\\  
&= \tau^E_n(t,s,c),\end{split}\end{equation}
where $E_n(t):=$ (the first $n$ rows of $e^{\sum^\infty_1 t_n \La^n}$) is a matrix of Schur polynomials $p_n(t)$.  Also $\tau_n(t,s,c)$ is a $\tau$-function with regard to $t$ and $s$ and
\[
(L^k_1)_{nn} = \frac{\part}{\part t_k} \log \frac{\tau_{n+1}}{\tau_n},\quad (L^k_2)_{nn}=-\frac{\part}{\part s_k}
\log \frac{\tau_{n+1}}{\tau_n}.\]
In particular,
\[
L_1=\cdots + \frac{\part}{\part t_1}\log\frac{\tau_{n+1}}{\tau_n}+\La\hensp{and} L_2 = \frac{\tau_{n-1}\tau_{n+1}}{\tau^2_n}\La^{-1} -\frac{\part}{\part s_1}\log\frac{\tau_{n+1}}{\tau_n}+\cdots,\]
with the wave vectors parametrized by the $\tau$-functions as follows:
\begin{align}
\label{6.10}
\Psi_1(z) &= \lrp{ \frac{\tau_n(t-[z^{-1}],s)}{\tau_n(t,s)} e^{\sum^\infty_1 t_iz^i}z^n}_{n\geq 0},\\
\Psi^\ast_2(z)&= \lrp{\frac{\tau_n(t,s+[z])}{\tau_{n+1}(t,s)} e^{-\sum^\infty_1 s_iz^{-i}}z^{-n}}_{n\geq 0},\notag\end{align}
etc.\ for $\Psi^\ast_1,\Psi^\ast_2$.
Introducing the wave matrices
\begin{align}\label{6.11}
W_i & =S_i(t,s) e^{\xi_i(\La)}\\
 \xi_1(z)&=\sum^\infty_1 t_k z^k\qquad \xi_2(z)=\sum^\infty_1 s_k z^{-k},\nn\\
 \label{6.12}
 \Psi_i(t,s;z)& = W_i\ol\chi(z)=e^{\xi_i(z)} S_i \ol\chi(z),\\
 \Psi^\ast_i(t,s;z)&= (W^\top_i)^{-1}\ol\chi  (z^{-1})=e^{-\xi_i(z)}(S^\top_i)^{-1}\ol \chi(z^{-1}),\nn\end{align}
 from the relations \pref{6.8} the pair of matrices $W=(W_1,W_2)$ satisfies the bilinear relation (in the $\pm$ splitting of  \pref{2.20})
 \[
 (W(t,s)W(t',s')^{-1})_- = 0\]
 or equivalently,
 \begin{equation}\lab{6.13}
 W_1(t,s)W_1(t',s')^{-1}=W_2(t,s)W_2 (t',s')^{-1},\end{equation}
 from which one proves Proposition \ref{p6.1}; for details see \cite{31a, 31b}.
 \begin{prop}[bi-infinite and semi-infinite] \label{p6.1}
 The wave and adjoint wave functions satisfy, for all $m,n\in \Z$ (bi-infinite) and $m,n\geq 0$ (semi-infinite) and $t,s,t', s'\in\C^\infty$,
 \begin{equation} \lab{6.14}
 \oint_{z=\infty} \Psi_{1n}(t,s;z)\Psi^\ast_{1m} (t',s';z') \frac{dz}{2\pi i z} =\oint_{z=0} \Psi_{2n}(t,s;z)\Psi^\ast_{2m} (t',s';z') \frac{dz}{2\pi i z},\end{equation} \end{prop}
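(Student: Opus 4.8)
The plan is to deduce \pref{6.14} directly from the operator bilinear relation \pref{6.13}, which is the heart of the matter and is already furnished by the two-Toda theory through \pref{6.8} and the $\pm$-splitting of \pref{2.20}. The bridge between the matrix identity \pref{6.13} and the contour-integral statement \pref{6.14} is a single residue lemma: for two (semi-infinite or bi-infinite) matrices $A,B$ one has
\[
(AB^\top)_{nm}=\oint\frac{dz}{2\pi i z}\,\bigl(A\ol\chi(z)\bigr)_n\,\bigl(B\ol\chi(z^{-1})\bigr)_m .
\]
This is immediate from the orthogonality $\oint\frac{dz}{2\pi i z}\,z^{k-l}=\delta_{kl}$: writing $(A\ol\chi(z))_n=\sum_k a_{nk}z^k$ and $(B\ol\chi(z^{-1}))_m=\sum_l b_{ml}z^{-l}$ and integrating term by term collapses the double sum to $\sum_k a_{nk}b_{mk}=(AB^\top)_{nm}$.

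Next I would match the two factors in the lemma to the wave and adjoint wave functions. By \pref{6.12}, using $\La\ol\chi(z)=z\ol\chi(z)$ and hence $e^{\xi_i(\La)}\ol\chi(z)=e^{\xi_i(z)}\ol\chi(z)$, one has $\Psi_{in}(t,s;z)=\bigl(W_i(t,s)\ol\chi(z)\bigr)_n$ and $\Psi^\ast_{im}(t',s';z)=\bigl(((W_i(t',s'))^\top)^{-1}\ol\chi(z^{-1})\bigr)_m$. Hence, taking $A=W_i(t,s)$ and $B=((W_i(t',s'))^\top)^{-1}$ in the lemma, so that $B^\top=(W_i(t',s'))^{-1}$ and $AB^\top=W_i(t,s)\,W_i(t',s')^{-1}$, the integrand becomes exactly $\Psi_{in}(t,s;z)\,\Psi^\ast_{im}(t',s';z)$ and the lemma reads
\[
\bigl(W_i(t,s)\,W_i(t',s')^{-1}\bigr)_{nm}=\oint\frac{dz}{2\pi i z}\,\Psi_{in}(t,s;z)\,\Psi^\ast_{im}(t',s';z).
\]
For $i=1$ the relevant data carry $e^{\xi_1(z)}=e^{\sum t_kz^k}$ together with polynomial parts $(S_1\ol\chi(z))_n$ that are genuine polynomials (since $S_1\in\cD_{-\infty,0}$), so the natural expansion region, and hence contour, is $z=\infty$; for $i=2$ the factor $e^{\xi_2(z)}=e^{\sum s_kz^{-k}}$ and the band structure $S_2\in\cD_{0,\infty}$ dictate expansion in $z^{-1}$ and the contour $z=0$. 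Equating the two residue representations from the last display—$z=\infty$ for $i=1$, $z=0$ for $i=2$—and invoking \pref{6.13}, which asserts that $W_1(t,s)W_1(t',s')^{-1}$ and $W_2(t,s)W_2(t',s')^{-1}$ have identical $(n,m)$ entries, yields precisely \pref{6.14}.

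The step I expect to be the main obstacle is not the algebra but the analytic bookkeeping behind the residue lemma: justifying that the two factors may be expanded in the two different regions ($z=\infty$ on the left, $z=0$ on the right) so that the prescribed contour correctly extracts $(AB^\top)_{nm}$, and that the infinite matrix products converge. The triangularity hypotheses $S_1\in\cD_{-\infty,0}$ and $S_2\in\cD_{0,\infty}$ are exactly what make the relevant sums finite and the pairings well defined, and they are also where the bi-infinite and semi-infinite regimes diverge: in the semi-infinite case $(n,m\ge 0)$ one must check that restricting all indices to nonnegative values neither introduces nor discards boundary contributions in the term-by-term integration. This is a routine but genuinely case-specific verification, which is why Proposition~\ref{p6.1} states the two regimes together and refers to \cite{31a, 31b} for the full details.
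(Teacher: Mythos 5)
Your proposal is correct and follows the same route as the paper: the paper derives the matrix bilinear relation \pref{6.13} from \pref{6.8} and then states that Proposition \ref{p6.1} follows, delegating the details to \cite{31a, 31b}, and the step you supply — converting \pref{6.13} into \pref{6.14} entrywise via the formal residue pairing $(AB^\top)_{nm}=\oint\frac{dz}{2\pi iz}(A\ol\chi(z))_n(B\ol\chi(z^{-1}))_m$ applied to $A=W_i(t,s)$, $B=(W_i(t',s')^\top)^{-1}$, with the contour at $z=\infty$ for $i=1$ and $z=0$ for $i=2$ — is precisely that standard argument. Your closing caveats about the formal/analytic status of the expansions are at the same level of rigor as the paper itself, so nothing essential is missing.
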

 which yields immediately from \pref{6.10}
 \begin{prop}\lab{p6.2}
 Two-Toda $\tau$-functions satisfy the following bilinear identities:
 \begin{align}\label{6.15}
 \oint_{z=\infty }&\tau_n (t-[z^{-1}],s)\tau_{m+1}(t'+[z^{-1}],s') e^{\sum^\infty_1 (t_i-t'_i)z^i} z^{n-m-1}\dz\\
 \nn&= \oint_{z=0}\tau_{n+1}(t,s-[z])\tau_m (t',s'+[z]) e^{\sum^\infty_1(s_i-s'_i)z^{-i}} z^{n-m-1}\dz,\end{align}
 or, expressed in terms of the Hirota symbol,\footnote{For the customary Hirota symbol $p(\part_t)f\circ g:= p\lrp{\frac{\part}{\part y}} f(t+y)g(t-y)\Big|_{y=0}$, with $\part_t = (\part_{t_1},\part_{t_2},\dots)$, $\tilde\part_t = (\part_{t_1},\frac12 \part_{t_2},\frac13 \part_{t_3},\dots)$.}
 \begin{align}
 \label{6.16}
 &\sum^\infty_{j=0} p_{m=n+j}(-2a) p_j (\tilde\part_t) e^{\sum^\infty_1\big(a_k\frac{\part}{\part t_k}+b_k\frac{\part}{\part s_k}\big)} \tau_{m+1}\circ\tau_n\\
 &\qquad {}= \sum^\infty_{j=0} p_{-m+n+j}(-2b) p_j(\tilde\part_s) e^{\sum^\infty_1 \big( a_k\frac{\part}{\part t_k}+b_k\frac{\part}{\part s_k}\big)} \tau_m\circ\tau_{n+1},\nn\end{align}
both for the bi-infinite $(n,m\in \Z)$ and the semi-infinite case ($n,m\in \Z$, $n,m\geq 0$). \end{prop}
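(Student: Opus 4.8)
The plan is to read off the $\tau$-function bilinear identity \pref{6.15} as an immediate consequence of the wave-function bilinear identity of Proposition \ref{p6.1}, and then to convert it into the Hirota form \pref{6.16} by a residue computation of the type first carried out by Ueno and Takasaki \cite{22}. The starting point is \pref{6.14}, valid for all admissible $n,m$ and all $t,s,t',s'$. Into both sides I substitute the $\tau$-parametrizations of the four wave functions $\Psi_{1n}$, $\Psi^\ast_{1m}$, $\Psi_{2n}$, $\Psi^\ast_{2m}$: those of $\Psi_1$ and $\Psi^\ast_2$ are displayed in \pref{6.10}, and the analogues for $\Psi^\ast_1$ and $\Psi_2$ (indicated by the ``etc.'' following \pref{6.10}) follow from \pref{6.7} and \pref{6.12} in the same way, with the shift $[z^{-1}]$ replaced by $[z]$ and the indices $n$, $n+1$ exchanged appropriately.

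After the substitution the computation is purely formal. At $z=\infty$ the prefactors $e^{\sum t_iz^i}$ and $e^{-\sum t'_iz^i}$ coming from $\Psi_{1n}$ and $\Psi^\ast_{1m}$ combine into $e^{\sum(t_i-t'_i)z^i}$, while at $z=0$ the factors $e^{-\sum s_iz^{-i}}$ and $e^{\sum s'_iz^{-i}}$ from $\Psi_{2n}$ and $\Psi^\ast_{2m}$ combine into $e^{\sum(s_i-s'_i)z^{-i}}$; the monomials $z^n$, $z^{-m}$ together with the $z^{-1}$ from $dz/z$ collect into $z^{n-m-1}$ on each side, and the normalizing denominators $\tau_n(t,s)$, $\tau_{m+1}(t',s')$, $\tau_{n+1}(t,s)$, $\tau_m(t',s')$ are common to the two sides and drop out. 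What survives is exactly \pref{6.15}, with the numerator $\tau$'s carrying the shifted arguments $\tau_n(t-[z^{-1}],s)\,\tau_{m+1}(t'+[z^{-1}],s')$ at $z=\infty$ and $\tau_{n+1}(t,s-[z])\,\tau_m(t',s'+[z])$ at $z=0$.

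To reach \pref{6.16} I would introduce center variables and half-difference parameters $a$, $b$, writing $t-t'$ and $s-s'$ in terms of $-2a$, $-2b$, so that the two exponentials become the Schur generating series $e^{\sum(t_i-t'_i)z^i}=\sum_\ell p_\ell(-2a)z^\ell$ and $e^{\sum(s_i-s'_i)z^{-i}}=\sum_\ell p_\ell(-2b)z^{-\ell}$. Simultaneously I would expand each shifted $\tau$ by the vertex identity $\tau(t\mp[z^{-1}])=e^{\mp\sum_{k\geq 1}z^{-k}\tilde\part_{t_k}}\tau(t)$, and the corresponding one in $s$ with $z^{-k}$ replaced by $z^k$; these supply the Schur operators $p_j(\tilde\part_t)$ and $p_j(\tilde\part_s)$ (up to the standard signs), while the translation of the center variables produces the common exponential $e^{\sum_k(a_k\part_{t_k}+b_k\part_{s_k})}$ acting on the Hirota products $\tau_{m+1}\circ\tau_n$ and $\tau_m\circ\tau_{n+1}$. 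Taking residues (i.e.\ extracting the coefficient of $z^{-1}$) on each side then pins down the index shift $m-n+j$ in the Schur coefficients and delivers \pref{6.16}.

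The main obstacle is the residue bookkeeping in this last step: one has to check that the Schur expansion of the exponential, the vertex expansion of the two shifted $\tau$'s, and the power $z^{n-m-1}$ conspire so that the residue selects precisely the coefficient $p_{m-n+j}(-2a)$ together with the operator $p_j(\tilde\part_t)$ (and the mirror statement at $z=0$), with all signs consistent with the convention $e^{\sum t_iz^i}=\sum_j p_j(t)z^j$. A secondary point needing care is the passage between the bi-infinite ($n,m\in\Z$) and semi-infinite ($n,m\geq 0$) cases: in the semi-infinite setting one must make sure the contour deformations and the truncations of the wave matrices $W_i$ in \pref{6.11}--\pref{6.12} are legitimate, which is exactly where the hypotheses of Proposition \ref{p6.1} enter. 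Granting these, no further analytic input is needed and both \pref{6.15} and \pref{6.16} follow.
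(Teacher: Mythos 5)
Your proposal is correct and follows essentially the same route as the paper, whose proof is just the two steps you carry out: \pref{6.15} is read off by inserting the $\tau$-function representations \pref{6.10} (and their analogues for $\Psi^\ast_1$, $\Psi_2$) into the wave-function bilinear identity \pref{6.14} of Proposition \ref{p6.1}, and \pref{6.16} follows from the shifts $t\mapsto t-a$, $t'\mapsto t'+a$, $s\mapsto s-b$, $s'\mapsto s'+b$ together with the definition of the Hirota symbol. The residue bookkeeping you flag (Schur expansions $e^{\sum(t_i-t'_i)z^i}=\sum_\ell p_\ell(-2a)z^\ell$, the shift operators $p_j(\tilde\part_t)$, and the selection of $p_{-m+n+j}$ by the power $z^{n-m-1}$) is exactly what the paper leaves implicit, and your handling of it is sound.
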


\begin{proof}
\pref{6.15} follows at once from Proposition \ref{6.1} and the $\tau$-function representations \pref{6.10}, whereas \pref{6.16} follows from the shifts $t\mapsto t-a$, $t'\mapsto t'+a$, $s\mapsto s-b$, $s'\mapsto s'+b$, combined with the definition of the Hirota symbol.\end{proof}

This has as a direct consequence the following:
 Two-Toda $\tau$-functions $\tau(t,s)$ satisfy the KP-hierarchy in $t$ and $s$ separately, of which the first equation reads
\begin{equation}\label{6.17}
\lrp{\frac{\part}{\part t_1}}^4 \log\tau +6\lrp{\lrp{\frac{\part}{\part t_1}}^2\log\tau}^2+3\lrp{\frac{\part}{\part t_2}}^2 \log\tau-4\frac{\part^2}{\part t_1 \part t_3} \log\tau=0.\end{equation}
But they also satisfy the following identities \cite{15}:
\begin{thm}\lab{p6.3}
Two-Toda $\tau$-functions satisfy
\begin{equation*}
\lrc{\frac{\part^2 \log \tau_n}{\part t_1\part s_2},\frac{\part^2\log\tau_n}{\part t_1\part s_1}}_{t_1}+\lrc{
\frac{\part^2 \log\tau_n}{\part s_1\part t_2},\frac{\part^2 \log\tau_n}{\part t_1\part s_1}}_{s_1}=0 \end{equation*}
and
\begin{equation}\lab{6.18}
-\frac{\part}{\part s_1}\log\frac{\tau_{n+1}}{\tau_{n-1}} = \frac{ \frac{\part^2}{\part t_1 \part s_2}\log\tau_n}{\frac{\part^2 }{\part t_1 \part s_1}\log \tau_n},\quad \frac{\part}{\part t_1}\log \frac{\tau_{n+1}}{\tau_{n-1}} = \frac{\frac{\part^2}{\part s_1 \part t_2}\log\tau_n}{\frac{\part^2}{\part {s_1}\part t_1 }\log\tau_n}.\end{equation}\end{thm}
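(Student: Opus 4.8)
The plan is to run everything off the bilinear identity of Proposition \ref{p6.2}: I would extract from \pref{6.16} the three lowest Hirota equations involving only $t_1,s_1,t_2,s_2$ and the neighbours $\tau_{n\pm1}$, obtain \pref{6.18} as a quotient of two of them, and then derive the first (Wronskian) identity algebraically from \pref{6.18}. Accordingly I would prove \pref{6.18} first and treat the unnumbered bracket relation as a corollary. Throughout, $D_{t_1},D_{s_1},\dots$ denote the Hirota bilinear derivatives, so that $D_xD_y\,\tau_n\circ\tau_n=2\tau_n^2\,\partial_x\partial_y\log\tau_n$ and $D_x(f\circ g)=f_xg-fg_x$.

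First I would produce the denominator. Taking $m=n-1$ in \pref{6.16}, so the left member carries $\tau_n\circ\tau_n$ and the right member $\tau_{n-1}\circ\tau_{n+1}$, I would Taylor-expand in the shift parameters $a,b$ about $0$ and read off the coefficient of $b_1$ at $a=0$. Since the prefactor $p_{-1+j}(-2a)$ collapses the left sum to its $j=1$ term, the left side is born with a bare $D_{t_1}$, and matching coefficients gives $\tfrac12 D_{t_1}D_{s_1}\tau_n\circ\tau_n=-\tau_{n-1}\tau_{n+1}$, i.e. the two-dimensional Toda equation
\[
\frac{\partial^2}{\partial t_1\,\partial s_1}\log\tau_n=-\frac{\tau_{n-1}\tau_{n+1}}{\tau_n^2}.
\]
Keeping $m=n-1$ but reading the coefficient of $b_2$ at $a=0$ yields the numerator $D_{t_1}D_{s_2}\tau_n\circ\tau_n=-2\,D_{s_1}(\tau_{n-1}\circ\tau_{n+1})$, that is $\tau_n^2\,\partial_{t_1}\partial_{s_2}\log\tau_n=\tau_{n-1}\,\partial_{s_1}\tau_{n+1}-\tau_{n+1}\,\partial_{s_1}\tau_{n-1}$. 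Dividing this by the Toda equation, the $\tau_n^2$'s cancel and the right member collapses to $\partial_{s_1}\log\tau_{n-1}-\partial_{s_1}\log\tau_{n+1}=-\partial_{s_1}\log(\tau_{n+1}/\tau_{n-1})$, which is the first equation of \pref{6.18}.

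The second equation of \pref{6.18} is the mirror image: taking instead $m=n+1$, so the right member carries $\tau_{n+1}\circ\tau_{n+1}$ and is born with a bare $D_{s_1}$, and reading the coefficient of $a_2$ at $b=0$ gives $D_{s_1}D_{t_2}\tau_n\circ\tau_n=-2\,D_{t_1}(\tau_{n+1}\circ\tau_{n-1})$ after the shift $n\mapsto n-1$; the same division by the Toda equation produces $\partial_{t_1}\log(\tau_{n+1}/\tau_{n-1})$. Finally, for the Wronskian identity I would use $\{f,g\}_x=g^2\,\partial_x(f/g)$. Writing each bracket in the statement with $g=\partial_{t_1}\partial_{s_1}\log\tau_n$ and substituting the two ratios just computed, the two terms become $(\partial_{t_1}\partial_{s_1}\log\tau_n)^2$ times $-\partial_{t_1}\partial_{s_1}\log(\tau_{n+1}/\tau_{n-1})$ and $+\partial_{s_1}\partial_{t_1}\log(\tau_{n+1}/\tau_{n-1})$ respectively, which cancel by equality of mixed partials.

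The main obstacle I anticipate is the bookkeeping in the coefficient extraction: one must choose the offsets $m=n\mp1$ precisely so that the relevant sum collapses to a single $j$ and the spectator variable survives as a bare $D_{t_1}$ (resp.\ $D_{s_1}$), and one must verify that the numerator equation comes out as a genuine $s_1$- (resp.\ $t_1$-) Wronskian of $\tau_{n-1}$ and $\tau_{n+1}$. It is exactly this Wronskian structure, once divided by $-\tau_{n-1}\tau_{n+1}$, that telescopes into the logarithmic derivative and makes \pref{6.18} come out clean; pinning down the signs and the factor $-2$ in the two Hirota equations (hence the sign in the Toda equation) is the only delicate point, since everything after the division is formal.
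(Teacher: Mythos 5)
Your proposal is correct and is essentially the paper's own route: Theorem \ref{p6.3} is presented as a consequence of the bilinear identities of Proposition \ref{p6.2} (the paper defers details to \cite{15}, where the proof runs through exactly this coefficient extraction from the Hirota form \pref{6.16}). Your bookkeeping checks out, including the delicate signs: $m=n-1$ yields $\tfrac12 D_{t_1}D_{s_1}\tau_n\circ\tau_n=-\tau_{n-1}\tau_{n+1}$ (i.e., $\partial_{t_1}\partial_{s_1}\log\tau_n=-\tau_{n+1}\tau_{n-1}/\tau_n^2$, the correct sign for the paper's convention with $-\sum s_iz^i$ in the weight) together with $D_{t_1}D_{s_2}\tau_n\circ\tau_n=-2D_{s_1}\tau_{n-1}\circ\tau_{n+1}$, the choice $m=n+1$ gives the mirror equation $D_{s_1}D_{t_2}\tau_n\circ\tau_n=-2D_{t_1}\tau_{n+1}\circ\tau_{n-1}$ after the shift $n\mapsto n-1$, division then produces \pref{6.18} exactly as stated, and the bracket identity reduces to equality of mixed partials of $\log(\tau_{n+1}/\tau_{n-1})$ via $\lrc{f,g}_{x}=g^2\,\partial_x(f/g)$.
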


With the two-Toda lattice, we associate \emph{four} different vertex operators $\X_{ij}(\la,\mu)$ for $1\leq i,j\leq 2$; they map infinite vectors of $\tau$-functions into $\tau$-vectors.  The vertex operators $\X_{11}$ and $\X_{22}$ are basic vertex operators for Toda, and KP  as well, whereas $\X_{12}$ and $\X_{21}$ are vertex operators, native to two-Toda.  In particular, we construct for the semi-infinite two-Toda
\begin{equation}\lab{6.19}
\X_{12}(\mu,\la)=\La^{-1}\ol\chi (\la) X(-s,\la) X(t,\mu) \ol\chi(\mu),\end{equation}
with $\La$ the customary shift-operator $(\La v)_n=v_{n+1}$, and with
\[
X(t,\la):= e^{\sum^\infty_1 t_i\la^i} e^{-\sum^\infty_1 \la^{-i}\frac{1}{i}\frac{\part}{\part t_i}}.\]
Given a two-Toda lattice $\tau$-vector $\tau=(\tau_0,\tau_1,\dots)$, we have that $\tau+\X_{12}(y,z)\tau$ is another $\tau$-vector.  But more is true.  We show that the kernels $K_{12,n}(y,z)$, defined by the ratios $(\X_{12}\tau)_n/\tau_n$, have \emph{eigenfunction expansions} in terms of the eigenfunctions $\Psi$, reminiscent of the Christoffel-Darboux formula for orthogonal polynomials; to be precise \cite{15},
\begin{thm}\lab{t6.4}
We have for $\tau_n = \tau^\R_n$
\begin{equation}\lab{6.20}
K_{12,n} (y,z) := \frac{1}{\tau_n}\X_{12}(y,z)\tau_n =\sum_{0\leq j<n}\Psi^\ast_{2j} (z^{-1})\Psi_{1j}(y),\end{equation}
together iwth a Fredholm determinant-like formula
\begin{equation}\lab{6.21}
\det \lrp{  K_{12,n}(y_\alpha,z_\beta)}_{1\leq \alpha,\beta\leq k} = \frac{1}{\tau_n} \lrp{\sum^k_{\ell=1}\X_{12}(y_\ell,z_\ell)_\tau}_n.\end{equation}\end{thm}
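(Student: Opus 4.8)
The plan is to establish the Christoffel--Darboux identity \pref{6.20} first, reducing everything to the elementary bosonic vertex operators $X(t,\cdot)$ and $X(-s,\cdot)$, and then to deduce the determinantal formula \pref{6.21} from it by a Cauchy--Binet argument. First I would rewrite the two families of wave functions appearing on the right of \pref{6.20} in vertex-operator form. Directly from the $\tau$-function parametrizations \pref{6.10} and the definition $X(t,\la)=e^{\sum t_i\la^i}e^{-\sum\la^{-i}\frac1i\frac{\part}{\part t_i}}$, which shifts $t\mapsto t-[\la^{-1}]$, one reads off
\[
\Psi_{1j}(y)=\frac{y^j}{\tau_j}\,X(t,y)\tau_j,\qquad
\Psi^\ast_{2j}(z^{-1})=\frac{z^j}{\tau_{j+1}}\,X(-s,z)\tau_j,
\]
since $X(-s,z)$ shifts $s\mapsto s+[z^{-1}]$. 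Hence the right-hand sum of \pref{6.20} is
\[
K_{12,n}(y,z)=\sum_{0\le j<n}\frac{(yz)^j}{\tau_j\tau_{j+1}}\bigl(X(-s,z)\tau_j\bigr)\bigl(X(t,y)\tau_j\bigr),
\]
and, after factoring the $j$-independent exponential $e^{\sum t_iy^i-\sum s_iz^i}$ and invoking \pref{6.7}, this is exactly the reproducing kernel $\sum_{j<n}p^{(1)}_j(y)p^{(2)}_j(z)/h_j$ of the bi-orthogonal polynomials of \pref{6.5}.

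The core of \pref{6.20} is therefore the resummation identity equating $\frac1{\tau_n}\X_{12}(y,z)\tau_n$ with this finite sum. I would prove it by induction on $n$, with base case $K_{12,0}\equiv0$ (empty sum) and inductive step the telescoping relation
\[
K_{12,n+1}(y,z)-K_{12,n}(y,z)=\Psi^\ast_{2n}(z^{-1})\,\Psi_{1n}(y).
\]
Computing the left-hand side amounts to comparing the vertex-operator action at consecutive lattice sites $n$ and $n+1$; the shift $\La^{-1}$ in \pref{6.19} is precisely what links these two sites, and the bilinear identity \pref{6.15} (equivalently Proposition \ref{p6.1}), specialized to the argument shifts $t'=t-[y^{-1}]$ and $s'=s+[z^{-1}]$ built into $X(t,y)$ and $X(-s,z)$, converts the difference into the single rank-one increment on the right. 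The residue (Fay-type) content of \pref{6.15} is exactly what forces all cross-terms to cancel, leaving only the diagonal term $\Psi^\ast_{2n}(z^{-1})\Psi_{1n}(y)$.

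For \pref{6.21} I would argue by Cauchy--Binet. Given \pref{6.20}, the matrix $\bigl(K_{12,n}(y_\alpha,z_\beta)\bigr)$ is the product of the $k\times n$ matrix $\bigl(\Psi_{1j}(y_\alpha)\bigr)$ with the $n\times k$ matrix $\bigl(\Psi^\ast_{2j}(z_\beta^{-1})\bigr)$, so
\[
\det\bigl(K_{12,n}(y_\alpha,z_\beta)\bigr)_{1\le\alpha,\beta\le k}
=\sum_{0\le j_1<\cdots<j_k<n}
\det\bigl(\Psi_{1j_\gamma}(y_\alpha)\bigr)\,
\det\bigl(\Psi^\ast_{2j_\gamma}(z_\beta^{-1})\bigr).
\]
On the other side, the iterated vertex-operator action on the right of \pref{6.21}, namely the composition of $\X_{12}(y_1,z_1),\dots,\X_{12}(y_k,z_k)$ applied to $\tau$ and divided by $\tau_n$, produces the same expansion: each operator contributes one factor of wave functions as in \pref{6.20}, while the bosonic normal-ordering relations among the $X(t,\cdot)$'s and, separately, among the $X(-s,\cdot)$'s supply the Vandermonde factors in $y$ and in $z$ that reassemble the two determinants. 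Matching the expansions term by term yields \pref{6.21}; this step is essentially formal once \pref{6.20} and the already-quoted fact that $\tau+\X_{12}(y,z)\tau$ is again a $\tau$-vector are in place.

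The main obstacle is the resummation/telescoping step for \pref{6.20}. The difficulty is bookkeeping rather than conceptual: one must track the interplay of the shift $\La^{-1}$, the powers $y^j$ and $z^j$, and the three simultaneous argument shifts in the $\tau$'s, and then verify, via a single carefully specialized application of the bilinear identity \pref{6.15}, that every contribution except the diagonal rank-one term cancels. Once this Christoffel--Darboux increment is secured, both the full sum \pref{6.20} and the determinant \pref{6.21} follow from the routine inductive and Cauchy--Binet arguments sketched above.
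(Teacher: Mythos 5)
Your treatment of \pref{6.20} is correct, and it is in substance the argument of \cite{15} (the present paper only quotes Theorem \ref{t6.4} from there, giving no proof of its own). Your vertex-operator rewritings of $\Psi_{1j}(y)$ and $\Psi^\ast_{2j}(z^{-1})$ check against \pref{6.10}, the base case $K_{12,0}\equiv 0$ is the correct semi-infinite convention for $\La^{-1}$, and your telescoping increment is exactly the two-point Fay identity
\[
yz\,\tau_n(t,s)\,\tau_n\lrp{t-[y^{-1}],s+[z^{-1}]}-yz\,\tau_n\lrp{t-[y^{-1}],s}\,\tau_n\lrp{t,s+[z^{-1}]}=\tau_{n+1}(t,s)\,\tau_{n-1}\lrp{t-[y^{-1}],s+[z^{-1}]},
\]
which does follow from the single specialization $m=n-1$, $t'=t-[y^{-1}]$, $s'=s+[z^{-1}]$ of \pref{6.15} that you indicate: the factor $e^{\sum(t_i-t'_i)w^i}=(1-w/y)^{-1}$ produces the residue at $w=y$ and the factor $e^{\sum(s_i-s'_i)w^{-i}}=1-(wz)^{-1}$ the residue at $w=0$. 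Dividing by $\tau_n\tau_{n+1}$ and restoring the factor $(yz)^{n}e^{\sum(t_iy^i-s_iz^i)}$ gives precisely $K_{12,n+1}-K_{12,n}=\Psi^\ast_{2n}(z^{-1})\Psi_{1n}(y)$. So the first half of your proposal is sound.

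The genuine gap is in your argument for \pref{6.21} (note in passing that the $\sum^k_{\ell=1}$ in \pref{6.21} is a typo for $\prod^k_{\ell=1}$, as its use in the proof of Corollary \ref{c6.5} confirms; you correctly read it as a product). Cauchy--Binet on the left-hand side is fine and yields a sum over subsets $\{j_1<\cdots<j_k\}\subset\{0,\dots,n-1\}$ of products of two $k\times k$ minors. But the right-hand side, computed by composing the $k$ vertex operators exactly as you describe (the $\La^{-1}$'s accumulate to $\tau_{n-k}$, the normal-ordering factors $(1-y_{\ell'}/y_\ell)$ and $(1-z_{\ell'}/z_\ell)$ assemble into Vandermondes), is a \emph{single} closed-form term,
\[
\frac{1}{\tau_n}\lrp{\prod^k_{\ell=1}\X_{12}(y_\ell,z_\ell)\,\tau}_n
= c\,\Delta_k(y)\,\Delta_k(z)\,\prod^k_{\ell=1}\lrp{(y_\ell z_\ell)^{n-k} e^{\sum_i (t_i y_\ell^i - s_i z_\ell^i)}}\;
\frac{\tau_{n-k}\lrp{t-\sum_\ell [y_\ell^{-1}],\,s+\sum_\ell [z_\ell^{-1}]}}{\tau_n(t,s)},
\]
so there is no term-by-term matching to be done: equating this one term with the Cauchy--Binet sum is precisely a multi-point \emph{determinantal} Fay identity for two-Toda $\tau$-functions, which neither normal ordering nor the quoted fact that $\tau+\X_{12}\tau$ is again a $\tau$-vector delivers. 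That identity is the actual content of \pref{6.21}, and labeling it ``essentially formal'' is where your proof stops short. The standard repair is an induction on $k$: either iterate the bilinear identity \pref{6.15} with the accumulated shifts $t'=t-\sum_\ell[y_\ell^{-1}]$, $s'=s+\sum_\ell[z_\ell^{-1}]$ and peel off one row and column of $\det\lrp{K_{12,n}(y_\alpha,z_\beta)}$ via a Jacobi--Desnanot-type determinant identity, or verify that both sides of \pref{6.21} satisfy the same first-order recursion in $k$ with common initial condition $k=1$, which is \pref{6.20}.
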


\begin{cor}\lab{c6.5}
The vector of  Fredholm determinants equals 
\[
\det(I-\la K^E) =\frac{1}{\tau}e^{-\la \iint_E \dx\dy \rho_{t,s}(x,y) \X_{12}(x,y)}\tau\]
for the kernel $K^E=K_{12,n}(y,z)I_E(z)$, $\tau_n=\tau^\R_n$, with the measure $\rho_{t,s}(x,y) \dx\dy$.
\end{cor}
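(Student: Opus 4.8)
The plan is to derive this identity exactly as the ``continuous'' soliton formula on the third line of \pref{4.7} was obtained in Theorem \ref{t4.1}, now feeding in the two-Toda eigenfunction expansion \pref{6.20} and the minor formula \pref{6.21} of Theorem \ref{t6.4} in place of their KP analogues. The statement is to be read componentwise in $n$, since the kernel $K^E=K_{12,n}(y,z)I_E(z)$ and the $\tau$-vector are both indexed by $n$. First I would expand the Fredholm determinant in powers of $\la$ by its defining (von Koch) series. Because the coupled-matrix correlations are determinants of the kernel evaluated at mixed pairs and the points $(y_\ell,z_\ell)$ range over $E\subset\R^2$ against the coupling weight $\rho_{t,s}$, this reads
\[
\det(I-\la K^E) = \sum_{k\geq 0}\frac{(-\la)^k}{k!}\int_{E^k}\det\big(K_{12,n}(y_\alpha,z_\beta)\big)_{1\leq\alpha,\beta\leq k}\prod_{\ell=1}^k\rho_{t,s}(y_\ell,z_\ell)\,dy_\ell\,dz_\ell,
\]
so that the $k$-th coefficient is the integral of a $k\times k$ minor of $K_{12,n}$ against the measure $\rho_{t,s}\,\dx\dy$ on $E$.

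Second, for each fixed $k$ I would invoke the minor formula \pref{6.21} of Theorem \ref{t6.4}, understood (as in its KP analogue \pref{4.7}) as the normalized action of the \emph{ordered product} of vertex operators on the $\tau$-vector,
\[
\det\big(K_{12,n}(y_\alpha,z_\beta)\big)_{1\leq\alpha,\beta\leq k} = \frac{1}{\tau_n}\big(\X_{12}(y_1,z_1)\cdots\X_{12}(y_k,z_k)\,\tau\big)_n.
\]
This is the step where the operator-product structure of the $\X_{12}$'s (the Cauchy/Christoffel--Darboux factors generated when two vertex operators are brought together) reproduces precisely the determinantal combinatorics of the $k$-point function, and it is already packaged in \pref{6.20}--\pref{6.21}.

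Third, substituting this into the series and pulling out $1/\tau_n$, the $k$-th term becomes $\tfrac{(-\la)^k}{k!}$ times the $k$-fold integral of the ordered product $\prod_\ell\rho_{t,s}(y_\ell,z_\ell)\X_{12}(y_\ell,z_\ell)$ applied to $\tau$. Since, as operators, $\left(\int_E\rho_{t,s}\,\X_{12}\right)^k=\int_{E^k}\prod_\ell\rho_{t,s}(y_\ell,z_\ell)\X_{12}(y_\ell,z_\ell)$ with the integrand taken in the same fixed order as in the $k$-th power, the whole series reassembles into the exponential series, giving
\[
\det(I-\la K^E) = \frac{1}{\tau_n}\sum_{k\geq 0}\frac{1}{k!}\left(-\la\iint_E\dx\dy\,\rho_{t,s}(x,y)\X_{12}(x,y)\right)^k\tau = \left(\frac{1}{\tau}e^{-\la\iint_E\dx\dy\,\rho_{t,s}(x,y)\X_{12}(x,y)}\tau\right)_n,
\]
which is the asserted identity.

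The main obstacle is twofold and lies entirely in the resummation of the third step. First, since the $\X_{12}(y_\ell,z_\ell)$ do not commute, one must check that the ordering of vertex operators forced by the power $(\int\rho_{t,s}\X_{12})^k$ agrees with the ordering demanded by the minor formula \pref{6.21}; using the explicit operator-product expansion of the $\X_{12}$'s one verifies that any reordering either leaves the integral invariant or contributes terms that vanish upon integration by the antisymmetry of the determinant under simultaneous permutation of its rows and columns. Second, these are manipulations of operators acting on an infinite $\tau$-vector with a kernel that must be genuinely summable on $E$, so the termwise identification of the Fredholm series with the exponential series requires a convergence (or formal-power-series in $\la$) justification, in the spirit of Remark \ref{rem1}. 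Granting Theorem \ref{t6.4} and working at the level of formal power series in $\la$, what remains is the purely combinatorial passage ``sum of minors $=$ exponential of the one-point integral,'' exactly as in the derivation of \pref{4.7}.
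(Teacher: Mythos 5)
Your proposal is correct and follows essentially the same route as the paper's proof: expand the Fredholm determinant in its von Koch series, substitute the minor formula \pref{6.21} of Theorem \ref{t6.4} for each $k\times k$ determinant, and resum the resulting multiple integrals of ordered vertex-operator products into the exponential of $-\la\iint_E\rho_{t,s}\,\X_{12}$. Your extra worry about operator ordering in the third step is harmless but not really needed, since $\bigl(\iint_E\rho_{t,s}\X_{12}\bigr)^k$ expands identically (with no commutativity assumption) into the $k$-fold integral of the product in the same fixed order as in \pref{6.21}.
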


\begin{proof}
Putting the corresponding determinant obtained in Theorem \ref{t6.4} in the Fredholm formula below, we find for a subset of the form $E=E_1\times E_2\subset \R^2$, 
\begin{align*} &
(\det(I-\la K^E))_{n\in \Z}\\
&\quad{}=1+\sum^\infty_1 \frac{(-\la)^k}{k!} \int\cdots \int_{E^k}\det (K_{12,n}(x_i,y_i))_{1\leq i, j\leq k}\prod^k_1 (\rho_{t,s}(x_i,y_i)\dx_i\dy_i)\\
&\quad{}= \sum^\infty_0 \frac{(-\la)^k}{k!} \int\cdots\int_{E^k}\frac{1}{\tau} \lrp{\prod^k_1 \X_{12}(x_i,y_i)} \tau
\prod^k_1 (\rho_{t,s}(x_i,y_i)\dx_i\dy_i)\\
&\quad{}= \frac{1}{\tau}\sum^\infty_{k=0} \frac{1}{k!} \lrp{-\la \iint_E \X_{12}(x,y)\rho_{t,s}(x,y)\dx\dy}^k\tau\\
&\quad{}= \frac{1}{\tau} e^{-\la\iint_E \dx\dy \rho_{t,s}(x,y)\X_{12}(x,y)}\tau.\qedhere\end{align*}\end{proof}

\subsection{PDEs for the Gap Probabilities of Coupled Random Matrices}

Given the space of Hermitean matrices $\cH_N$, and given
\begin{align*}
\hbox{spectrum }M_1&= \{x_1,\dots,x_N\}\hbox{ and}\\
\hbox{spectrum }M_2&=\{y_1,\dots,y_N\},\hbox{ with } M_1,M_2\in\cH_N,\end{align*}
we define, for a set $E\subset \R^2$,
\[
\cH^2_{N,E} = \{(M_1,M_2)\in\cH^2_N\hensp{with all} (x_k,y_\ell)\in E\}.\]
Consider the product Haar measure $dM_1 dM_2$ on the product space $\cH^2_N$, with each $dM_i$, decomposed into its radial part and its angular part.  Also define the probability measure\footnote{$\ol\J^{(i)}_k = \J^{(i)}_k \big|_{t\mapsto -s}$, $i=1,2$.}
\begin{equation}\lab{6.22}
\frac{dM_1 dM_2 e^{\Tr V_{t,s}(M_1,M_2)} }{\iint_{\cH^2_{N^t}} dM_1 dM_2 e^{\Tr V_{t,s}(M_1,M_2)}}\end{equation}
and the Virasoro operators
\begin{align*}
\J^{(2)}_k &= \lrp{ J^{(2)}_{k,n}}_{n\in\Z} = \frac12 \lrp{J^{(2)}_k + (2n+k+1) J^{(1)}_k +n(n+1)J^{(0)}_k}_{n\in\Z},\\
\tilde \J^{(2)}_k&= \lrp{\tilde J^{(2)}_{k,n}}_{n\in\Z}=\frac12 \lrp{\tilde J^{(2)}_k +(2n+k+1)\tilde J^{(1)}_k +n(n+1)J^{(0)}_k}_{n\in\Z}.\end{align*}
Given the disjoint union
\begin{equation}\lab{6.23}
E=E_1\times E_2:= \bigcup^r_{i=1} [a_{2i-1} ,a_{2i}]\times \bigcup^s_{i=1} [b_{2i-1},b_{2i}]\subset \R^2,\end{equation}
define the following integral:
\begin{equation}\lab{6.24}
\U_E:= \iint_E \X_{12}(x,y) \rho_0(x,y)\dx\dy,\end{equation}
of the vertex operator $\X_{12}$, defined in \pref{6.19}.

This brings us to the following theorems:
\begin{thm}\lab{t6.6}
Given a set $E$, as in \pref{6.23},
the probability 
\begin{align}\label{6.25}
&P(\hbox{all $M_1$-eigenvalues $\in E_1$ and all $M_2$-eigenvalues }\in E_2)\\
&\qquad{}= \frac{\iint_{\cH^2_{n,E}} dM_1 dM_2 e^{\Tr V_{t,s}(M_1,M_2)}}{\iint_{\cH^2_n} dM_1 dM_2 e^{\Tr V_{t,s}(M_1,M_2)}}=: \frac{\tau^E_n}{\tau_n}\nn\end{align}
is a ratio of two $\tau$-functions $\tau^E_n$ and $\tau_n$, such that
\[
\tau^E_n=((\U_E)^n\tau)_n.\]
Moreover, $\tau_n$ and $\tau^E_n$ satisfy the partial differential equations, labeled for $k\geq -1$, and when all $c_{ij}=0$, but $c_{11}=: c$, we find
\begin{align}\label{6.26}
\lrp{-\sum^r_{i=1} a^{k+1}_i \frac{\part}{\part a_i} + J^{(2)}_{k,n}}\tau^E_n +cp_{k+n} (\tilde \part_t) p_n(-\tilde \part_s) \tau^E_1\circ \tau^E_{n-1}&=0,\\ \nn
\lrp{-\sum^s_{i=1} b^{k+1}_i \frac{\part}{\part b_i} + \tilde J^{(2)}_{k,n}}\tau^E_n+cp_n (\tilde\part_t) p_{k+n}(-\tilde\part_s) \tau^E_1\circ \tau^E_{n-1}&=0.\end{align}
\end{thm}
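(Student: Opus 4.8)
The plan is to first realize the probability as a ratio of two-Toda $\tau$-functions, and then to transport the Virasoro constraints already available for these matrix integrals onto the boundary points of $E$. For the first part, the numerator of \pref{6.25} is, up to the factor $n!$, the moment-matrix determinant with integration restricted to $E$, hence a $\tau$-function in its own right; calling it $\tau^E_n$ and writing $\tau_n$ for the full-range one gives $P=\tau^E_n/\tau_n$ immediately. To obtain the sharper representation $\tau^E_n=((\U_E)^n\tau)_n$ I would invoke the eigenfunction expansion \pref{6.20} of Theorem \ref{t6.4} together with the determinantal formula \pref{6.21}: the $n$-fold action of $\X_{12}$ integrated against $\rho_0$ over $E$, read off at component $n$, reassembles precisely the $n$-particle integral over $E^n$, exactly as in the expansion carried out in the proof of Corollary \ref{c6.5}.

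For the constraints \pref{6.26} themselves I would apply the variational method underlying Theorem \ref{t5.1}. Starting from the $n$-fold integral defining $\tau^E_n$ and performing the infinitesimal change of variables $x_k\mapsto x_k+\epsilon\, x_k^{k+1}$ (and, separately, $y_k\mapsto y_k+\epsilon\, y_k^{k+1}$) produces three contributions: the Jacobian together with the Vandermonde factor builds the Virasoro generator $J^{(2)}_{k,n}$ (resp.\ $\tilde J^{(2)}_{k,n}$); the variation of the time-dependent weight $e^{\sum t_i x^i-\sum s_i y^i}$ feeds into those same operators; and, since the domain is $E$ rather than $\R$, the moving endpoints contribute the boundary term $-\sum a_i^{k+1}\partial/\partial a_i$ (resp.\ $-\sum b_i^{k+1}\partial/\partial b_i$). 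This is the coupled-matrix analogue of the constraints \pref{5.8} and \pref{5.15}, and the product shape $E=E_1\times E_2$ is exactly what makes the two boundary parts decouple into the two separate equations.

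The genuinely new feature, and what I expect to be the main obstacle, is the bilinear coupling term $c\,p_{k+n}(\tilde\partial_t)p_n(-\tilde\partial_s)\tau^E_1\circ\tau^E_{n-1}$. It originates from the single mixed coupling $c_{11}yz$ in $V_0$: under the $x$-reparametrization the factor $e^{cxy}$ contributes an extra insertion of $c\,y\,x^{k+1}$ into the integrand, coupling the two spectral variables in a way that cannot be absorbed into the pure $J^{(2)}_{k,n}$ piece. The task is to rewrite this insertion intrinsically. I would use the two-Toda wave-function relations \pref{6.8}--\pref{6.10} to trade the $y$-insertion for an $s$-shift together with a one-step index shift on the $\tau$-vector, and then invoke the bilinear identity \pref{6.15}--\pref{6.16} of Proposition \ref{p6.2} to repackage the outcome into the displayed Hirota form, the pairing $\tau^E_1\circ\tau^E_{n-1}$ reflecting the single raising/lowering step forced by the lone power $y^1z^1$. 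Matching the exact Schur-polynomial prefactors $p_{k+n}(\tilde\partial_t)$ and $p_n(-\tilde\partial_s)$ against the expansion in \pref{6.16} is the delicate bookkeeping; once the first equation is in hand, the symmetry $t\leftrightarrow s$, $x\leftrightarrow y$ (with the attendant sign) yields the second equation in \pref{6.26}.
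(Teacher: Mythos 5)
Your proposal splits naturally into two halves, which sit at different distances from the paper. For the representation $\tau^E_n=((\U_E)^n\tau)_n$ you follow essentially the paper's own route (Proposition \ref{6.8}): expand the $E$-restricted integral in the bi-orthogonal polynomials, recognize the Christoffel--Darboux-type kernel of Theorem \ref{t6.4}, and resum via the determinant formula \pref{6.21} exactly as in the proof of Corollary \ref{c6.5}. For the Virasoro skeleton of \pref{6.26} you take a genuinely different route: a direct variational argument on the restricted integral, in the style of Theorem \ref{t5.1}, with the boundary operators arising from the moving endpoints and the product shape $E=E_1\times E_2$ decoupling the two families. The paper instead works with the deformed integral \pref{6.35} carrying \emph{all} couplings $c_{ij}$, forms the operator $\V_k$ of \pref{6.27}, and uses the commutation $[\V_k,(\U_E)^n]=0$ of Theorem \ref{t6.7} to transfer the boundary-free constraint $\V_k\tau=0$ on the full-range $\tau$-vector to $\tau^E$ through the already-established $(\U_E)^n$-representation. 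The two mechanisms are equivalent in substance --- your variation $x_j\mapsto x_j+\eps x_j^{k+1}$ (note the particle index must differ from the Virasoro index $k$) produces precisely $\V_k$, including the insertion term $\sum_{i,j}ic_{ij}\,\part/\part c_{i+k,j}$ --- but the paper's formulation buys the Virasoro algebra structure of central charge $-2$ and reuses the vertex representation, while yours is more self-contained.

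The one place your sketch stops short is exactly where the paper isolates its ``remarkable identity'' \pref{6.28}. Two points. First, since you keep only $c_{11}=c$ fixed, your insertion $c\sum_j y_jx_j^{k+1}$ is not the derivative of any function in your setup; you need the extra couplings $c_{ij}$ as free deformation parameters (as in \pref{6.35}), so that the insertion becomes $c\,\part\tau^E_n/\part c_{k+1,1}$, evaluated afterwards on the locus $\cL$. Second, the conversion of that derivative into the Hirota term is a standalone lemma, not bookkeeping: from $\part\mu_{ij}/\part c_{\alpha\beta}=\mu_{i+\alpha,j+\beta}$, i.e.\ $\part m_\infty/\part c_{\alpha\beta}=\La^\alpha m_\infty\La^{\top\beta}$, one first gets the Lax-trace formula $\part\tau^E_n/\part c_{\alpha\beta}=\tau^E_n\sum^{n-1}_{i=0}(L^\alpha_1L^\beta_2)_{ii}$, and only then is this trace rewritten in the Hirota form $p_{\alpha+n-1}(\tilde\part_t)\,p_{\beta+n-1}(-\tilde\part_s)\,\tau^E_1\circ\tau^E_{n-1}$ using the $\tau$-function representations \pref{6.10} and the bilinear identity \pref{6.16} --- precisely the ingredients you name, but your ``trade the $y$-insertion for an $s$-shift plus a one-step index shift'' gestures at this without producing the intermediate Lax-trace step that makes it tractable. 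With $\alpha=k+1$, $\beta=1$ the identity delivers the coupling term of \pref{6.26} verbatim, and the $t\leftrightarrow s$, $a\leftrightarrow b$ symmetry gives the second equation, as you say. So your approach would succeed, but the step you defer as ``delicate bookkeeping'' is the actual mathematical content of Theorem \ref{t6.7}, which the paper (following \cite{15}) supplies as a separate result rather than as a routine manipulation.
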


\begin{remn}
Whenever some $a_i$ or $b_i=\infty$, we must interpret  $a^{k+1}_i \frac{\part}{\part a_i}$ or $b^{k+1}_i \frac{\part}{\part b_i}\equiv 0$; in particular, $\tau_n$ satisfies the same equations, but without the boundary terms.
\end{remn}

The above formula \pref{6.26} depends on many results; namely,
setting
\begin{equation}\lab{6.27}
\V_k := -b^{k+1}\frac{\part}{\part b}-a^{k+1} \frac{\part}{\part a} +\J^{(2)}_k +\sum_{i,j\geq 1} ic_{ij}\frac{\part}{\part c_{i+k,j}},\end{equation}

\begin{thm}\lab{t6.7}
For all $k\geq -1$ and $n\geq 1$,
\[
[\V_k,(\U_E)^n] = 0,\]
with the vector $\J^{(2)}_k$ forming a Virasoro algebra of central charge $c=-2$:
\[
\lrb{\J^{(2)}_k,\J^{(2)}_\ell} = (k-\ell) \J^{(2)}_{k+\ell}+(-2)\lrp{\frac{k^3-k}{12}} \delta_{k,-\ell}\]
and the remarkable identity
\begin{equation}\lab{6.28}
\frac{\part\tau^E_n}{\part c_{\alpha\beta}} = \tau^E_n \sum^{n-1}_{i=0} (L^\alpha_1 L^\beta_2)_{ii} = p_{\alpha+n-1} (\tilde \part_t) p_{\beta+n-1} (-\tilde\part_s) \tau^E_1 \circ \tau^E_{n-1}.\end{equation}
\end{thm}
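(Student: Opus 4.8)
The plan is to prove the three assertions of Theorem \ref{t6.7} as three largely independent modules, with the identity \pref{6.28} being the piece that feeds directly into the partial differential equations of Theorem \ref{t6.6}: specialising \pref{6.28} to $\alpha=k+1$, $\beta=1$ converts the measure-deformation term $\sum_{i,j\geq1}ic_{ij}\part_{c_{i+k,j}}$ of $\V_k$ (which, once $c_{ij}=0$ except $c_{11}=c$, collapses to $c\,\part_{c_{k+1,1}}$) into the Hirota term $c\,p_{k+n}(\tilde\part_t)p_n(-\tilde\part_s)\tau^E_1\circ\tau^E_{n-1}$ appearing in \pref{6.26}, while the commutation $[\V_k,(\U_E)^n]=0$ supplies the Virasoro and boundary parts.

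First I would dispose of the Virasoro bracket by a purely algebraic computation. By \pref{5.20} the components $J^{(1)}_{k,n}$ generate a Heisenberg algebra and $J^{(2)}_k=\sum_{i+j=k}{:}J^{(1)}_iJ^{(1)}_j{:}$, so $\tfrac12 J^{(2)}_k$ alone closes into a Virasoro algebra of central charge $1$ by the standard normal-ordering computation. The extra piece $\tfrac12(2n+k+1)J^{(1)}_k$ is a background-charge (improvement) term: matching its coefficient $\tfrac{k+1}{2}$ to a Feigin--Fuks stress tensor $\tfrac12{:}J^2{:}+\tfrac{Q}{2}\part J$ shifts the central charge away from the free-boson value $1$ to the stated value $-2$, and the remaining $n$-dependent terms $\tfrac12\,2n\,J^{(1)}_k+\tfrac12 n(n+1)J^{(0)}_k$ only conjugate $\J^{(2)}_{k,0}$ into $\J^{(2)}_{k,n}$ (a momentum shift) and hence leave both the bracket and the central term unchanged.

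Next, for $[\V_k,(\U_E)^n]=0$ it suffices to treat $n=1$, since once $\V_k$ commutes with $\U_E$ it commutes with every power. I would establish $[\V_k,\U_E]=0$ from the Virasoro covariance of the vertex operator $\X_{12}$ of \pref{6.19}: standard vertex calculus gives a commutator of the form $[\J^{(2)}_k,\X_{12}(x,y)]=\part_x\!\lrp{x^{k+1}\X_{12}(x,y)}$ (with the analogous $y$-derivative coming from the $s$-sector generator $\tilde\J^{(2)}_k$), the conformal weight of $\X_{12}$ being exactly what packages the action as this total derivative. Substituting this into $\U_E=\iint_E\X_{12}(x,y)\rho_0(x,y)\dx\dy$ and integrating by parts over the product $E=E_1\times E_2$ of \pref{6.23} splits the result into a boundary contribution reproducing the $\sum_i a_i^{k+1}\part_{a_i}$ (resp.\ $\sum_i b_i^{k+1}\part_{b_i}$) terms of $\V_k$ and a bulk contribution, from $\part_x$ hitting $\rho_0=e^{\sum c_{ij}x^iy^j}$, producing $\sum_{i,j}ic_{ij}x^{i+k}y^j$ inside the integral and so reproducing $\sum_{i,j\geq1}ic_{ij}\part_{c_{i+k,j}}$ (since $\part_{c_{\alpha\beta}}$ brings down $x^\alpha y^\beta$), with signs arranged so that together with $\J^{(2)}_k$ all three families cancel. \emph{This integration-by-parts matching is the main obstacle}: one must check that the weight factors emitted by the vertex commutator line up exactly with the $(k+1)$-shift built into $\J^{(2)}_k$ and that no anomalous corner terms survive on $\part E$; the two boundary/Virasoro sectors then furnish the two equations of \pref{6.26}.

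Finally I would prove the two equalities of \pref{6.28} separately. For the first, $\part_{c_{\alpha\beta}}\tau^E_n=\tau^E_n\sum_{i<n}(L_1^\alpha L_2^\beta)_{ii}$, Jacobi's formula gives $\part_{c_{\alpha\beta}}\log\det m_n=\Tr\lrp{m_n^{-1}\part_{c_{\alpha\beta}}m_n}$, while differentiating the weight shows $\part_{c_{\alpha\beta}}m_\infty=\La^\alpha m_\infty\La^{\top\beta}$. Inserting the Borel decomposition $m_\infty=S_1^{-1}S_2$ together with $\La^\alpha S_1^{-1}=S_1^{-1}L_1^\alpha$ and $S_2\La^{\top\beta}=L_2^\beta S_2$ (from \pref{6.4}--\pref{6.6}) turns $m_\infty^{-1}\part_{c_{\alpha\beta}}m_\infty$ into $S_2^{-1}L_1^\alpha L_2^\beta S_2$; restricting to the $n\times n$ block and using the triangularity of $S_1,S_2$ collapses the conjugation, because the spurious $j\geq n$ contributions drop by the support identity $\sum_{i<n}(S_2)_{li}(S_2^{-1})_{ij}=\delta_{lj}$ valid for $j<n$, leaving exactly $\sum_{i<n}(L_1^\alpha L_2^\beta)_{ii}$. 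For the second equality I would pass to the eigenfunction picture of Theorem \ref{t6.4}: since $L_1$ acts on $\Psi_1$ as multiplication by $y$ and $L_2$ on $\Psi^\ast_2$ as multiplication by the spectral parameter, the diagonal sum $\sum_{i<n}(L_1^\alpha L_2^\beta)_{ii}$ is the $y^{\alpha+n-1}z^{\beta+n-1}$-coefficient of the Christoffel--Darboux kernel $K_{12,n}(y,z)=\tau_n^{-1}\X_{12}(y,z)\tau_n=\sum_{j<n}\Psi^\ast_{2j}(z^{-1})\Psi_{1j}(y)$ of \pref{6.20}; expanding $\X_{12}$ in Schur polynomials and using the $\tau$-parametrisation \pref{6.10} rewrites this coefficient as $p_{\alpha+n-1}(\tilde\part_t)p_{\beta+n-1}(-\tilde\part_s)\tau^E_1\circ\tau^E_{n-1}$, which is the asserted right-hand side.
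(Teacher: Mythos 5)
The paper itself states Theorem \ref{t6.7} without proof (it is quoted from \cite{15}), so your proposal has to stand on its own; its first two modules do, and they follow essentially the route of the original argument. The central charge computation is correct: with $[J^{(1)}_k,J^{(1)}_l]=k\,\delta_{k,-l}$, the cross term $\frac{(k+1)(l+1)}{4}[J^{(1)}_k,J^{(1)}_l]$ contributes $-3\cdot\frac{k^3-k}{12}\delta_{k,-l}$ on top of the free-boson term $+\frac{k^3-k}{12}\delta_{k,-l}$, giving central charge $-2$, and the $n$-dependent pieces are exactly the zero-mode shift $J^{(1)}_0\mapsto J^{(1)}_0+n$, an automorphism of the Heisenberg algebra that changes neither bracket nor central term. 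The reduction of $[\V_k,(\U_E)^n]=0$ to $n=1$, followed by the primary-field commutator $[\J^{(2)}_k,\X_{12}(x,y)]=\part_x\lrp{x^{k+1}\X_{12}(x,y)}$ and integration by parts (boundary terms producing $\sum_i a_i^{k+1}\part_{a_i}$, the bulk term $\part_x\rho_0$ producing $\sum_{i,j\geq 1}ic_{ij}\part_{c_{i+k,j}}$), is the correct mechanism, and you rightly flag the index bookkeeping between the $n$-shifts in $\J^{(2)}_{k,n}$ and the $\La^{-1}\ol\chi$ factors of \pref{6.19} as the delicate point. Your proof of the first equality of \pref{6.28} (Jacobi's formula, $\part_{c_{\alpha\beta}}m_\infty=\La^\alpha m_\infty\La^{\top\beta}$, and the compatibility of the Borel factorization with $n\times n$ truncation, using the triangularity of $S_1,S_2$) is complete and correct.

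The genuine gap is the pivot of your last module: $\sum_{i<n}(L_1^\alpha L_2^\beta)_{ii}$ is \emph{not} the $y^{\alpha+n-1}z^{\beta+n-1}$-coefficient of the kernel $K_{12,n}$. Test $\alpha=\beta=0$: the diagonal sum is $n$, while the top coefficient of the kernel is $h_{n-1}^{-1}$, the polynomials $p^{(1)}_j,p^{(2)}_j$ being monic. The correct bridge is a \emph{moment}, not a coefficient: by the biorthogonality \pref{6.5} one has $(L_1^\alpha)_{ij}=h_j^{-1}\lra{x^\alpha p^{(1)}_i,p^{(2)}_j}_E$ and $(L_2^\beta)_{ji}=h_i^{-1}\lra{p^{(1)}_j,y^\beta p^{(2)}_i}_E$, and summing over $j$ reproduces the pairing, so that
\[
\sum_{i=0}^{n-1}(L_1^\alpha L_2^\beta)_{ii}=\iint_E x^\alpha y^\beta\,K_n(x,y)\,\rho_{t,s}(x,y)\dx\dy,\qquad K_n(x,y)=\sum_{i<n}h_i^{-1}p^{(1)}_i(x)\,p^{(2)}_i(y).
\]
With this corrected step your intended endgame does close: \pref{6.20} together with \pref{6.19} and \pref{6.10} gives $\tau_n K_n(x,y)=(xy)^{n-1}\tau_{n-1}(t-[x^{-1}],s+[y^{-1}])$, while expanding $p_{\alpha+n-1}(\tilde\part_t)\,p_{\beta+n-1}(-\tilde\part_s)\,\tau^E_1\circ\tau^E_{n-1}$ with $\tau^E_1$ written as its defining integral yields precisely the same weighted integral; the apparent mismatch between the truncated Schur sum $\sum_{b\leq\alpha+n-1}x^{-b}p_b(-\tilde\part_t)\tau_{n-1}$ and the full Miwa series disappears because $x^{n-1}\tau_{n-1}(t-[x^{-1}])=\tau_{n-1}\,p^{(1)}_{n-1}(x)$ is a polynomial of degree $n-1$ (and similarly on the $s$-side). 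So replace ``coefficient of the Christoffel--Darboux kernel'' by ``moment of the kernel against $\rho_{t,s}$'' and insert the biorthogonality step; as written, the coefficient-extraction step is false, and the second equality of \pref{6.28} does not follow from it.
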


To prove the last formula of \pref{6.25} first observe that
\[
\int_{\cU(n)} dU e^{c\Tr xU yU^\top} = \frac{(2\pi)^{\frac{n(n-1)}{2}} \det(e^{cx_i y_j})_{1\leq i,j\leq n}}{n! c\frac{n(n-1)}{2}\Delta (x)\Delta(y)}\]
which implies that for $E=E_1\times E_2\subset \R^2$, the following holds:
\begin{align}
\label{6.29}
&\iint_{\cH^2_{N,E}} e^{c \Tr(M_1M_2) } e^{\Tr \sum^\infty_1 (t_i M^i_1 - s_i M^i_2)} dM_1 \, dM_2\\
&\qquad {}= \iint_{E^N} \prod^N_{k=1} \lrp{ dx_k \dy_k e^{\sum^\infty_{i=1}(t_k x^i_k -s_i y^i_k)+cx_k y_k}}\Delta_N(x)\Delta_N(y).\nn\end{align}

We now can prove \cite{15}:
\begin{prop}\lab{6.8}
For $E=E_1\times E_2 \subset \R^{2}$, we have
\begin{equation}\lab{6.30}
\tau^E_n = \lrp{(\cU_E)^n \tau}_n.\end{equation} \end{prop}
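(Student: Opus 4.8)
The plan is to collapse the matrix integral that defines $\tau^E_n$ onto its joint eigenvalues and then to recognize the resulting $n$-fold integral as the $n$-th entry of $(\cU_E)^n\tau$, the determinantal structure being furnished by Theorem~\ref{t6.4}. First I would apply the Harish-Chandra--Itzykson--Zuber reduction \pref{6.29} to the numerator of \pref{6.25}: integrating out the two angular (unitary) factors turns the integral over $\cH^2_{n,E}$ into the eigenvalue integral $\iint_{E^n}\Delta_n(x)\Delta_n(y)\prod_{k=1}^n e^{V_{t,s}(x_k,y_k)}\,dx_k\,dy_k$, which by \pref{6.9} is precisely $\tau^E_n$.

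Next I would expose the two-Toda kernel inside this integrand. Expanding the two Vandermonde factors in the bi-orthogonal polynomials $p^{(1)}_j,p^{(2)}_j$ of \pref{6.4}--\pref{6.5} and absorbing the exponentials $e^{\sum t_i x^i}$, $e^{-\sum s_i y^i}$ into the wave functions, I would rewrite the integrand (relative to the reference weight $\rho_0$) as a product of two $n\times n$ determinants $\det\big(\Psi_{1j}(x_\alpha)\big)\det\big(\Psi^\ast_{2j}(y_\beta^{-1})\big)$. By multiplicativity of the determinant this product is the single kernel determinant $\det\big(K_{12,n}(x_\alpha,y_\beta)\big)_{1\le\alpha,\beta\le n}$, with $K_{12,n}$ the rank-$n$ kernel $\sum_{0\le j<n}\Psi^\ast_{2j}(z^{-1})\Psi_{1j}(y)$ of \pref{6.20}. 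The determinant identity \pref{6.21} of Theorem~\ref{t6.4} then replaces this by $\frac{1}{\tau_n}\big(\prod_{\ell=1}^n\X_{12}(x_\ell,y_\ell)\,\tau\big)_n$; integrating over $E^n$ against $\rho_0$ and invoking the definition \pref{6.24} of $\cU_E$ collapses the $n$ integrations into $\big((\cU_E)^n\tau\big)_n$, which is \pref{6.30}. Alternatively one may read the identity off the Fredholm formula of Corollary~\ref{c6.5}: since $K_{12,n}$ has rank $n$, the $n$-th component $\det(I-\la K^E)_n=\frac{1}{\tau_n}(e^{-\la\cU_E}\tau)_n$ is a polynomial of degree $n$ in $\la$ whose coefficient of $\la^n$ equals $(-1)^n$ times the probability that all $n$ eigenvalues lie in $E$, and also equals $\frac{(-1)^n}{n!\,\tau_n}\big((\cU_E)^n\tau\big)_n$; comparing the two recovers \pref{6.30}.

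The main difficulty is bookkeeping rather than structure. Because the vertex operators $\X_{12}(x_\ell,y_\ell)$ do not commute, I must check that the ordering generated by expanding $(\cU_E)^n$ coincides with the ordering under which \pref{6.21} is stated; this is exactly what Theorem~\ref{t6.4} supplies, and it is the step that legitimizes the otherwise symmetric Andr\'eief manipulation. The remaining work is to reconcile the normalizing constants---the leading coefficients $h_j^{-1}$ of the $p^{(2)}_j$, the factor $n!$ arising from the permutation sum in the Andr\'eief/Cauchy--Binet identity, and the factor $1/\tau_n$---using $\tau_n=\prod_{j<n}h_j$ from \pref{6.9} together with the reference normalization of \pref{6.25}, so that they combine to the clean form \pref{6.30}.
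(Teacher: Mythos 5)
Your proposal is correct and takes essentially the same route as the paper's proof: reduce to the eigenvalue integral \pref{6.9}, expand both Vandermondes in the bi-orthogonal polynomials \pref{6.4}--\pref{6.5}, absorb the exponentials into the wave functions \pref{6.7} so the integrand becomes $\det\lrp{K_{12,n}(x_k,y_\ell)}$, and then apply \pref{6.21} of Theorem~\ref{t6.4} together with the definition \pref{6.24} of $\cU_E$ to arrive at \pref{6.30}. Your alternative reading via Corollary~\ref{c6.5} is not an independent argument, since the corollary's proof rests on the very same kernel expansion, but it is a harmless consistency check.
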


\begin{proof} In what follows, we use the monic bi-orthogonal polynomials \pref{6.4} $p^{(1)}_i,p^{(2)}_j$, defined by $\rho_{t,s}(x,y)$ on $\R^2$; therefore the $h_i(t,s,c)$ are the $\R^2$ inner products.  We first compute, using \pref{6.9} for $E=\R^2$, and remembering the notation \pref{6.24} and formulae \pref{6.9} and \pref{6.7},
\begin{align*}
\frac{\tau^E}{\tau_n}
&= \lrp{\prod^{n-1}_0 h^{-1}_i}\iint_{E_n} \prod^n_{k=1} (\dx_k \dy_k \rho_{t,s}(x_k,y_k)) \Delta_n (x)\Delta_n(y)\\
&=\lrp{\prod^{n-1}_0 h^{-1}_i } \iint_{E_n} \prod^n_{k=1} (\dx_k \dy_k \rho_{t,s} (x_k,y_k))
\det(\rho^{(1)}_{i-1} (x_j))_{1\leq i,j\leq n} \det \lrp{\rho^{(2)}_{i-1}(y_j)}_{1\leq i,j\leq n}\\
&= \iint_{E^n} \prod^n_{k=1}(\dx_k\dy_k\rho_0(x_k,y_k)) \det\lrp{e^{\sum t_i x^i_k}
\sum^n_{i=1} \rho^{(1)}_{i-1} (x_k) h^{-1}_{i-1}\rho^{(2)}_{i-1}(y_\ell)e^{-\sum s_i y^i_\ell}}_{1\leq k,\ell\leq n}\\
& = \iint_{E^n} \sum^n_{k=1} (\dx_k\dy_k\rho_0(x_k,y_k)) \det\lrp{\sum_{0\leq i\leq n-1} \Psi_{1i} (x_k)\Psi^\ast_{2i} (y^{-1}_\ell)}_{1\leq k,\ell\leq n}\\
&=
\iint_{E^n} \prod^n_{k=1} (\rho_0 \dx_k\dy_k) \det(K_n(x_k,y_\ell))_{1\leq k,\ell\leq n}\\
&= \iint_{E^n}\prod^n_{k=1}(\rho_0\dx_k\dy_k) \lrp{\frac1\tau \prod^n_{k=1} \X_{12}(x_k,y_k)\tau}_n \hbox{ using \pref{6.21}}\\
&= \lrp{\frac1\tau\lrp{ \iint_E \X_{12}(x,y)\rho_0(x,y)\dx\dy}^n\tau }_n,
 \end{align*}
 establishing \pref{6.30}.
 \end{proof}
 
 To give the next result, we define differential operators $\cA_k$, $\cB_k$ of \emph{weight} $k$, in terms of the coupling constant $c$, and the boundary of the set
 \begin{equation}\lab{6.31}
 E =E_1\times E_2:= \bigcup^r_{i=1} [a_{2i-1},a_{2i}] \times \bigcup^s_{i=1} [b_{2i-1},b_{2i}]\subset \R^2;\end{equation}\begin{alignat*}{5}
 \cA_1&= \frac{1}{c^2 -1} \lrp{\sum^r_1 \frac{\part}{\part a_j}+c \sum^s_1\frac{\part}{\part b_j}},&
 \quad \cB_1 &=\frac{1}{1-c^2} \lrp{c\sum^r_1 \frac{\part}{\part a_j}+\sum^s_1 \frac{\part}{\part b_j}} ;&\\
 \cA_2&=\sum^r_{j=1} a_j \frac{\part}{\part a_j}-c\frac{\part}{\part c},& \qquad \cB_2 &= \sum^s_{j=1}b_j \frac{\part}{\part b_j}-c\frac{\part}{\part c};\end{alignat*}
they form a Lie algebra parametrized by $c$:
\begin{alignat}{7}
\label{6.32}
{[}\cA_1,\cB_1] &= 0, \quad &[\cA_1,\cA_2]&= \frac{1+c^2}{1-c^2}\cA_1, \quad &[\cA_2,\cB_1]&= \frac{2c}{1-c^2}\cA_1&\\ \nn
{[}\cA_2,\cB_2]&=0,\quad &[\cA_1,\cB_2]&=\frac{-2c}{1-c^2}\cB_1,\quad &[\cB_1,\cB_2]&= \frac{1+c^2}{1-c^2}\cB_1.&\end{alignat}

The following theorem deals with the joint distribution \pref{6.22}, with 
\begin{equation}\begin{split} \lab{6.33}
 V_{t,s}(M_1,M_2)& = -\frac12 (M^2_1+M^2_2)+cM,M_2 ,\\
P_n (E)&:= P\hbox{(all ($M_1$-eigenvalues) $\in E_1$, all ($M_2$-eigenvalues) }\in E_2),\end{split}\end{equation}
and leads to a formula in \cite{15}, which is the ``mirror image" of Theorem \ref{6.3}.

\begin{thm}[Gaussian probability]\lab{t6.9}
The statistics \pref{6.33} satisfy the $n$-independent nonlinear third-order partial differential equation\footnote{Using the following relation for non-commutative operators $X$ and $Y$,
\[
XY\log f=\frac{1}{f^2}(f XYf-Xf Yf).\]} 
$(F_n:= \frac1n \log P_n(E))$:
\begin{equation}\lab{6.34}
\lrc{ \cB_2\cA_1F_n,\;\cB_1\cA_1F_n+\frac{c}{c^2-1}}_{\cA_1 } - \lrc{ \cA_2 \cB_1 F_n,\; \cA_1\cB_1F_n+\frac{c}{c^2-1}}_{\cB_1}=0.\end{equation}\end{thm}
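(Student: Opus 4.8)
The plan is to read the $n$-independent equation \pref{6.34} as the image, under the change of variables from the times $(t,s)$ to the boundary data $(a,b,c)$, of the first integrable (two-Toda) identity of Theorem \ref{p6.3}. The essential point is that in the Gaussian case \pref{6.33} only the coupling $c_{11}=c$ is nonzero, and $\tau^E_n=((\cU_E)^n\tau)_n$ is itself a two-Toda $\tau$-function in $(t,s)$ (the vertex operator $\X_{12}$ preserves the hierarchy), so the Wronskian identity of Theorem \ref{p6.3} holds verbatim for $\log\tau^E_n$. Thus it suffices to express the five partials $\part_{t_1},\part_{s_1},\part^2_{t_1 s_1},\part^2_{t_1 s_2},\part^2_{s_1 t_2}$ of $\log\tau^E_n$, evaluated at $t=s=0$, in terms of the boundary operators $\cA_1,\cA_2,\cB_1,\cB_2$ acting on $F_n=\frac1n\log P_n$, and then substitute.

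First I would build the first-order dictionary. Writing the Virasoro constraints \pref{6.26} for $k=-1$ for both $\tau^E_n$ and $\tau_n$ (the latter, by the Remark after Theorem \ref{t6.6}, carrying no boundary terms) and subtracting, one obtains at $t=s=0$ two linear relations tying $\sum_i\part/\part a_i$ and $\sum_i\part/\part b_i$ of $\log P_n$ to $\part_{t_1},\part_{s_1}\log P_n$ and the coupling. The coupling term of \pref{6.26}, expanded through \pref{6.28} with $\alpha=\beta=1$, is exactly what couples the two rows, so the resulting $2\times 2$ system has determinant proportional to $c^2-1$; inverting it produces
\begin{align*}
\tfrac1n\,\part_{t_1}\log\tau^E_n\big|_{0} &= \cA_1 F_n + \text{const},\\
\tfrac1n\,\part_{s_1}\log\tau^E_n\big|_{0} &= \cB_1 F_n + \text{const},
\end{align*}
with the $\tfrac{1}{c^2-1}$ normalizations of \pref{6.32} appearing for precisely this reason; the opposite signs on the $s$-side reflect the $-\sum s_iz^i$ in \pref{6.1}.

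Next I would obtain the weight-two operators and the mixed second derivatives. The $k=0$ constraints \pref{6.26} involve the Euler operator $\sum_i it_i\part_{t_i}$ together with $c\,\part/\part c$ (again controlled via \pref{6.28}), and at $t=s=0$ they identify $\cA_2,\cB_2$ with the weight-two time operators. Composing constraints as in \pref{5.34}—acting with one constraint operator on another and using that the boundary operators commute with all $t,s$-derivatives—then yields
\begin{align*}
\tfrac1n\,\part^2_{t_1 s_1}\log\tau^E_n\big|_0 &= \cA_1\cB_1 F_n + \tfrac{c}{c^2-1},\\
\tfrac1n\,\part^2_{t_1 s_2}\log\tau^E_n\big|_0 &= \cB_2\cA_1 F_n,\qquad
\tfrac1n\,\part^2_{s_1 t_2}\log\tau^E_n\big|_0 = \cA_2\cB_1 F_n,
\end{align*}
the additive constant $\tfrac{c}{c^2-1}$ being the residue of the coupling term after the inversion; here $[\cA_1,\cB_1]=0$ from \pref{6.32} lets $\cA_1\cB_1 F_n$ be rewritten as $\cB_1\cA_1 F_n$. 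Substituting all five expressions into the identity of Theorem \ref{p6.3} turns $\lrc{\cdot,\cdot}_{t_1}$ into $\lrc{\cdot,\cdot}_{\cA_1}$ and $\lrc{\cdot,\cdot}_{s_1}$ into $\lrc{\cdot,\cdot}_{\cB_1}$ (the relative sign flip coming from the sign conventions above), which is exactly \pref{6.34}.

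The main obstacle is the bookkeeping of the inhomogeneous terms and the proof of $n$-independence. The generators $J^{(2)}_{k,n}$ and $\tilde J^{(2)}_{k,n}$ carry explicit $n$-dependence through their $(2n+k+1)J^{(1)}_k$ and $n(n+1)J^{(0)}_k$ pieces, and one must check that, after the division by $n$ in $F_n$ and the combination dictated by Theorem \ref{p6.3}, every such term cancels, leaving only the clean constant $\tfrac{c}{c^2-1}$. Equally delicate is verifying that the genuinely two-Toda coupling terms $p_{k+n}(\tilde\part_t)p_n(-\tilde\part_s)\tau^E_1\circ\tau^E_{n-1}$ in \pref{6.26}—which are what force $\cA_1,\cB_1$ to mix $a$- and $b$-derivatives—collapse, through \pref{6.28}, into first-order boundary operators with no leftover bilinear remainder. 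These two cancellations are the heart of the computation; once they are in hand, the substitution into Theorem \ref{p6.3} is purely formal.
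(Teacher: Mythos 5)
Your dictionary between the $t,s$-partials on the locus $\cL$ and the boundary operators is the right one (it matches \pref{6.36}--\pref{6.38}, including the $nc/(c^2-1)$ shift in the mixed $\part^2_{t_1\part s_1}$ entry), but the way you feed it into Theorem \ref{p6.3} has a genuine gap. You substitute into the \emph{first} identity of Theorem \ref{p6.3}, the Wronskian relation in $\lrc{\cdot,\cdot}_{t_1}$ and $\lrc{\cdot,\cdot}_{s_1}$. That identity is \emph{third} order in the times: the outer $\part_{t_1}$ and $\part_{s_1}$ act on the mixed second partials, so knowing the five first- and second-order entries of the dictionary is not enough. Worse, the dictionary holds only on $\cL=\{t=s=0,\,c_{11}=c,\,c_{ij}=0\}$, and while the boundary operators $\cA_i,\cB_i$ commute with restriction to $\cL$, time-derivatives do not: differentiating a locus-restricted identity in $t_1$ requires the full off-locus Virasoro constraints, whose $t$-dependent coefficients (the $\sum it_i\part_{t_{i+k}}$ pieces and the bilinear coupling term) generate correction terms that do not visibly vanish. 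Your claim that substitution ``turns $\lrc{\cdot,\cdot}_{t_1}$ into $\lrc{\cdot,\cdot}_{\cA_1}$'' is precisely the unproved step; you flag the bookkeeping as ``the heart of the computation'' but offer no mechanism for the cancellations, so the proposal stops short of a proof.

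The paper's proof routes around this obstruction by using the \emph{second} display of Theorem \ref{p6.3}, namely the ratio identities \pref{6.18}, which are only first order in $t_1,s_1$ on the left and second order on the right. Restricting those to $\cL$ and applying the dictionary gives \pref{6.39}: $-\cA_1\log(\tau_{n+1}/\tau_{n-1})$ and $-\cB_1\log(\tau_{n+1}/\tau_{n-1})$ equal to ratios of boundary-operator expressions in $\log\tau_n$. All subsequent differentiation is then done with $\cA_1$ and $\cB_1$ \emph{after} restriction --- legitimate, since these are pure $a,b$-derivatives --- and $[\cA_1,\cB_1]=0$ makes the left-hand sides cancel, eliminating $\log(\tau_{n+1}/\tau_{n-1})$ entirely and yielding the Wronskian identity \pref{6.40} for $G_n=\frac1n\log\tau_n$ (dividing numerator and denominator of \pref{6.39} by $n$ converts $nc/(c^2-1)$ to $c/(c^2-1)$ and gives the $n$-independence automatically, rather than requiring the term-by-term cancellation you anticipate). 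Finally, since every inner expression in \pref{6.40} has $\cA_1$ or $\cB_1$ acting first and $\cA_1\tau_n(\R^2)=\cB_1\tau_n(\R^2)=0$, one may replace $\log\tau_n$ by $\log P_n$, giving \pref{6.34}. Note also that your proposal never invokes \pref{6.18} or the neighboring $\tau_{n\pm1}$ at all --- but these are exactly what make the argument work, since the first identity of Theorem \ref{p6.3} is itself the $t_1,s_1$ cross-derivative of \pref{6.18}, and the whole point is to perform that cross-differentiation in the boundary variables instead of the time variables.
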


\begin{rem}
Since the equation above for the joint statistics is independent of the size $n$, the same joint statistics for infinite coupled ensembles should presumably be given by the same partial differential equation.
\end{rem}
\begin{rem} For $E=E_1\times E_2:= (-\infty,a]\times (-\infty,b]$, equation \pref{6.34} takes on the following form: Upon introducing the new variables $x:= -a+cb$, $y:= -ac+b$, the differential operators $\cA_1$ and $\cB_1$ take on the simple form $\cA_1=\part/\part x$, $\cB_1 = \part/\part y$ and \pref{6.34} becomes
\[
\frac{\part}{\part x}\lrp{ \frac{(c^2-1)^2 \frac{\part^2 F_n}{\part x\part c}+2cx+(1+c^2)y}{(c^2-1)\frac{\part^2 F_n}{\part x\part y}+c}}=\frac{\part}{\part y}
\lrp{\frac{(c^2-1)^2 \frac{\part^2 F_n}{\part y\part c}+ 2cy-(1+c^2)x}{(c^2-1)\frac{\part^2 F_n}{\part y\part x}+c}}.\]
We sketch the proof  of \pref{6.34}.

\begin{proof} From \pref{6.31}, it clearly follows that
\[
P_n(E)=\left.\frac{\tau^E_n(t,s,c_{ij})}{\tau^{\R^2}_n (t,s,c_{ij})}\right|_{\cL},\]
where $\tau^E_n$ is an integral over $E^n\subset \R^{2n}$, i.e., $(x,y)\in E^n_1 \times E^n_2=E^n$,
\begin{align}
\label{6.35}
\tau^E_n(t,s,c_{ij})&=\iint_{E^n}\dx\dy \Delta_n(x)\Delta_n(y)\\
&\quad{}\cdot \prod^n_{k=1} e^{-\frac12 (x^2_k+y^2_k-2cx_k y_k)+\sum^\infty_{i=1}(t_i x^i_k -s_i y^i_k)+\sum_{{i,j\geq 1\atop (i,j)\ne (1,1)}} c_{ij} x^i_k y^j_k},\nn\end{align}
and where $\cL$ denotes the locus 
\[
\cL=\{t_i=s_i=0,\; c_{11}=c\hensp{and all other} c_{ij}=0\}.\]
We need to write down the Virasoro constraints \pref{6.26}:
\begin{alignat}{5}\label{6.36}
 \frac{\part}{\part t_1}\log \tau_n\Big|_{\cL}&= \cA_1 \log\tau_n\big|_{\cL} ,&\quad  
\frac{\part}{\part s_1 }\log \tau_n\Big|_{\cL}&= \cB_1 \log\tau_n\big|_{\cL} &\\
\nn
 \frac{\part}{\part t_2}\log \tau_n\Big|_{\cL}&= -\cA_2 \log \tau_n\big|_{\cL}+ \frac{n(n+1)}{n},&\quad 
 \frac{\part}{\part s_2} \log\tau_n\Big|_{\cL}&= \cB_2\log \tau_n\big|_{\cL}-
\frac{n(n+1)}{2},\\
\label{6.37}
 \frac{\part^2}{\part t_1\part s_1}\log\tau_n\Big|_{\cL} &= \cB_1\cA_1\log\tau_n+\frac{nc}{c^2-1}&&\\
\label{6.38}
 \frac{\part^2}{\part t_1 \part s_2}\log\tau_n\Big|_{\cL}&= \cB_2\cA_1\log\tau n.&&\end{alignat}
 Setting \pref{6.36}, \pref{6.37}, \pref{6.38} into the formula of Propostion \ref{p6.3} one is led to an expression for $\cB_1 \log \frac{\tau_{n+1}}{\tau_{n-1}}$ and a dual expression for $\cA_1 \log\frac{\tau_{n+1}}{\tau_{n-1}}$:
 \begin{align}
 \label{6.39}
 - \cA_1\log\frac{\tau_{n+1}}{\tau_{n-1}}& = \frac{\cA_2 \cB_1\log\tau_n}{\cA_1 \cB_2\log\tau_n+\frac{nc}{c^2-1}}\\
 \nn
 -\cB_1\log\frac{\tau_{n+1}}{\tau_{n-1}} &= \frac{\cB_2 \cA_1 \log\tau_n}{ \cB_2 \cA_1 \log\tau_n+\frac{nc}{c^2-1}}.\end{align}
 
 Upon taking $\cA_1$ of the second expression, subtracting from it $\cB_1$ of the first one and using $[\cA_1,\cB_1]=0$, one finds the following identity:
 \[
 \cA_1 \frac{\cB_2\cA_1\log\tau_n}{\cB_1\cA_1\log\tau_n +\frac{nc}{c^2-1}} - \cB_1 \frac{\cA_2\cB_1\log\tau_n}{\cA_1\cB_1\log\tau_n +\frac{nc}{c^2-1}} =0.\]
 This difference amounts to the equality of two Wronskians $\big(G_n:=\frac12 \log\tau_n\big)$:
 \begin{equation}\lab{6.40}
 \lrc{\cB_2\cA_1 G_n,\; \cB_1\cA_1G_n+\frac{c}{c^2-1}}_{\cA_1} = \lrc{\cA_2 \cB_1G_n,\; \cA_1\cB_1 G_n+\frac{c}{c^2-1}}_{\cB_1}.\end{equation}
 Because of the fact that
 \[
 \log P_n(E)=\log (\tau_n(E)/\tau_n(\R^2)) = \log\tau_n (E)-\log\tau_n(\R^2),\]
 together with the fact that $\cA_1\tau_n(\R^2)= \cB_1\tau_n(\R^2)=0$, we have that $F_n(E):= \frac{1}{p} \log P_n(E)$ satisfies \pref{6.40} as well, thus leading to \pref{6.34}.
 \end{proof}
\end{rem}
 \providecommand{\bysame}{\leavevmode\hbox to3em{\hrulefill}\thinspace}
\providecommand{\MR}{\relax\ifhmode\unskip\space\fi MR }
\providecommand{\MRhref}[2]{%
  \href{http://www.ams.org/mathscinet-getitem?mr=#1}{#2}
}
\providecommand{\href}[2]{#2}

\end{document}